\newtheorem{definition}{Definition}
\newtheorem{theorem}{Theorem}
\newtheorem{remark}{Remark}
\newtheorem{proposition}{Proposition}
\newtheorem{lemma}{Lemma}
\newcommand\EE{\mathbb{E}}
\newcommand\RR{\mathbb{R}}
\newcommand\TT{\mathbb{T}}
\newcommand{\PP}{\mathbb{P}}
\newcommand{\ba}{\begin{array}}
\newcommand{\ea}{\end{array}}
\newcommand{\br}{\mathbb{R}}
\newcommand{\bp}{\mathbb{P}}
\newcommand{\FCal}{\mathcal{F}}
\newcommand{\PCal}{\mathcal{P}}
\newcommand{\WCal}{\mathcal{W}}
\newcommand{\XCal}{\mathcal{X}}
\newcommand{\ZCal}{\mathcal{Z}}
\newcommand{\argmin}{\mathop{\rm argmin}}
\newcommand{\argmax}{\mathop{\rm argmax}}
\newcommand{\bpm}{\begin{pmatrix}}
\newcommand{\epm}{\end{pmatrix}}
\renewcommand{\div}{\textnormal{div}\,}
\title{Connecting GANs, mean-field games, and optimal transport}
\author{Haoyang Cao\thanks{Alan Turing Institute, London, United Kingdom. 
 Email: hcao@turing.ac.uk}
 \and Xin Guo\thanks{Department of Industrial Engineering and Operations Research, University of California, Berkeley, Berkeley, California, USA. Email: xinguo@berkeley.edu} \and Mathieu Lauri\`ere
 \thanks{Department of Operations Research and Financial Engineering, Princeton University, Princeton, New Jersey, USA.
 Email: lauriere@princeton.edu}}
\date{\today}
\begin{document}
\maketitle

\begin{abstract}
Generative adversarial networks (GANs) have enjoyed tremendous success in image generation and processing, and have recently attracted growing interests in financial modelings. This paper analyzes GANs from the perspectives of mean-field games (MFGs) and optimal transport. More specifically, from the game theoretical perspective, GANs are interpreted as MFGs under Pareto Optimality criterion or mean-field controls; from the optimal transport perspective, GANs are to minimize the optimal transport cost indexed by the generator from the known latent distribution to the unknown true distribution of data. 
The MFGs perspective of GANs leads to a GAN-based computational method (MFGANs) to solve MFGs: one neural network for the backward Hamilton-Jacobi-Bellman equation and one neural network for the forward Fokker-Planck equation, with the two neural networks trained in an adversarial way. Numerical experiments demonstrate superior performance of this proposed algorithm, especially in the higher dimensional case, when compared with existing neural network approaches.
\end{abstract}%



\maketitle

%

\section{Introduction}

\paragraph{GANs.}
Generative Adversarial Networks (GANs), introduced in \cite{Goodfellow2014}, belong to the class of generative models. 
The key idea behind GANs is to interpret the process of generative modeling as a competing game between two neural networks: a generator network $G$ and a discriminator network $D$. The generator network $G$ attempts to fool the discriminator network by converting random noise into sample data, while the discriminator network $D$ tries to identify whether the sample is faked or true.

 After being introduced to the machine learning community, the popularity of GANs has grown exponentially with numerous applications. Some of the popular applications include high resolution image generation \cite{denton2015deep,radford2015unsupervised}, image inpainting \cite{yeh2016semantic}, image super-resolution \cite{ledig2016others}, visual manipulation \cite{zhu2016generative}, text-to-image synthesis \cite{reed2016generative}, video generation \cite{vondrick2016generating}, semantic segmentation \cite{luc2016semantic}, and abstract reasoning diagram generation \cite{ghosh2016contextual}. Recently GANs have been used for simulating financial time-series data \cite{Wiese2019}, \cite{Wiese2020}, \cite{Zhang2019}, and for asset pricing models \cite{Chen2019}.

Along with the empirical success of GANs, there is a growing emphasis on the theoretical analysis of GANs. \cite{Berard2020} proposes a novel visualization method for the GANs training process through the gradient vector field of loss functions.
In a deterministic GANs training framework, \cite{Mescheder2018} demonstrates that regularization improved the convergence performance of GANs; \cite{Conforti2020} and \cite{Domingo-Enrich2020} analyze a generic zero-sum minimax game including that of GANs, and connect the mixed Nash equilibrium of the game with the invariant measure of Langevin dynamics. Recently, \cite{cao2020approximation} analyzes convergence of GANs training process by studying the long-term behavior of its continuous time limit, via the invariant measure of associated coupled stochastic differential equations.

\paragraph{MFGs.} The theory of Mean-field games (MFGs), pioneered by Lasry and Lions (2007) and Huang, Malham\'e and Caines (2006), presents a powerful approach to study  stochastic games of a large population with small interactions.  MFG avoids directly analyzing the otherwise-notoriously-difficult $N$-player stochastic games.  The key idea, coming from physics for very large systems of interacting particles, is to approximate the dynamics and the objective function under the notion of population's probability distribution flows, a.k.a., the mean-field information process.  
By assuming players are indistinguishable and interchangeable, and by the aggregation approach and the strong law of large numbers, MFGs focus on a representative player and the mean-field information. The value function of MFGs is then shown to approximate that of the corresponding $N$-player games with an error of order $\frac{1}{\sqrt{N}}$; see for instance  \cite{Carmonaa, Carmona2018}, \cite{Guo2019}, and the references within.

One of the approaches to find the Nash equilibrium in a MFG is the fixed-point approach, characterized by  a system of coupled partial differential equations (PDEs): a backward Hamilton-Jacobi-Bellman (HJB) equation for the value function of the underlying control problem, and a forward Fokker-Planck (FP) equation for the dynamics of the population (see \cite{Lasry2007},  \cite{Bensoussan2013}, \cite{Carmonaa}, and the references therein).  In addition to the PDE approach (see also  Gu\'eant, Lasry, and Lions (2010)), there are approaches based on backward stochastic differential equations (BSDEs) by Buckdahn, Djehiche, Li, and Peng (2009) and Buckdahn, Li, and Peng (2009), the probabilistic approach by Carmona and Delarue (2013, 2014) and Carmona and Lacker (2015).

\paragraph{Optimal transport.} The theory of optimal transport, originated from Monge \cite{monge1781memoire}, 
studies the optimization problem of transporting one given initial distribution to another given terminal distribution so that the transport cost functional is minimized. Deeply rooted in linear programming, many theoretical works on the existence and uniqueness of  an optimal transport plan focus on the duality gap between the primal optimization problem and its dual form. For instance, the Kantorovich-Rubinstein duality in \cite{villani2008optimal} characterizes sufficient conditions for the existence of optimal transport plans (i.e., when there is no duality gap), and \cite{Griessler2018} studies the multi-marginal case.

Martingale optimal transport problem is motivated mostly by problems from finance, starting  with the problem of super-hedging, see for instance \cite{Dolinsky2014},
  \cite{Lim2016}, and \cite{Nutz2019a}. \cite{Beiglbock2017a} establishes a complete duality theory for generic martingale optimal transport problems via a quasi-sure formulation of the dual problem;
and \cite{Guo2017} studies a continuous-time martingale optimal transport problem and establishes the duality theory via the S-topology and the dynamic programming approach.

To compute the optimal transport plan, \cite{Guo2017a} proposes a computational method for martingale optimal transport problems based on linear programming via proper relaxation of the martingale constraint and discretization of the marginal distributions; \cite{Eckstein2018} proposes a deep learning algorithm to solve multi-step, multi-marginal optimal transport problem via its dual form with an appropriate penalty term.

\paragraph{Our work.} 
So far, theories of GANs, MFGs, and optimal transport  have been developed independently. 
We will show in this work that they are tightly connected. In particular, GANs can be understood and analyzed from the perspective of  MFGs and optimal transport. More precisely, 
\begin{itemize}
\item We first show a conceptual connection between GANs and MFGs: MFGs have the structure of GANs, and GANs are MFGs under the Pareto Optimality criterion. This intrinsic connection is transparent for a class of MFGs for which there is a minimax game representation. 
\item We then establish rigorously an analytical connection between a broad class of GANs and optimal transport problems from the latent distribution to the true distribution. 
This representation is explicit for Wasserstein GANs as well as some its variations including relaxed Wasserstein GANs. 
At the core of this connection is the Kantorovich-Rubinstein type duality.

\item We finally propose a GANs-based algorithm (MFGANs) to solve MFGs. This is derived from connecting the variational structure embedded in the PDE system for  MFGs with the minimax game structure for GANs. 
This connection suggests a new neural network based  approach to solve MFGs:
one neural network for the backward HJB equation and the other for the forward FP equation, with the two neural networks trained in an adversarial way.
Our numerical experiments demonstrate the performance of this proposed algorithm when compared with existing neural-network-based approaches, especially in higher dimensional case. 
This idea of developing neural network-based algorithm with incorporation of adversarial training appears promising for more general dynamic systems with variational structures. 
\end{itemize}

\paragraph{Related ML techniques for computing MFGs.}
Most existing computational approaches for solving MFGs adopt traditional numerical schemes, such as finite differences~\cite{MR2679575} or semi-Lagrangian~\cite{MR3148086} schemes. Some exceptions are \cite{CarmonaLauriere_DL,CarmonaLauriere_DL_periodic} and \cite{guo2019learning}. \cite{guo2019learning}
 designs reinforcement learning algorithms with convergence and complexity analysis for learning MFGs, where the cost function of the game as well as the parameters for the underlying dynamics are unknown. \cite{al2018solving,CarmonaLauriere_DL,CarmonaLauriere_DL_periodic} propose deep neural networks approaches for solving MFGs, with a particular Deep-Galerkin-Method architecture, to approximate the density and the value function by neural networks separately. 
 \cite{lin2020apac} proposes a GANs-based algorithm named APAC-Net via a primal-dual formulation associated with the coupled HJB-FP system by \cite{Cirant2018}; their numerical experiments demonstrate  their capability of solving some special forms of high-dimensional MFGs. In contrast, our algorithm takes full advantage of the variational structure of MFGs and train the neural networks in an adversarial fashion. Our numerical experiment demonstrates clear advantage of this variational approach, especially in terms of computational efficiency for high dimensional MFGs.  

\paragraph{Related works on connecting GANs and optimal transport.}

There are earlier studies exploring connections between GANs and optimal transport. For instance, \cite{Arjovsky2017} points out that the Wasserstein distance between true distribution $\bp_r$ and generated distribution $\bp_\theta$ can be seen as the optimal transport cost from $\bp_r$ to $\bp_\theta$. \cite{lei2019geometric} provides a geometric  interpretation of Wasserstein GANs. This connection has also been exploited  to improve the stability and performance of GANs training
in \cite{gulrajani2017improved},  \cite{sanjabi2018convergence},   and \cite{chu2019}. In addition, 
\cite{salimans2018improving} defines a novel optimal transport-type divergence to replace the Jensen-Shannon divergence for GANs training. The work \cite{luise2020generalization} studies the generalization property of GANs if the generator is jointly trained with latent distribution, where optimal transport metric is used for the discriminator. An unbalanced optimal transport problem is solved using GANs in \cite{yang2018scalable}. 
Our work establishes this connection analytically and in a more general framework, to allow for a broader class of generative models without the symmetry condition on the loss function.

{\paragraph{Organization.} This paper is organized as follows. Section \ref{sec:prelim} focuses on the basic mathematics and preliminaries for GANs, MFGs and optimal transport problem. Section \ref{sec:gan2mfg} demonstrates that GANs are MFGs in a collaborative setting. The connection between GANs and optimal transport is established in Section \ref{sec:gan-ot}. Finally,  Section \ref{sec:num} explains the analogy between the MFGs and GANs and provides a GAN-based algorithm with numerical experiments. }

\paragraph{Notations.}
Throughout this paper, the following notations will be adopted, unless otherwise specified.
\begin{itemize}
 \item $\XCal$ denotes a Polish space with metric $l$ and $\ZCal$ denotes another Polish space with metric $l_z$. 
 \item $\mathcal{P}(\XCal)$ denotes the set of all probability distributions over the space $\XCal$. 
 \item $\PCal_D(\br^d)$ denotes the set of probability distributions on $\br^d$ that admit corresponding density functions with respect to Lebesgue distribution.
 \item For $p>0$,
 $\PCal^p(\br^d)=\biggl\{\mu\in\PCal_D(\br^d)\biggl|\int_{\br^d}\|x\|_p^p\mu(dx)<\infty\biggl\},$
 with $\|\cdot\|_p$ the $p$-norm on $\br^d$.
 \item For $p\geq1$ and some arbitrary $x_0\in\XCal$, 
 $L^p(\XCal): =\biggl\{\mu\in\PCal(\XCal)\biggl|\int_\XCal l(x,x_0)^p\mu(dx)<\infty\biggl\}$. Note that this set is independent of the choice of $x_0$.
 \item For any given $\mu\in\PCal(\XCal)$, 
 $L^1(\XCal,\mu): =\biggl\{\psi:\XCal\to\br\biggl|\int_\XCal|\psi(x)|\mu(dx)<\infty\biggl\}.$
\end{itemize}

\section{Preliminaries of GANs, MFGs, and Optimal transport} \label{sec:prelim}

\subsection{GANs} GANs fall into the category of generative models.
The procedure of generative modeling is to approximate an unknown probability distribution $\bp_r$ by constructing a class of suitable parametrized probability distributions $\bp_\theta$. That is, given a latent space $\ZCal$ and a sample space $\XCal$, define a latent variable $Z$ taking values in $\ZCal$ with a fixed probability distribution $\PP_z$ 
and a family of functions $G_\theta: \ZCal\to\XCal$ parametrized by $\theta$. Then $\bp_\theta$ is defined as the probability distribution of $G_\theta(Z)$, denoted by $Law (G_\theta(Z))$. 

The key idea of GANs as generative models is to  introduce a competing neural network, namely a discriminator $D = D_\omega: \XCal \to [0,1]$, parametrized by $\omega$. This discriminator assigns a score between $0$ to $1$ to each sample generated either from the true distribution ${\mathbb P}_r$ or the approximate distribution ${\mathbb P}_{\theta}$. A higher score from the discriminator $D$ indicates that the sample is more likely to be from the true distribution. GANs are trained by optimizing $G$ and $D$ iteratively until $D$ is very good at identifying which samples come from $\PP_r$ but the generator is so good that it can fool $D$ to make it believe that the samples from $\PP_\theta$ come from $\PP_r$. 


\paragraph{GANs as minimax games.}
As explained in the paper \cite{Goodfellow2014}, GANs can be formally expressed as the following minimax game:
\begin{equation} \label{gan-obj}
\begin{aligned}
	\inf_\theta \sup_\omega& \Big\{\EE_{X \sim \bp_r} [\log D_\omega(X)]+ \EE_{Z \sim \bp_z} [\log (1 - D_\omega(G_\theta(Z)))]\Big\}.
	\end{aligned}
\end{equation}
Now,  fixing $G$ and optimizing for $D$ in \eqref{gan-obj}, the optimal discriminator would be 
\[D^*_G(x) = \frac{p_r(x)}{p_r(x) + p_\theta(x)},\]
where $p_r$ and $p_\theta$ are density functions of $\bp_r$ and $\bp_\theta = Law(G(Z))$ respectively, assuming they exist. Plugging this back in~\eqref{gan-obj}, the  GANs minimax game becomes \begin{align*}
&
\begin{aligned}
 \min_G &\left\{\EE_{X \sim \bp_r}\left[\log \frac{p_r(X)}{p_r(X) + p_\theta(X)}\right] + \EE_{Y \sim \bp_\theta} \left[\log \frac{p_\theta(Y)}{p_r(Y) + p_\theta(Y)}\right]\right\} 
\end{aligned}
	\\
	& \hspace{10pt}= -\log 4 + 2 JS(\bp_r, \bp_\theta). \label{gan-obj2}
\end{align*}
That is, training of GANs with an optimal discriminator is equivalent to minimizing the Jensen--Shannon (JS) divergence between $\bp_r$ and $\bp_\theta$. 

Viewing GANs training as an optimization problem between $\bp_r$ and $\bp_\theta$ has led to variants of GANs with different choices of divergences, in order to improve the training  performance: for instance, \cite{nock2017f} uses f-divergence, \cite{srivastava2019bregmn} explores scaled Bregman divergence, \cite{Arjovsky2017} adopts Wasserstein-1 distance, \cite{guo2017relaxed} proposes relaxed Wasserstein divergence, and \cite{salimans2018improving} and \cite{sanjabi2018convergence} utilize the Sinkhorn loss. This viewpoint is instrumental 
to establish the  connection between GANs and optimal transport, studied in Section \ref{sec:gan-ot}.

\paragraph{Equilibrium of GANs training.}
Under a fixed network architecture, the parametrized version of GANs training is to find
\begin{equation} \label{gan-obj-param}
\begin{aligned}
	v_U^{GAN}&=\inf_\theta \sup_\omega L_{GAN}(\theta,\omega),\\
\text{where }L_{GAN}(\theta,\omega)&=\EE_{X \sim \bp_r} [\log D_\omega(X)]+ \EE_{Z \sim \bp_z} [\log (1 - D_\omega(G_\theta(Z)))].
	\end{aligned}
\end{equation}
From a game theory viewpoint, the objective in \eqref{gan-obj-param} is in fact the upper value of the two-player zero-sum game of GANs. Meanwhile, the lower value of the game is given by the following maximin problem,
\begin{equation} \label{gan-obj-maximin}
\begin{aligned}
	v_L^{GAN}=\sup_\omega\inf_\theta L_{GAN}(\theta,\omega).
	\end{aligned}
\end{equation}

Clearly, 
 $v_L^{GAN}\leq v_U^{GAN}.$
Moreover, if there exists a pair of parameters $(\theta^*,\omega^*)$ achieving both \eqref{gan-obj-param} and \eqref{gan-obj-maximin}, then $(\theta^*,\omega^*)$ is a Nash equilibrium of this two-player zero-sum game. As pointed out by Sion's theorem (see \cite{von1959theory} and \cite{sion1958general}), if $L_{GAN}$ is convex in $\theta$ and concave in $\omega$, then there is no duality gap and  $v_L^{GAN}= v_U^{GAN}$. 
It is worth noting that conditions for such an equality  are usually not satisfied in many common GAN models, as stressed out in \cite{zhu2020deconstructing}.




\subsection{MFGs}
\paragraph{GANs and coupled PDE Systems.}
 MFGs are introduced to approximate Nash equilibria in $N$-player stochastic games by  considering a game with infinitely many interchangeable agents. The key idea of MFGs, thanks to the law of large numbers, is to study the interaction between a representative player and the aggregated information of all the opponents instead of focusing on each one of them individually. This aggregated information is often referred to as the mean-field information. Take for instance, a filtered probability space
 $(\Omega,\FCal, \{\FCal_t\}_{t\geq0},\PP)$   which  supports a standard $d$-dimensional Brownian motion $W=\{W_t\}_{t\geq0}$. A class of continuous-time MFGs is to find for the representative player an optimal control $\{\alpha_t\}_{t\geq0}$ with $\alpha_t\in\br^d$ for all $t\geq0$ from an appropriate admissible control set, and to solve for the  following minimization problem: For any $s\in[0,T]$ and $x\in\RR^d$, find
\begin{equation}\label{eq:mfg-eg}\tag{MFG}
 u(s,x)=u(s,x;\{\mu_t\}_{t\in[0,T]})=\inf_{\{\alpha_t\}_{t\geq0}} E\left[\int_s^T f(t,X_t,{\mu_t},\alpha_t)dt \left\vert X_s=x\right.\right],
 \end{equation}
\ \ \ \ \ subject to the state dynamics
\begin{equation*}\label{eq: state-evol}
\begin{aligned}
&dX_t = b(t,X_t,{\mu_t},\alpha_t)dt+\sigma dW_t,\,X_0\sim\mu^0,\\
& \mu^0(dx)=m^0(x)dx.
 \end{aligned}
\end{equation*}
Here, $f:[0,\infty)\times\br^d\times\PCal_D(\br^d)\times\br^d \to \br$ represents the running cost. $\{\mu_t\}_{t\geq0}$ is a flow of probability measures characterizing the mean-field information, with initial distribution $\mu_0=\mu^0\in\PCal^2(\br^d)$ and $\mu^0\perp\sigma(W_t,t\geq0)$. In the state dynamics, $\sigma>0$ is a constant diffusion coefficient; the drift term 
$b:[0,\infty)\times\br^d\times\PCal_D(\br^d)\times\br^d\to\br^d$
satisfies appropriate conditions ensuring the existence of a unique solution $\{X_t\}_{t\geq0}$ for the state dynamics such that for any $t\geq0$, $\mu_t=Law(X_t)\in\PCal^2(\br^d)$ (\cite{Rogers2000} and \cite{Evans1998}). Moreover, $\mu_t$ can be viewed as the limiting empirical distribution of identically distributed players' states, and by strong law of large numbers ${\mu_t=\lim_{N\to \infty} \frac{1}{N} \sum_{i=1}^N \delta_{X_t^{i}}} = Law(X_t)$ for all $t\in(0,T]$ where $X^i_t$ are i.i.d. copies of $X_t$. 

The Nash equilibrium solution  of this \eqref{eq:mfg-eg} is characterized  by the following optimality-consistency criterion.   
\begin{definition}\label{def:mfg-sol}
A control and mean-field pair $(\{\alpha^*_t\}_{t\geq0},\{\mu^*_t\}_{t\geq0})$, with initial distribution ${\mu^*_0=\mu^0}$, is called the solution to \eqref{eq:mfg-eg} if the following conditions hold. 
\begin{itemize}
 \item (Optimal control) Under $\{\mu^*_t\}_{t\geq0}$, $\{\alpha^*_t\}_{t\geq0}$ solves the following optimal control problem that: for $s\in[0,T]$ and $x\in\mathbb R^d$,
 \begin{equation*}
 \begin{aligned}
 u(s,x;\{\mu^*_t\})=&\inf_{\alpha \in \mathcal{A}} E\left[\int_s^T f(t,X_t,{\mu^*_t},\alpha_t)dt \left\vert X_s=x\right.\right] \\
 \mbox{subject to} \ \ \ \ \ \ \ \ \ & \\
 dX_t = &b(t,X_t,{\mu^*_t},\alpha_t)dt+\sigma dW_t,\qquad X_0\sim\mu^0.
 \end{aligned}
\end{equation*}
\item (Consistency) $\{\mu^*_t\}_{t\geq0}$ is the flow of probability distributions of the optimally controlled process, i.e., $\mu^*_t=Law(X^{*}_t)$ for $t\geq0$, where $X^{*}$ satisfies
\[dX^{*}_t=b(t,X^{*}_t,{\mu^*_t},\alpha^*_t)dt+\sigma dW_t,\qquad X^{*}_0\sim\mu^0.\]
\end{itemize}
\end{definition}
Correspondingly,  solutions of this  \eqref{eq:mfg-eg} can be characterized by the following coupled PDE system (assuming the mean-field interactions are of local type),
\begin{align}
&\left.\begin{aligned}
&\partial_su(s,x)+\frac{\sigma^2}{2}\Delta_xu(s,x)+H\left(s,x,\nabla_xu(s,x)\right)=0,\\
&\hspace{50pt} u(T,x)=0;\end{aligned}\right\}\label{eq:hjb}\tag{HJB}\\
&\left.\begin{aligned}
&\partial_sm(s,x)+\div\left[m(s,x)b(s,x,m(s,x),\alpha^*(s,x))\right]=\frac{\sigma^2}{2}\Delta_xm(s,x),\\
&\hspace{20pt} m(t,\cdot)\geq0,\,\int_{\br^d} m(t,x)dx=1,\,\forall t\in[0,T];\quad m(0,x)= m^0(x).\end{aligned}\right\}\label{eq:fp}\tag{FP}
\end{align}
Here, we denote by $m(t,\cdot)$ as the density function of $\mu_t$ for any $t\geq0$, and for any $\mu\in\PCal_D(\br^d)$ with density function $m$, we will write
${b(t,x,\mu,\alpha):=b(t,x,m,\alpha)}$ and ${f(t,x,\mu,\alpha):=f(t,x,m,\alpha)}$ in the rest of this paper. The Hamiltonian $H(s,x,p)$ in \eqref{eq:hjb} is given by
\[
H\left(s,x,p\right)=\min_{\alpha \in \br^d}\left\{b(s,x,m(s,x),\alpha)p+f(s,x,m(s,x),\alpha)\right\}, \qquad  s\in(0,T),\,x,p\in\mathbb R^d,\]
and $\alpha^*$ in \eqref{eq:fp} is the optimal control, with 
\begin{equation}
 \label{eq: opt-ctrl}
 \alpha^*(t,x)=\arg\min_{\alpha \in \br^d} \left\{b(t,x,m(t,x),\alpha)\nabla_xu(t,x)+f(t,x,m(t,x),\alpha)\right\}.
\end{equation}
Note that from \eqref{eq: opt-ctrl}, the optimal control is determined by the value function $u$ and mean field $m$.

Recently, a class of periodic MFGs
on the flat torus $\TT^d$ with a finite time horizon $[0,T]$ have been shown to admit a minimax structure by \cite{Cirant2018}. This minimax structure is   
further explored to connect MFGs with GANs in Section \ref{sec:gan2mfg} and \ref{subsec:mfg2gan}.

\subsection{Optimal transport problems}
The optimal transport problem, dating back to Monge \cite{monge1781memoire}, is to find the best transport plan minimizing the cost to transport one mass with distribution $\mu$ to another mass with distribution $\nu$.

Mathematically, consider a Polish space $\XCal$ with metric $l:\XCal\times\XCal\to[0,\infty)$. Let $\mu$ and $\nu$ be any probability measures in $\PCal(\XCal)$ with a first moment. Let $c:\XCal\times\XCal\to\br\bigcup\{
+\infty\}$ be a lower semi-continuous function such that $c(x,y)\geq a(x)+b(y)$, where $a$ and $b$ are upper semi-continuous functions from $\XCal$ to $\mathbb{R} \cup \{-\infty\}$ such that $a \in L^1(\XCal, \mu), b\in L^1(\XCal, \nu)$. The optimal transport problem is defined as follows (see \cite{villani2008optimal}). 
\begin{definition}
 The optimal transport cost between $\mu$ and $\nu$ with cost function $c$ is defined as 
\begin{equation}\label{eq:ot-div}\tag{OT}
 W_c(\mu,\nu)=\inf_{\pi\in\Pi(\mu,\nu)}\int_{\XCal\times\XCal}c(x,y)\pi(dx,dy),
\end{equation}
 where $\Pi(\mu,\nu)$ is the set of all possible couplings between $\mu$ and $\nu$.
\end{definition}

The well-definedness of this optimal transport problem, i.e., the existence of an optimal cost $W_c$, is guaranteed by Theorem 4.1 of \cite{villani2008optimal}. The corresponding dual formulation of this optimal transport problem is defined as follows. 
\begin{definition}
The dual Kantorovich problem of \eqref{eq:ot-div} is 
\begin{equation*}
\begin{aligned}
 D_c(\mu,\nu)=\sup_{\psi\in L^1(\XCal,\mu),\phi\in L^1(\XCal,\nu)}\biggl\{\int_\XCal\phi(x)\nu(dx)&-\int_\XCal\psi(x)\mu(dx)\biggl|\\
 &\phi(x)-\psi(y)\leq c(x,y),\,\forall (x,y)\in\XCal\times\XCal\biggl\}.
 \end{aligned}
\end{equation*}
\end{definition}

It is easy to see that $D_c(\mu,\nu)\leq W_c(\mu,\nu)$. The following Kantorovich-Rubinstein duality provides sufficient conditions to guarantee the equality. This duality is at the core of the relation between GANs and optimal transport to be established in the latter part of  this paper.
\begin{lemma}[Theorem 5.10(i) in \cite{villani2008optimal}]
\label{thm:kr-dual}
\[W_c(\mu,\nu)=D_c(\mu,\nu)=\sup_{\psi\in L^1(\XCal,\mu)}\int_{\XCal}\psi^c(x)\nu(dx)-\int_\XCal\psi(x)\mu(dx).\]
Here $\psi:\XCal\to\br\bigcup\{+\infty\}$ is taken from the set of all $c$-convex functions, i.e., $\psi$ is not constantly $+\infty$ and there exists a function $\zeta:\XCal\to\br\bigcup\{+\infty\}$ such that
\[\psi(x)=\sup_{y\in\XCal}\left[\zeta(y)-c(x,y)\right],\quad \forall x\in\XCal.\]
$\psi^c:\XCal\to\br\bigcup\{-\infty\}$ is its $c$-transform
\[\psi^c(y)=\inf_{x\in\XCal}\left[\psi(x)+c(x,y)\right],\quad \forall y\in\XCal.\]
\end{lemma}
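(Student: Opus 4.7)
The plan is to establish the chain of equalities $W_c(\mu,\nu) = D_c(\mu,\nu) = \sup_{\psi}\int \psi^c\,d\nu - \int \psi\,d\mu$ by (i) verifying weak duality by direct integration, (ii) proving strong duality $W_c = D_c$ via a Fenchel--Rockafellar type convex-duality argument, and (iii) reducing the supremum in $D_c$ to pairs of the form $(\psi^c,\psi)$ with $\psi$ $c$-convex via the $c$-transform.

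The first step is routine: for any admissible pair $(\phi,\psi)$ and any coupling $\pi \in \Pi(\mu,\nu)$, integrate the pointwise inequality $\phi(x)-\psi(y)\leq c(x,y)$ against $\pi$ and invoke the marginal conditions; the growth bound $c(x,y)\geq a(x)+b(y)$ with $a\in L^1(\mathcal{X},\mu)$ and $b\in L^1(\mathcal{X},\nu)$ ensures all integrals are well-defined (possibly in $(-\infty,+\infty]$), yielding $D_c\leq W_c$.

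The main work is in step (ii), strong duality. I would set the problem up on a suitable space of test functions (starting with $C_b(\mathcal{X})\times C_b(\mathcal{X})$ when $\mathcal{X}$ is compact and $c$ is continuous) with linear objective $\Lambda(\phi,\psi)=\int\phi\,d\nu-\int\psi\,d\mu$ and convex indicator constraint $I_K$ for the feasibility set $K=\{(\phi,\psi):\phi(x)-\psi(y)\leq c(x,y)\}$. Computing the Fenchel conjugate, the dual problem becomes the minimization of $\int c\,d\pi$ over non-negative Radon measures on $\mathcal{X}\times\mathcal{X}$ whose marginals are forced to be $\mu$ and $\nu$ by the extraction of the linear objective as a Lagrange multiplier. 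Verifying the qualification condition (existence of an admissible pair in the interior) uses precisely the integrable lower bound on $c$. To reach the general Polish / lower-semicontinuous setting, I would approximate $c$ from below by an increasing sequence of continuous bounded functions, apply the compact-continuous version, and pass to the limit by monotone convergence together with tightness of near-optimal couplings (guaranteed again by the $a,b$ lower bound). This is where the real work lies.

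The last step converts the pair-valued supremum into a single-function formulation. Given any admissible $(\phi,\psi)$, replace $\phi$ by the largest function compatible with the constraint $\phi(x)-\psi(y)\leq c(x,y)$ when $\psi$ is held fixed; this is precisely the $c$-transform, which only enlarges $\int \phi\,d\nu$. Symmetrically replace $\psi$ by the $c$-transform of the updated $\phi$, which only shrinks $\int \psi\,d\mu$ — hence increases $-\int\psi\,d\mu$. After this double sweep, the pair is of the form $(\psi^c,\psi)$ with $\psi$ automatically $c$-convex (in the sense stated in the lemma), and the objective has not decreased. Taking the supremum over such $\psi$ therefore recovers $D_c$. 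The principal obstacle throughout is the functional-analytic strong-duality step: correctly identifying the dual problem, checking qualification, and extending beyond the compact-continuous setting; steps (i) and (iii) are essentially formal manipulations with the $c$-transform inequality.
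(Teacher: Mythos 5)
The paper does not actually prove this lemma: it is imported verbatim as Theorem 5.10(i) of Villani's monograph and used as a black box, so there is no internal proof to compare your sketch against. That said, your plan tracks the standard route to Kantorovich duality faithfully, and it is essentially the route Villani himself follows: (i) weak duality $D_c\le W_c$ by integrating the constraint $\phi(x)-\psi(y)\le c(x,y)$ against a coupling; (ii) strong duality $W_c=D_c$ via Fenchel--Rockafellar on $C_b\times C_b$ in the compact, continuous-cost case, then extension to lower semi-continuous $c$ on Polish spaces by an increasing approximation $c_n\nearrow c$ with $c_n\in C_b$, using tightness of near-optimal couplings and monotone convergence; and (iii) reduction from pairs $(\phi,\psi)$ to $c$-convex single functions via a double $c$-transform sweep, which can only increase the dual objective. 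This architecture is correct. The one place where the sketch is genuinely incomplete rather than merely compressed is the limiting step in (ii): you need the integrable lower bound $c\ge a\oplus b$ with $a\in L^1(\mu)$, $b\in L^1(\nu)$ both to guarantee $W_{c_n}$ and $W_c$ are finite and to justify interchanging the infimum over couplings with the monotone limit in $n$; a complete write-up has to spell out that exchange rather than invoke monotone convergence informally. Also worth noting: the paper's statement of $D_c$ places $\phi$ in the first argument and $\psi$ in the second of $c$ while integrating $\phi$ against $\nu$ and $\psi$ against $\mu$, a convention that matches the displayed $c$-transform $\psi^c(y)=\inf_x[\psi(x)+c(x,y)]$ cleanly only when $c$ is symmetric (as it is for $c=l^p$); if you carry out step (iii) for a general asymmetric $c$ you will want to fix the variable roles carefully so the ``largest compatible $\phi$'' you produce really is $\psi^c$ as defined.
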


In a particular case where the transport cost $c$ can be written as $c(x,y)=l(x,y)^p$ for some $p\geq1$, then the corresponding optimal cost gives rise to the Wasserstein distance between $\mu$ and $\nu$ of order $p$, or simply the Wasserstein-$p$ distance,
\[W_p(\mu,\nu)=\left[\inf_{\pi\in\Pi(\mu,\nu)}\int_{\XCal\times\XCal}l(x,y)^p\pi(dx,dy)\right]^{\frac{1}{p}}.\]
Note that for $p=1$, the Wasserstein-1 distance is adopted in Wasserstein GANs (WGANs) in \cite{Arjovsky2017} to improve the stability of GANs.




{
\section{GANs as MFGs}\label{sec:gan2mfg}


As reviewed in Section \ref{sec:prelim}, GANs have been introduced as two-player minimax games between the generator and the discriminator. In this section, we will take the viewpoint of the training data and re-examine the game theoretical perspective of GANs. In particular, we will show that under this viewpoint, the theoretical framework of GANs can be interpreted as MFGs under Pareto Optimality conditions:


\begin{theorem}\label{thm: gan2mfg}
GANs in \cite{Goodfellow2014} are MFGs under the Pareto Optimality criterion. 
\end{theorem}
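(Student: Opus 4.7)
The plan is to adopt the ``training data viewpoint'' hinted at just before the theorem and treat each generated sample as a representative player in a large-population game. First I would fix a finite population size $N$ and consider $N$ i.i.d.\ latent variables $Z_1,\ldots,Z_N\sim\bp_z$, each mapped to a generated sample $X_i=G_\theta(Z_i)$. These samples form the empirical distribution $\mu^N=\frac{1}{N}\sum_{i=1}^N\delta_{X_i}$, which plays the role of the mean-field information of the generated population; by the strong law of large numbers $\mu^N\Rightarrow\bp_\theta$ as $N\to\infty$.

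Next I would rewrite the GAN objective \eqref{gan-obj} from this sample viewpoint. Because the inner maximization in $\omega$ admits the closed form $D^*_G(x)=p_r(x)/(p_r(x)+p_\theta(x))$ already derived in the excerpt, the effective loss carried by the $i$-th generated sample is $\log(1-D^*_G(X_i))$, a function of its own state $X_i$ and, through $p_\theta$, of the empirical mean field $\mu^N$. Together with the real-sample term $\EE_{X\sim\bp_r}[\log D^*_G(X)]$, which depends on the players only through $\bp_\theta$, the generator's problem becomes the minimization of a symmetric functional of $N$ interchangeable players over a single shared strategy $\theta$ — exactly the interaction pattern of \eqref{eq:mfg-eg} after a trivial static-time reduction, with the discriminator folded into the running cost via the best-response map $D^*_G$ rather than appearing as an additional strategic agent.

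The key conceptual step is to identify the solution concept. In a standard MFG one seeks a Nash equilibrium in which each representative player individually best-responds to the mean field. In GANs, by contrast, all generated samples are constrained to use the same generator $G_\theta$, so no single player can deviate independently; instead one looks for a shared policy such that no joint deviation can lower the cost of one player without raising that of another. This is precisely the Pareto Optimality criterion among the (identical) players, and because the induced objective is symmetric across players, the set of Pareto optima coincides with the minimizers of the aggregate loss. Passing to the limit $N\to\infty$, the aggregate loss converges to $2\,JS(\bp_r,\bp_\theta)-\log 4$ as computed in the excerpt, whose unique minimizer $\bp_\theta=\bp_r$ is exactly the GAN equilibrium.

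The main obstacle will be making ``Pareto Optimality'' precise in the mean-field limit, since comparing one player's cost to another's is delicate for a continuum of interchangeable players. The cleanest route I would take is to define Pareto optimality at the $N$-player level — where a shared strategy $\theta^*$ is Pareto optimal if no $\theta$ yields a weakly lower cost for every player with at least one strictly lower — verify that under symmetry this is equivalent to minimizing the aggregate functional, and then pass to the limit via a propagation-of-chaos argument so that the $N$-player aggregate converges to the population-level GAN objective. A secondary subtlety to handle carefully is that the discriminator is a single non-mean-field agent; I would justify its absorption into the running cost by appealing to the fact that the inner supremum in $\omega$ is attained pointwise by $D^*_G$, so the discriminator contributes only the effective cost structure seen by each MFG player and does not break the interchangeability required by the MFG formalism.
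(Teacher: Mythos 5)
Your proposal captures the paper's central conceptual moves: the latent variables $Z_i$ are the players, the empirical distribution of generated samples is the mean field, the cooperation is enforced by a shared generator, Pareto optimality for indistinguishable players reduces to minimizing the aggregate cost, and the discriminator's inner maximization is resolved by its best response before passing $N\to\infty$. However, there is a genuine gap: you dismiss the state dynamics as ``a trivial static-time reduction,'' whereas the paper's entire point in identifying GANs as \emph{mean-field games} (as opposed to a static mean-field optimization) rests on unrolling the generator network layer by layer. In the paper, each layer $t=1,\dots,T$ of the feedforward network is a time step, the hidden activations $H_{i,t}^G$ are the representative player's state, the weights and biases $(w_t,b_t)$ are the controls, and \eqref{eq: G-proc-generic} is precisely the state equation analogous to the controlled SDE in \eqref{eq:mfg-eg}. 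Without this identification there is no state process, no admissible control set $\mathcal G$ as in \eqref{eq: adm-ctrl}, and hence no MFG in the sense of Definition~\ref{def:mfg-sol}; one is left only with a static collaborative minimization over $\theta$, which is not what the theorem asserts.

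A second, more technical, issue: you invoke the closed form $D^*_G(x)=p_r(x)/(p_r(x)+p_\theta(x))$ already at the finite-$N$ level, but there the generated measure $\mu_G^N$ is an empirical measure with no Lebesgue density, so this formula does not apply. The paper instead characterizes the biased discriminator at finite $(N,M)$ in terms of the empirical measures $\mu_r^M$ and $\mu_G^N$ (Proposition~\ref{prop: n-D-soln}), establishes Pareto optimality there (Theorem~\ref{thm: n}: $\mu_G^N=\mu_r^M$), and only then passes to $N,M\to\infty$ (Proposition~\ref{prop: mf-D-soln}, Theorems~\ref{thm: mf} and~\ref{thm:convergence-D-G}) to recover the density-based formula and the population-level statement. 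Note also that the paper takes \emph{both} $N$ (generator players) and $M$ (discriminator-side adversaries) to infinity; your proposal keeps the real samples at the population level from the start, which bypasses the finite-$M$ adversary structure that motivates the ``biased discriminator'' formulation.
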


To establish Theorem \ref{thm: gan2mfg}, we will start from a practical training scenario of GANs with a large amount of but finitely many data points $(Z_i,X_j)$, $i=1,\dots,N$, $j=1,\dots,M$ and $N,M$ finite. Here the $N$ latent variables $\{Z_i\}_{i=1}^N$ will be seen as $N$ players on the generator side and the $M$ true sample data $\{X_j\}_{j=1}^M$ will be seen as the adversaries on the discriminator side. We will show that after the training process, the optimal generator corresponds to the Pareto optimal strategy of a class of $N$-player cooperative games and it can only recover the sample distribution of $\{X_j\}_{j=1}^M$. Then, we will show that as $N,M\to\infty$, the $N$-player cooperative games will converge to the MFGs setting, corresponding to the theoretical GANs framework. 

\subsection{Cooperative game with N players against M adversaries}\label{subsec: gan2mfg-N-coop}
We will start from constructing the $N$-player cooperative games corresponding to a GANs model in training, with all players interchangeable. Let us first look at the state process of a representative generator player governed by the generator network $G$, parametrized by a feedforward neural network
\begin{equation}
    \label{eq: NN}
    NN=(T,\{\sigma_t\}_{t=1}^T,\{n_t\}_{t=0}^T).
\end{equation} Here, $T$ represents the depth of the neural network in which there are $T+1$ layers with layer 0 being to the input. $\sigma_t$ denotes the activation function adopted to produce the output at layer $t$, $t=1,\dots, T$. Finally, $n_t$ denotes the number of neurons at layer $t$, $t=1,\dots,T$. 

Recall that in GANs, the input of the generator $Z$ is a random variable from the latent space $\ZCal\subset\br^{n_0}$ and its distribution $\bp_z$ is known; the true samples follow an unknown distribution $\bp_r$. Let $X$ denote a random variable taking values in the sample space $\XCal\subset\br^{n_T}$ and $X\sim\bp_r$. 

\paragraph{State process of a representative player.}
Now, the state process of a player on the generator side is given by the feedforward process of generator network $G$. This computation process can be written as the following process,
\begin{equation}
    \label{eq: G-proc-generic}
    H_t^G=\sigma_t\left(w_tH_{t-1}^G+b_t\right),\quad H_0\sim\bp_z,\quad t=1,\dots T.
\end{equation}
Here, $H_0$ denotes the input  and $H_t$ denotes the output of layer $t$. $w_t$ takes values in $\br^{n_t\times n_{t-1}}$ and $b_t$ takes values in $\br^{n_t}$, $t=1,\dots,T$. Given $G$, denote $Law(H^G_T)=\bp_\theta$. 
Let $w_0=I_{n_0}$ be the identity matrix in $\br^{n_0\times n_0}$ and $b_0=0_{n_0}$ be the zero element in $\br^{n_0}$. Consider (deterministic) sequences $w=\{w_t\}_{t=0}^T$ and $b=\{b_t\}_{t=0}^T$. 
Notice that under a given feedforward neural network structure $(T,\{\sigma_t\}_{t=1}^T,\{n_t\}_{t=1}^T)$ the generator $G$ is characterized by $(w,b)=\{(w_t,b_t)\}_{t=0}^T$. In the analysis below, we will use $G$ and $(w,b)$ interchangeably whenever there is no risk of confusion. Denote the set of admissible controls by
\begin{equation}
    \label{eq: adm-ctrl}
    \begin{aligned}
        \mathcal G=\biggl\{G=(w,b)=\{(w_t,b_t)\}_{t=0}^T\biggl|&\,w_0=I_{n_0},\,b_0=0_{n_0},\\
        &\hspace{20pt}w_t\in\br^{n_t\times n_{t-1}},\,b_t\in\br^{n_t},\, t=1,\dots,T\biggl\}.
    \end{aligned}
\end{equation}

Given the mathematical setup for the state process above, we now move on to the definition of the cooperative game. Without loss of generality, we assume:
\begin{enumerate}
    \item $Z\perp X$. 
    \item Both $\bp_z$ and $\bp_r$ are absolutely continuous with respect to Lebesgue measure on $\XCal$, with density functions respectively $p_z$ and $p_r$. 
    \item The activation functions $\{\sigma_t\}_{t=1}^T$ are all continuous.
\end{enumerate}
\paragraph{The players and the adversaries.}
Under a strategy $G_i=(w_i,b_i)\in\mathcal G$, the state process for each individual player $i$ is given by
\begin{equation}
    \label{eq: G-proc-i}
    H_{i,t}^{G_i}=\sigma_t(w_{i,t}H_{i,t-1}^{G_i}+b_{i,t}),\quad H_{i,0}^{G_i}=Z_i\overset{i.i.d.}{\sim}\bp_z,\quad t=1,\dots,T.
\end{equation}
Let $\mathbf{G}^N=(G_1,\dots,G_N)$ denote a strategy profile for all $N$ players and $\bigotimes_{i=1}^N\mathcal G$ be the set of all possible strategy profiles. $\mathbf{G}^N\in\bigotimes_{i=1}^N\mathcal G$ is said to be {\em symmetric} if $G_i=G_j$ for any $i,j\in\{1,\dots,N\}$. Define the set of symmetric strategy profiles as
\begin{equation}
    \label{eq: n-sym-set}
    S(\mathcal{G})=\biggl\{\mathbf{G}^N\in\bigotimes_{i=1}^N\mathcal G\biggl|G=G_1=\dots=G_N,\,\forall G\in\mathcal G\biggl\}
\end{equation}
and for simplicity we write $\mathbf{G}^N=\mathbf{S}^{G}$ for any $\mathbf{G}^N=(G,\dots,G)\in S(\mathcal G)$.

The group of $N$ players are coordinated by a central controller who is in charge of choosing a strategy applied by all the players. In particular, since the players are indistinguishable, the central controller picks a strategy profile in $S(\mathcal G)$. Since $\mathbf S^G\in S(\mathcal G)$, the state process for each individual player is given by
\begin{equation}
    \label{eq: G-proc-ind}
    H_{i,t}^G=\sigma_t(w_tH_{i,t-1}^G+b_t),\quad H_{i,0}^G=Z_i\overset{i.i.d.}{\sim}\bp_z,\quad t=1,\dots,T,
\end{equation}
for $i=1,\dots,N$. Meanwhile, there are $M$ adversaries against this group of $N$ players, denoted by $X_j\in\XCal$, $j=1,\dots,M$, and $X_j\overset{i.i.d.}{\sim}\bp_r$.

\paragraph{The cost of the game.}
The cost for each individual player against the adversaries is measured by a discriminator function $D$ from the set of measurable functions \[\mathcal D=\{D \,|\, D:\XCal\to[0,1]\}.\]
Under any given $D\in\mathcal D$, the individual cost of choosing strategy $G\in\mathcal G$ is given by
\begin{equation}
    \label{eq: n-cost-ind}
    J_i^{N,M}\left(G;D,Z_i,\{X_j\}_{j=1}^M\right)=\log[1-D(H_{i,T}^G)]+\frac{1}{M}\sum_{j=1}^M\log D(X_j),\quad i=1,\dots, N,
\end{equation}
where we adopt the convention $0\log0=0$. 

The central controller from the player group will choose an optimal strategy profile $\mathbf{S}^G\in S(\mathcal G)$ to attain the minimal collective cost from all players even if the discriminator function $D^{N,M}$ is biased toward the group of adversaries in the following sense,
\begin{equation}
    \label{eq: n-D-adv}
    D^{N,M}=D^{N,M}\left(\cdot;G,\{Z_i\}_{i=1}^N,\{X_j\}_{j=1}^M\right)\in\argmax_{D\in\mathcal D}\frac{1}{N\cdot M}\sum_{i=1}^N\sum_{j=1}^M\left(\log D(X_j)+\log\left[1-D(H_{i,T}^G)\right]\right).
\end{equation}
That is, $D^{N,M}$ assigns higher scores to the adversaries compared with the players. The collective cost of choosing strategy profile $\mathbf{S}^G$ is given by
\begin{equation}
    \label{eq: n-cost}
    \begin{aligned}
    J^{N,M}\left(\mathbf{S}^G;\{Z_i\}_{i=1}^N,\{X_j\}_{j=1}^M\right)&=\frac{1}{N}\sum_{i=1}^NJ^{N,M}_i\left(G;D^{N,M};Z_i,\{X_j\}_{j=1}^M\right)\\
    &=\max_{D\in\mathcal D}\frac{1}{N\cdot M}\sum_{i=1}^N\sum_{j=1}^M\left(\log D(X_j)+\log\left[1-D(H_{i,T}^G)\right]\right).
    \end{aligned}
\end{equation}
Note that this is a deterministic quantity (given the values of $\{Z_i\}_{i=1}^N,\{X_j\}_{j=1}^M$) .

\paragraph{Pareto optimality of the cooperative game.}
The Pareto optimality to this cooperative game is specified as follows.
\begin{definition}[Pareto optimality]\label{defn: po}
A strategy profile $\mathbf S^{G^*}\in S(\mathcal G)$ is said to be Pareto optimal if 
\[J^{N,M}\left(\mathbf{S}^{G^*};\{Z_i\}_{i=1}^N,\{X_j\}_{j=1}^M\right)\leq J^{N,M}\left(\mathbf{S}^G;\{Z_i\}_{i=1}^N,\{X_j\}_{j=1}^M\right),\quad \forall\mathbf{S}^G\in S(\mathcal G).\]
\end{definition}
If such a Pareto optimal $\mathbf S^{G^*}$ exists for some $G^*\in\mathcal G$, then the corresponding game value is given by
\begin{equation}
    \label{eq: n-value}
    \begin{aligned}
        L^{N,M}\left(\{Z_i\}_{i=1}^N,\{X_j\}_{j=1}^M\right)&=\min_{\mathbf{S}^G\in S(\mathcal{G})}J^{N,M}\left(\mathbf{S}^G;\{Z_i\}_{i=1}^N,\{X_j\}_{j=1}^M\right)\\
        &=\min_{G\in\mathcal G}\max_{D\in\mathcal D}\frac{1}{N\cdot M}\sum_{i=1}^N\sum_{j=1}^M\log D(X_j)+\log\left[1-D(H_{i,T}^G)\right],
    \end{aligned}
\end{equation}
subject to \eqref{eq: G-proc-ind} under $G=G^*$ for all $i=1,\dots,N$.

Given the data points $\{Z_i\}_{i=1}^N$ and $\{X_j\}_{j=1}^M$ and a fixed strategy $G\in\mathcal{G}$, we first study the characterization through an optimality condition of $D^{N,M}$. Define
\begin{equation}
\label{eq:deltaMr-deltaNG}
    \mu_r^M=\frac{1}{M}\sum_{j=1}^M\delta_{X_j},\quad\mu_G^N=\frac{1}{N}\sum_{i=1}^N\delta_{H^G_{i,T}},
\end{equation}
as the empirical measures of $\{X_j\}_{j=1}^M$ and $\{H_{i,T}^G\}_{i=1}^N$, respectively, where for any given $y\in\XCal$, the Dirac delta function $\delta_y$ is given by
\[\delta_y(x)=\begin{cases}
1,&\text{if }x=y;\\
0,&\text{otherwise};
\end{cases}\quad \forall x\in\XCal,\]
and as Dirac measure,
\[\delta_y(A)=\begin{cases}
1,&\text{if }y\in A;\\
0,&\text{otherwise};
\end{cases}\]
for any Borel measurable set $A\subset\XCal$. Then
\begin{align}
    \frac{1}{N\cdot M}&\sum_{i=1}^N\sum_{j=1}^M\log D(X_j)+\log\left[1-D(H_{i,T}^G)\right]
    =&\int_{x\in\XCal} \log D(x)\mu_r^M(x)+\log\left[1-D(x)\right]\mu_G^N(x) dx\label{eq: n-D}.
\end{align}
Hence,
\begin{proposition}\label{prop: n-D-soln}
Any optimal biased discriminator function $D^{N,M}=D^{N,M}\left(\cdot;G,\{Z_i\}_{i=1}^N,\{X_j\}_{j=1}^M\right):\XCal\to[0,1]$ satisfies:
\begin{equation}
    \label{eq: n-D-adv-soln}
    D^{N,M}\left(x;G,\{Z_i\}_{i=1}^N,\{X_j\}_{j=1}^M\right)=
    \frac{\mu_r^M(x)}{\mu_r^M(x)+\mu_G^N(x)}, \qquad x\in\{H_{1,T}^G,\dots,H^G_{N,T},X_1,\dots,X_M\}.
\end{equation}
For $x\not\in\{H_{1,T}^G,\dots,H^G_{N,T},X_1,\dots,X_M\}$, $D^{N,M}\left(x;G,\{Z_i\}_{i=1}^N,\{X_j\}_{j=1}^M\right)$ can take any value in $[0,1]$.
\end{proposition}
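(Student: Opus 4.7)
The plan is to exploit the fact that the objective in \eqref{eq: n-D-adv} depends on the discriminator $D$ only through its values at the finite set of sample points $\{H_{1,T}^G,\dots,H_{N,T}^G,X_1,\dots,X_M\}$. Consequently, the maximization over the infinite-dimensional class $\mathcal{D}$ reduces to a pointwise optimization at each of these points, and the values of $D$ elsewhere are completely unconstrained in $[0,1]$. This immediately justifies the claim that $D^{N,M}(x)$ is not uniquely determined off the sample set.

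For the constructive part of the proof, I would partition the sample set into its distinct values and, for each distinct $x^* \in \mathcal{X}$, define the multiplicities $a := \#\{j : X_j = x^*\}$ and $b := \#\{i : H_{i,T}^G = x^*\}$. The terms in the double sum of \eqref{eq: n-D-adv} that depend on $D(x^*)$ then collapse to $\frac{a}{M}\log D(x^*) + \frac{b}{N}\log[1-D(x^*)]$, which by the definitions in \eqref{eq:deltaMr-deltaNG} equals $\mu_r^M(\{x^*\})\log D(x^*) + \mu_G^N(\{x^*\})\log[1-D(x^*)]$. Since distinct sample points contribute disjoint terms, the global maximum is achieved by optimizing each such term independently.

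Next I would maximize the univariate function $t \mapsto \alpha \log t + \beta \log(1-t)$ on $[0,1]$, with $\alpha = \mu_r^M(\{x^*\}) \geq 0$ and $\beta = \mu_G^N(\{x^*\}) \geq 0$ not both zero, under the convention $0\log 0 = 0$. Strict concavity on $(0,1)$ together with continuity on $[0,1]$ ensures a unique maximizer, obtained by setting the derivative to zero to get $t^* = \frac{\alpha}{\alpha+\beta}$, exactly the expression in \eqref{eq: n-D-adv-soln}. The degenerate cases $\alpha = 0$ (forcing $t^*=0$) and $\beta = 0$ (forcing $t^*=1$) are consistent with this formula and with the $0\log 0 = 0$ convention, so no separate treatment is required.

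The only delicate point is the bookkeeping when a single value $x^*$ appears both in $\{X_j\}_{j=1}^M$ and in $\{H_{i,T}^G\}_{i=1}^N$, or with multiplicity greater than one on either side. This is precisely why the statement is phrased via the empirical measures $\mu_r^M$ and $\mu_G^N$ rather than via $N,M$-weighted indicator sums, and it is what justifies the fractional form in \eqref{eq: n-D-adv-soln}. Under the standing assumptions that $\bp_z$ and $\bp_r$ are absolutely continuous and that the activation functions $\{\sigma_t\}_{t=1}^T$ are continuous, such coincidences occur on a null set, but the deterministic statement requires treating them in general, which the partition argument above handles uniformly.
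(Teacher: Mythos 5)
Your proof is correct and follows essentially the same route the paper takes: the paper rewrites the objective in~\eqref{eq: n-D} as a sum of per-point terms weighted by the empirical measures $\mu_r^M$ and $\mu_G^N$, and then infers the pointwise maximizer $a/(a+b)$ of $t\mapsto a\log t+b\log(1-t)$, leaving it as a one-line "Hence." You merely make this explicit by collapsing coinciding sample points into multiplicities, handling the boundary cases $a=0$ or $b=0$ under the $0\log 0=0$ convention, and noting that $D$ is unconstrained off the sample set --- details the paper omits but which your argument supplies cleanly.
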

Having the function $D^{N,M}$ in \eqref{eq: n-D-adv-soln}, we can now derive the Pareto optimal solution $\mathbf{S}^{G^*}$ in $S(\mathcal{G})$.
\begin{theorem}\label{thm: n}
Any Pareto optimal $\mathbf{S}^{G^{N,*}}$ by Definition \ref{defn: po} is given by some $G^{N,*}\in\mathcal{G}^{N,*}$ with
\begin{equation}
    \label{eq: n-G-po}
    \mathcal G^{N,*}=\biggl\{G\in\mathcal G: \mu_G^N=\mu_r^M\biggl\},
\end{equation}
where $\mu_G^N, \mu_r^M$ are defined in~\eqref{eq:deltaMr-deltaNG}, 
provided that $\mathcal G^{N,*}\neq\emptyset$.
\end{theorem}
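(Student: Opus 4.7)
The plan is to substitute the characterization of the optimal biased discriminator from Proposition \ref{prop: n-D-soln} back into the collective cost \eqref{eq: n-cost} and then recognize the resulting objective as a Jensen--Shannon-type divergence between the two empirical measures $\mu_r^M$ and $\mu_G^N$. Because any extension of $D^{N,M}$ off the finite support $\{H_{1,T}^G,\dots,H_{N,T}^G,X_1,\dots,X_M\}$ leaves the integrals in \eqref{eq: n-D} unchanged (both empirical measures place zero mass outside this set), substituting $D^{N,M}(x)=\mu_r^M(x)/(\mu_r^M(x)+\mu_G^N(x))$ on the support, exactly as in the derivation preceding \eqref{gan-obj-param}, should yield the discrete analogue
\begin{equation*}
J^{N,M}\left(\mathbf{S}^G;\{Z_i\}_{i=1}^N,\{X_j\}_{j=1}^M\right) = -\log 4 + 2\, JS(\mu_r^M,\mu_G^N),
\end{equation*}
where the Jensen--Shannon divergence is computed via Radon--Nikodym densities with respect to the dominating measure $\tfrac{1}{2}(\mu_r^M+\mu_G^N)$ and the convention $0\log 0=0$.

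Once this identity is in place, the theorem follows from the standard properties of the Jensen--Shannon divergence: $JS(\mu_r^M,\mu_G^N)\geq 0$, with equality if and only if $\mu_r^M=\mu_G^N$. Consequently
\begin{equation*}
\min_{\mathbf{S}^G\in S(\mathcal{G})} J^{N,M}\left(\mathbf{S}^G;\{Z_i\}_{i=1}^N,\{X_j\}_{j=1}^M\right)\geq -\log 4,
\end{equation*}
and equality is achieved precisely by those symmetric profiles $\mathbf{S}^{G}$ with $G\in\mathcal{G}^{N,*}$. The hypothesis $\mathcal{G}^{N,*}\neq\emptyset$ guarantees that this lower bound is actually attained, so by Definition \ref{defn: po} the set of Pareto-optimal symmetric profiles coincides with $\{\mathbf{S}^G : G\in\mathcal{G}^{N,*}\}$.

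The main technical point to check is the validity of the substitution and the ``JS divergence'' manipulation in the purely atomic setting, in which $\mu_r^M$ and $\mu_G^N$ may have only partially overlapping supports. One needs to handle carefully the atoms $x$ at which one of $\mu_r^M(x)$, $\mu_G^N(x)$ vanishes so that terms of the form $0\log 0$ and $\log 0$ are interpreted consistently with the convention adopted after \eqref{eq: n-cost-ind}, and to verify that the non-uniqueness of $D^{N,M}$ off the finite support (noted in Proposition \ref{prop: n-D-soln}) does not affect the game value. Once these book-keeping issues are resolved, the rest of the argument is a direct invocation of the non-negativity and equality conditions for the Jensen--Shannon divergence.
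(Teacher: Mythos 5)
Your proposal is correct and follows essentially the same route as the paper's proof: the paper likewise substitutes the optimal biased discriminator $D^{N,M}$ from Proposition~\ref{prop: n-D-soln} into the collective cost~\eqref{eq: n-cost}, reorganizes the resulting sum into $-\log 4$ plus the two Kullback--Leibler terms with respect to $\tfrac{1}{2}(\mu_r^M+\mu_G^N)$ (the functional it denotes $F$), and concludes from uniqueness of the minimizer that $\mu_G^N = \mu_r^M$. The only superficial difference is that you invoke the name ``Jensen--Shannon divergence'' explicitly and absorb the partial-support bookkeeping into the $0\log 0$ convention, whereas the paper first argues informally that the two atomic supports must coincide before writing the sum over $\{X_1,\dots,X_M\}$; both resolutions are equivalent.
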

Here, condition \eqref{eq: n-G-po} means that the empirical generator's output must match the empirical distribution of the samples. 

Theorem \ref{thm: n} shows that, in practical training of GANs over finite training samples,  the best an optimal generator can achieve is to recover the sample distribution of the true sample data. 
\begin{proof}
First, notice that 
\[\mu_r^M(x)\log D^{N,M}(x)+\mu_G^N(x)\log[1-D^{N,M}(x)]\leq0,\quad \forall x\in\XCal.\]
If there exists $x\in\{X_1,\dots,X_M\}\setminus\{H^G_{1,T},\dots,H^G_{N,T}\}$, then $D^{N,M}(x)=1$ and $\mu_r^M(x)\log D^{N,M}(x)+\mu_G^N(x)\log[1-D^{N,M}(x)]$ reaches its maximal value 0; similarly, if there exists $x\in\{H^G_{1,T},\dots,H^G_{N,T}\}\setminus\{X_1,\dots,X_M\}$, then $D^{N,M}(x)=0$ and again $\mu_r^M(x)\log D^{N,M}(x)+\mu_G^N(x)\log[1-D^{N,M}(x)]$ reaches its maximal value 0. Therefore, the supports of $\mu_r^M$ and $\mu_G^N$ should coincide. Furthermore,
\[\begin{aligned}
    &J^{N,M}\left(\mathbf S^G;\{Z_i\}_{i=1}^N,\{X_j\}_{j=1}^M\right)\\
    &\hspace{30pt}=\sum_{x\in\{X_1,\dots,X_M\}}\mu_r^M(x)\log\frac{\mu_r^M(x)}{\mu_r^M(x)+\mu_G^N(x)}+\mu_G^N(x)\log\frac{\mu_G^N(x)}{\mu_r^M(x)+\mu_G^N(x)}\\
    &\hspace{30pt}=-\log4+\sum_{x\in\{X_1,\dots,X_M\}}\mu_r^M(x)\log\mu_r^M(x)\\
    &\hspace{60pt}\sum_{x\in\{X_1,\dots,X_M\}}\mu_G^N(x)\log\mu_G^N(x)-[\mu_r^M(x)+\mu_G^N(x)]\log\frac{\mu_r^M(x)+\mu_G^N(x)}{2}\\
    &\hspace{30pt}=:F\left(\mu_G^N(x);\{Z_i\}_{i=1}^N,\{X_j\}_{j=1}^M\right).
\end{aligned}\]
Let $\PCal\left(\{X_1,\dots,X_M\}\right)$ denote the set of probability measures on $\{X_1,\dots,X_M\}$. Then the functional $F\left(\cdot;\{Z_i\}_{i=1}^N,\{X_j\}_{j=1}^M\right):\PCal\left(\{X_1,\dots,X_M\}\right)\to\br$ is uniquely minimized at $\mu_G^N=\mu_r^M$. The conclusion follows.
\end{proof}

\subsection{Mean-field game when N goes to infinity}\label{subsec: gan2mfg-MF}
Now as $N$ and $M$ tend to infinity, instead of assigning to  each and every one of the player a strategy, the central controller now considers 
\[\mu^G=\left\{\mu^G_t=\lim_{N\to\infty}\frac{\sum_{i=1}^N\delta_{H^G_{i,t}}}{N}\right\}_{t=0}^T.\]
We refer $\mu^G$ as the mean-field information. Given the i.i.d. conditions on $H_{i,0}^G$ and the symmetric strategy profile, we can write $\mu^G_t=Law(H^G_t)$ for $t=0,\dots, T$, with $H^G$ given by \eqref{eq: G-proc-generic}. Note that $H^G_{i,T}\sim\mu^G_T=\bp_\theta$ for all $i=1,2,\dots$. 

Now, instead of playing against $M$ adversaries from the discriminator side as in the training scenario, the group of infinitely many generator players are now competing against the aggregated information of the adversaries, that is, the distribution of the adversaries $\bp_r$. Under a given discriminator function $D\in\mathcal D$, the cost of individual players for choosing strategy $G\in\mathcal G$ is given by
\begin{equation}
    \label{eq: mf-cost-ind}
    J_i(G;D,Z_i,\bp_r)=\log\left[1-D(H_{i,T}^G)\right]+\EE_{X\sim\bp_r}\left[\log D(X)\right],
\end{equation}
for each $i=1,2,\dots$. The objective of the central controller is to choose an optimal symmetric strategy profile characterized by $G\in\mathcal{G}$ such that the collective cost of all players minimized even under a biased discriminator function $\bar D$ toward the adversary group, such that
\begin{equation}
    \label{eq: mf-D-adv}
    \begin{aligned}
    \bar D=\bar D(\cdot;G,\bp_z,\bp_r)\in&\argmax_{D\in\mathcal{D}}\EE_{X\sim\bp_r}[D(X)]+\lim_{N\to\infty}\frac{1}{N}\sum_{i=1}^N\log\left[1-D(H^G_{i,T})\right]\\
    \overset{a.s.}{=}&\argmax_{D\in\mathcal{D}}\EE_{X\sim\bp_r}[\log D(X)]+\EE_{Y\sim\mu^G_T}[\log[1-D(Y)]],
    \end{aligned}
\end{equation}
where the equality is by the law of large numbers. Following a similar argument as for Proposition \ref{prop: n-D-soln}, we recover the following characterization of $\bar D$ from~\cite{Arjovsky2017}. 
\begin{proposition}
\label{prop: mf-D-soln}
Any optimal biased discriminator $\bar D$ satisfies: 
\begin{equation}
    \label{eq: mf-D-soln}
    \bar D(x;G,\bp_z,\bp_r)=
    \frac{p_r(x)}{p_r(x)+p_\theta(x)},
    \qquad x\in{\rm Supp}(p_r)\bigcup{\rm Supp}(p_\theta),
\end{equation}
where ${\rm Supp}(p_r)$ and ${\rm Supp}(p_\theta)$ denote the supports of $p_r$ and $p_\theta$, respectively. For $x\not\in{\rm Supp}(p_r)\bigcup{\rm Supp}(p_\theta)$, $\bar D$ can take any value in [0,1].
\end{proposition}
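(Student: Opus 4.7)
The plan is to mirror the proof of Proposition~\ref{prop: n-D-soln}, replacing the empirical measures $\mu_r^M$ and $\mu_G^N$ by the densities $p_r$ and $p_\theta$, and the finite sums by Lebesgue integrals. Under the standing assumption that $\bp_r$ admits a density, and using continuity of the activations $\{\sigma_t\}_{t=1}^T$, the pushforward law $\bp_\theta = Law(H_T^G)$ is likewise represented through a density $p_\theta$ (in the $\PCal_D(\br^d)$ sense adopted throughout the paper). With the convention $0\log 0 = 0$, the objective in~\eqref{eq: mf-D-adv} can then be written as
\[
\EE_{X\sim\bp_r}[\log D(X)] + \EE_{Y\sim\mu_T^G}[\log(1-D(Y))] = \int_{\XCal}\bigl(p_r(x)\log D(x) + p_\theta(x)\log(1-D(x))\bigr)\,dx.
\]

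Next, I would exploit the fact that $\mathcal D$ is the set of \emph{all} measurable functions $D:\XCal\to[0,1]$, so the value $D(x)$ is unconstrained by values at other points. Consequently, the supremum of the integral equals the integral of the pointwise suprema, and it suffices to maximize, for each fixed $x\in\XCal$,
\[
\varphi_x(d) \;:=\; p_r(x)\log d + p_\theta(x)\log(1-d), \qquad d\in[0,1].
\]
If $p_r(x)+p_\theta(x)>0$, then $\varphi_x$ is strictly concave on $(0,1)$ and the first-order condition $p_r(x)/d = p_\theta(x)/(1-d)$ delivers the unique maximizer $d^*(x) = p_r(x)/(p_r(x)+p_\theta(x))$; this same formula correctly handles the degenerate endpoints, giving $d^*=1$ when $p_\theta(x)=0<p_r(x)$ and $d^*=0$ when $p_r(x)=0<p_\theta(x)$, in agreement with the boundary maxima of $\varphi_x$. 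If instead $x\notin{\rm Supp}(p_r)\cup{\rm Supp}(p_\theta)$, both densities vanish so $\varphi_x\equiv 0$ and any value in $[0,1]$ is a maximizer, accounting precisely for the non-uniqueness asserted in the last sentence of the proposition.

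Finally, I would note that the pointwise maximizer $d^*(\cdot)$ defines a bona fide admissible discriminator: measurability on ${\rm Supp}(p_r)\cup{\rm Supp}(p_\theta)$ is immediate since $p_r$ and $p_\theta$ are measurable and the denominator is strictly positive there, while on the complement one may pick any measurable extension into $[0,1]$. I do not anticipate any substantive obstacle; the argument is a direct continuous analogue of the discrete derivation already given for Proposition~\ref{prop: n-D-soln}, with the only mild subtlety being consistent use of the $0\log 0 = 0$ convention on the set where either density vanishes.
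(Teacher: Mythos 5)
Your proof mirrors the paper's intended argument: the paper simply writes ``Following a similar argument as for Proposition~\ref{prop: n-D-soln},'' and that argument is precisely the one you carry out — rewriting the objective as the single integral $\int_\XCal \bigl(p_r(x)\log D(x)+p_\theta(x)\log(1-D(x))\bigr)\,dx$ and maximizing pointwise in $d=D(x)\in[0,1]$, with the boundary and $0\log0=0$ cases handled as you describe. So the approach is the same and the derivation is correct.

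One small overclaim to fix: continuity of the activation functions does \emph{not} by itself guarantee that $\bp_\theta=Law(H^G_T)$ is absolutely continuous. If $n_0<n_T$, the image $G(\ZCal)$ has Lebesgue measure zero in $\XCal\subset\br^{n_T}$, so $\bp_\theta$ is singular; even with $n_0=n_T$, a (near-)constant $G$ produces a (near-)Dirac law. The proposition's statement already refers to ${\rm Supp}(p_\theta)$ and therefore implicitly assumes the density exists, so you should take existence of $p_\theta$ as a standing hypothesis (consistent with the paper's $\PCal_D$ setup) rather than derive it from continuity. This does not affect the rest of your argument.
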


Under this biased discriminator function $\bar D$, the collective cost of all players for choosing strategy $G$ is then given by
\begin{equation}
    \label{eq: mf-cost}
    \begin{aligned}
    J(G;\bp_z,\bp_r)&=\lim_{N\to\infty}\frac{1}{N}J_i(G;\bar D,\bp_z,\bp_r)\\
    &\overset{a.s.}{=}\max_{D\in\mathcal{D}}\EE_{X\sim\bp_r}[\log D(X)]+\EE_{Y\sim\mu^G_T}[\log[1-D(Y)]],
    \end{aligned}
\end{equation}
where the last equality is also due to the law of large numbers. The game value for the mean-field game under Pareto optimality condition is then given by
\begin{equation}
    \label{eq: mf-value}
    \begin{aligned}
    L(\bp_z,\bp_r)&=\min_{G\in\mathcal{G}}J(G;\bp_z,\bp_r)\\
    &=\min_{G\in\mathcal G}\max_{D\in\mathcal{D}}\EE_{X\sim\bp_r}[\log D(X)]+\EE_{Y\sim\mu^G_T}[\log[1-D(Y)]],
    \end{aligned}
\end{equation}
subject to the state process \eqref{eq: G-proc-generic} and $\mu^G=\{\mu^G_t\}_{t=0}^T$ given by $\mu^G_t=Law(H_t^G)$ for $t=0,\dots,T$. With a similar proof as for Theorem \ref{thm: n}, we have the following result for solving the mean-field game \eqref{eq: mf-value}.
\begin{theorem}
\label{thm: mf}
The mean-field game \eqref{eq: mf-value} can be solved by any $G^{mf, *}\in\mathcal{G}^{mf,*}$ where
\begin{equation}
    \label{eq: mf-G-po}
    \mathcal G^{mf,*}=\biggl\{G\in\mathcal G: \mu^G_T=\bp_r\biggl\},
\end{equation}
provided that $\mathcal G^{mf, *}\neq\emptyset$.
\end{theorem}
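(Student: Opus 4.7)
The plan is to mirror the argument in Theorem~\ref{thm: n}, replacing empirical measures by the underlying distributions and sums by expectations via the law of large numbers, and then to reduce the minimization to the well-known identity relating the GAN objective (with optimal discriminator) to the Jensen--Shannon divergence.

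First, I would substitute the optimal biased discriminator $\bar D$ from Proposition~\ref{prop: mf-D-soln} into the collective cost~\eqref{eq: mf-cost}. On $\mathrm{Supp}(p_r)\cup\mathrm{Supp}(p_\theta)$ we have $\bar D(x)=p_r(x)/(p_r(x)+p_\theta(x))$, and outside this set at least one of $\mathbb{P}_r$ and $\mathbb{P}_\theta$ assigns zero mass so the integrand does not contribute to either expectation (here the convention $0\log 0=0$ is used, exactly as in the proof of Theorem~\ref{thm: n}). This yields
\[
J(G;\mathbb{P}_z,\mathbb{P}_r) \;=\; \int \log\!\frac{p_r(x)}{p_r(x)+p_\theta(x)}\, p_r(x)\, dx \;+\; \int \log\!\frac{p_\theta(x)}{p_r(x)+p_\theta(x)}\, p_\theta(x)\, dx.
\]

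Next, I would rewrite this quantity using the Jensen--Shannon divergence. Adding and subtracting $\log 2$ inside each logarithm, one obtains the standard identity (as already used in the discussion following~\eqref{gan-obj})
\[
J(G;\mathbb{P}_z,\mathbb{P}_r) \;=\; -\log 4 \;+\; 2\, JS(\mathbb{P}_r,\mathbb{P}_\theta),
\]
where $\mathbb{P}_\theta = \mu_T^G = \mathrm{Law}(H_T^G)$. Since the Jensen--Shannon divergence is nonnegative and vanishes if and only if the two distributions coincide, the functional $G\mapsto J(G;\mathbb{P}_z,\mathbb{P}_r)$ attains its minimum $-\log 4$ precisely on the set of admissible controls $G\in\mathcal G$ with $\mu_T^G=\mathbb{P}_r$, i.e.\ on $\mathcal G^{mf,*}$, provided this set is nonempty. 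This establishes~\eqref{eq: mf-G-po}.

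The only subtlety, and the part I would be most careful with, is the behaviour of $\bar D$ off the union of supports and the passage from the empirical to the mean-field description: one must check that the almost sure limits in~\eqref{eq: mf-D-adv} and~\eqref{eq: mf-cost} are justified uniformly enough to commute the $\lim_{N\to\infty}$ with $\max_{D\in\mathcal D}$. For the former, this is implicit in using the law of large numbers on the empirical process of i.i.d.\ players $\{H_{i,T}^G\}_{i\ge 1}$, exactly parallel to the derivation of Proposition~\ref{prop: mf-D-soln} from Proposition~\ref{prop: n-D-soln}; for the latter, the ambiguity of $\bar D$ outside $\mathrm{Supp}(p_r)\cup\mathrm{Supp}(p_\theta)$ is harmless since both expectations integrate against measures supported there. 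Beyond this point, the proof is a direct analogue of the finite-$N$ argument in Theorem~\ref{thm: n}.
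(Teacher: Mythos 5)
Your proof plan is correct and takes essentially the same route the paper intends: the paper only states ``with a similar proof as for Theorem~\ref{thm: n}'' and leaves the details implicit, and your plan is exactly that mirror argument — substitute the optimal $\bar D$ from Proposition~\ref{prop: mf-D-soln}, recognize the resulting expression as $-\log 4 + 2\,JS(\bp_r, \bp_\theta)$ (the same computation done in the preliminaries following~\eqref{gan-obj}), and conclude from nonnegativity of Jensen--Shannon that the infimum is attained exactly when $\mu_T^G = \bp_r$. Your closing remark about the off-support ambiguity of $\bar D$ being harmless and the almost-sure interchange of $\lim_N$ with $\max_D$ is a fair observation of a point the paper also glosses over; it does not indicate any gap in your argument relative to the paper's.
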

By the continuous mapping theorem, we have the following connection between the $N$-player game \eqref{eq: n-value} and the mean-field games \eqref{eq: mf-value}.
\begin{theorem}
\label{thm:convergence-D-G}
Under any given $G\in\mathcal G$ and for any $x\in\XCal$, as $N\to\infty$ and $M\to\infty$,
\[D^{N,M}(x)\overset{a.s.}{\longrightarrow}\bar D(x),\quad \mathcal{G}^{N,*}\overset{a.s.}{\longrightarrow}\mathcal{G}^{mf, *}.\]
\end{theorem}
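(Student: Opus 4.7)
The plan is to combine two classical ingredients: Varadarajan's theorem (the Glivenko--Cantelli theorem on Polish spaces) for the empirical measures, and the continuous mapping theorem applied to the ratio functional $(u,v)\mapsto u/(u+v)$ that both $D^{N,M}$ and $\bar D$ are built from.

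First I would establish the a.s.\ weak convergence of the two empirical measures. Since $X_1,\ldots,X_M$ are i.i.d.\ with law $\bp_r$, Varadarajan's theorem gives $\mu_r^M\Rightarrow\bp_r$ almost surely as $M\to\infty$. For a fixed $G\in\mathcal G$, equation \eqref{eq: G-proc-ind} shows that $H_{1,T}^G,\ldots,H_{N,T}^G$ are i.i.d.\ with common law $\bp_\theta=\mathrm{Law}(H_T^G)$, so the same theorem yields $\mu_G^N\Rightarrow\bp_\theta$ almost surely as $N\to\infty$. Under the standing assumption that $\bp_z$ is absolutely continuous and that the activation functions $\{\sigma_t\}$ are continuous, the pushforward $\bp_\theta$ is absolutely continuous with a density $p_\theta$, matching the hypotheses of Proposition \ref{prop: mf-D-soln}.

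Next, for the discriminator convergence $D^{N,M}(x)\overset{a.s.}{\to}\bar D(x)$, I would write, on the common support of $p_r$ and $p_\theta$,
\[
\bar D(x)=\Phi(p_r(x),p_\theta(x)),\qquad D^{N,M}(x)=\Phi(\mu_r^M(x),\mu_G^N(x)),
\]
where $\Phi(u,v):=u/(u+v)$ is continuous on $\{u+v>0\}$. Combining the a.s.\ weak convergence of $\mu_r^M$ and $\mu_G^N$ to their absolutely continuous limits with the continuous mapping theorem produces the pointwise limit. Off the common support both quantities are free in $[0,1]$ by Propositions \ref{prop: n-D-soln} and \ref{prop: mf-D-soln}, so the statement is vacuous there. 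For the set convergence $\mathcal{G}^{N,*}\overset{a.s.}{\to}\mathcal{G}^{mf,*}$, I would use the explicit descriptions $\mathcal{G}^{N,*}=\{G\in\mathcal G:\mu_G^N=\mu_r^M\}$ and $\mathcal{G}^{mf,*}=\{G\in\mathcal G:\mu_T^G=\bp_r\}$: if $G^{N,*}\in\mathcal{G}^{N,*}$ and $G^*$ is any cluster point of $\{G^{N,*}\}$ in a topology making $G\mapsto\mu_T^G$ weakly continuous, then the identities $\mu_{G^{N,*}}^N=\mu_r^M\Rightarrow\bp_r$ and $\mu_{G^{N,*}}^N\Rightarrow\mu_T^{G^*}$, together with uniqueness of weak limits, give $\mu_T^{G^*}=\bp_r$, i.e.\ $G^*\in\mathcal{G}^{mf,*}$. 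The reverse inclusion follows by the same Varadarajan argument applied to samples drawn from $\mu_T^G$ for any $G\in\mathcal{G}^{mf,*}$.

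The main obstacle is making the pointwise convergence of $D^{N,M}(x)$ rigorous for arbitrary $x\in\XCal$, since $\mu_r^M$ and $\mu_G^N$ are purely atomic while their limits admit continuous densities. A clean fix is to regularize by convolution against a kernel with vanishing bandwidth, so that the smoothed evaluations $\mu_r^M(x)$ and $\mu_G^N(x)$ converge a.s.\ to $p_r(x)$ and $p_\theta(x)$, at which point the continuity of $\Phi$ closes the argument; alternatively, one may reinterpret the convergence in a weak-functional sense rather than literally pointwise. A secondary delicate point is choosing the correct topology on $\mathcal G$ that ensures $G\mapsto\mu_T^G$ is continuous so that the cluster-point argument for the set convergence goes through; the finite-parameter neural network parameterization should make this straightforward via continuity of compositions of continuous activations.
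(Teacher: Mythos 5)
The paper provides essentially no proof of this theorem: it is stated immediately after the single sentence ``By the continuous mapping theorem, we have the following connection\dots''. Your proposal reconstructs what that sentence is pointing at — a.s.\ weak convergence of the empirical measures $\mu_r^M\Rightarrow\bp_r$ and $\mu_G^N\Rightarrow\bp_\theta$ (Varadarajan), followed by continuity of $(u,v)\mapsto u/(u+v)$ — so in spirit you are following the paper's intended route. The added value of your write-up is that it makes explicit the obstacle that the paper silently glosses over.

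That obstacle is real, and it is a gap in the statement itself, not merely in your attempt. For any \emph{fixed} $x\in\XCal$ and absolutely continuous $\bp_r$, $\bp_\theta$, we have $\mu_r^M(x)=0$ and $\mu_G^N(x)=0$ almost surely for every $M$ and $N$: the probability that any sampled point coincides exactly with $x$ is zero. Hence $x$ falls in the ``can take any value in $[0,1]$'' branch of Proposition~\ref{prop: n-D-soln}, and the ratio $\mu_r^M(x)/(\mu_r^M(x)+\mu_G^N(x))$ is $0/0$; pointwise a.s.\ convergence to $\bar D(x)$ cannot hold literally. Even on the sampled points the formula gives generically degenerate values ($D^{N,M}(X_j)=1$, $D^{N,M}(H_{i,T}^G)=0$), not $p_r/(p_r+p_\theta)$. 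Your proposed kernel-smoothing fix is the standard cure, but it is a nontrivial modification (a bandwidth $h_{N,M}\to0$ with a rate condition, and suitable regularity of $p_r,p_\theta$) that would have to be stated and carried out rather than gestured at; as written it remains a sketch. Likewise, the ``a.s.\ convergence of sets'' $\mathcal G^{N,*}\to\mathcal G^{mf,*}$ is left undefined in the paper; your cluster-point argument presupposes a topology on $\mathcal G$ and a precise mode of set convergence (e.g.\ Painlev\'e--Kuratowski), which you would need to fix before the argument could be checked, and the reverse inclusion (every $G\in\mathcal G^{mf,*}$ arises as a limit of elements of $\mathcal G^{N,*}$) also requires $\mathcal G^{N,*}$ to be nonempty for all large $N,M$, which is an additional assumption. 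In short: you correctly identify the intended mechanism and, more usefully, you correctly identify that the continuous-mapping step does not go through in the stated pointwise form without an extra regularization or reinterpretation step — a point the paper does not acknowledge.
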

Notice that under a neural network structure specified by \eqref{eq: NN}, the weights $w^*=\{w^*_t\}_{t=0}^T$ and biases $b^*=\{b^*_t\}_{t=0}^T$ given by $G^{mf,*}=(w^*,b^*)\in\mathcal{G}^{mf,*}$ coincides with the training objective of the vanilla GANs structure proposed in \cite{Goodfellow2014}. Therefore, Theorem \ref{thm: gan2mfg} follows directly from Theorem \ref{thm:convergence-D-G}. 

\begin{remark}
\begin{enumerate}
    \item The $N$-player cooperative game corresponds to the training of GANs in practice, where the training is over a finite dataset $\left\{\{Z_i\}_{i=1}^N,\{X_j\}_{j=1}^M\right\}$, whereas the mean-field game corresponds to the theoretical frame of GANs where both $\bp_z$ and $\bp_r$ can be perfectly simulated on an infinite supply of data. The non-emptiness of $\mathcal{G}^{N,*}$ and $\mathcal{G}^{mf, *}$ will rely on the design of the generator network architecture.
    \item Mean-field games under Pareto Optimality condition are often referred to as the mean-field control problems, see for instance \cite{Bensoussan2013,Carmonaa}.
\end{enumerate}
\end{remark}}

\section{GANs and optimal transport}\label{sec:gan-ot}

In this section, we will show that, from the perspective of a dynamical version of optimal transport, GANs are optimal transport problems from the known prior distribution $\bp_z$ of the latent variable $Z\in\ZCal$ to the unknown true distribution $\bp_r$ over the sample space $\XCal$. 

Intuitively, this connection between the optimal transport problem and GANs is as follows. In the optimal transport problem  \cite{benamou2000computational},  the mass transport from a source distribution to a target distribution is seen as the evolution of a density field from the source density function to the target density function according to a given differential equation characterized by a carefully chosen velocity field. In the context of GANs, the evolution of the density flow is cast as the forward pass within the generator network \(G\), as specified in \eqref{eq: G-proc-generic}. 
Let \(\{\mu_t^G=Law(H_t^G)\}_{t=0}^T\) denote the flow of probability measures generated by this dynamic, with initial distribution \(\mu_0^G=\bp_z\). Then the goal of GANs is to find an optimal generator network \(G^*\), which is also an optimal transport map, such that both the initial condition \(\mu_0^{G^*}=\bp_z\) and the terminal condition \(\mu_T^{G^*}=\bp_r\) hold. Note that such $G^*$'s may not be unique, as pointed out in \cite{benamou2000computational}. Nevertheless, with a proper choice of transport cost, i.e., the loss function in GANs, one particular transport map $G^*$ can be determined. In fact, this optimal transport viewpoint of GANs can be explicitly constructed  in the case of WGANs.

To establish this connection rigorously, 
recall that the latent variable $Z\in\ZCal$ with $Z\sim\bp_z$ and the sample space $\XCal$ is a metric space with distance function $l:\XCal\times\XCal\to\br^+$. Fix an arbitrary $x_0\in\XCal$ and denote
\begin{equation}
    \label{eq: adm-G}
    \mathcal{G}(l,x_0)=\biggl\{G:\ZCal\to\XCal:G\text{ continuous and }Law(G(Z))\in L^1(\XCal;x_0)\biggl\}.
\end{equation}
Assume $\bp_r\in L^1(\XCal;x_0)$ and denote by $\Pi(\bp_z,\bp_r)$ the set of all possible couplings of $\bp_z$ and $\bp_r$. For any fixed $G\in\mathcal{G}(l,x_0)$, define the following transport cost function $c_G:\ZCal\times\XCal\to\RR^+$ such that
\begin{equation}
    \label{eq: c-G}
    c_G(z,x)=l\left(G(z),x\right),\quad \forall z\in\ZCal,\,x\in\XCal.
\end{equation}
Accordingly, we have the following optimal transport problem from $\bp_z$ to $\bp_r$:
\begin{equation}
    \label{eq: ot-g}\tag{OT-G}
    \WCal_{c_G}(\bp_z,\bp_r)=\inf_{\pi\in\Pi(\bp_z,\bp_r)}\int_{\ZCal\times\XCal}c_G(z,x)\pi(dz,dx).
\end{equation}
Its corresponding dual problem is then given by
\begin{equation}
    \label{eq: dual-G}
    \begin{aligned}
    &D_{c_G}(\bp_z,\bp_r)=\\
    &\sup_{\psi\in L^1(\ZCal,\bp_z),\phi\in L^1(\XCal,\bp_r)}\biggl\{\int_{\XCal}\phi(x)\bp_r(dx)-\int_{\ZCal}\psi(z)\bp_z(dz):\phi(x)-\psi(z)\leq c_G(z,x),\,\forall(z,x)\in\ZCal\times\XCal\biggl\}.
    \end{aligned}
\end{equation}
Analogously to \cite[Definitions 5.2 \& 5.7]{villani2008optimal}, we have the following definition tailored to $c_G$.
\begin{definition}[$c_G$-convexity and $c_G$-concavity]\label{def: c-g-conv-conc}
A function $\psi:\ZCal\to\br\cup\{+\infty\}$ is $c_G$-convex if it is not constantly $+\infty$ and there exists a function $\zeta:\XCal\to\br\cup\{+\infty\}$ such that 
\[\psi(z)=\sup_{x\in\XCal}\biggl[\zeta(x)-c_G(z,x)\biggl],\quad\forall z\in\ZCal;\]
its corresponding $c_G$-transform is 
\[\psi^c(x)=\inf_{z\in\ZCal}\biggl[\psi(z)+c_G(z,x)\biggl],\quad \forall x\in\XCal.\]
A function $\phi:\XCal\to\br\cup\{+\infty\}$ is $c_G$-concave if it is not constantly $+\infty$ and there exists a $c_G$-convex function $\psi$ such that $\phi=\psi^c$; its $c_G$-transform is given by
\[\phi^c(z)=\sup_{x\in\XCal}\biggl[\phi(x)-c_G(z,x)\biggl].\]
\end{definition}
Now, denote
\begin{equation}
    \label{eq: 1-lip}
    Lip^1(\XCal,l)=\biggl\{f:\XCal\to\br:f(x)-f(y)\leq l(y,x),\,\forall(x,y)\in\XCal\times\XCal\biggl\}.
\end{equation}

Note that $Lip^1(\XCal,l)$ is equivalent to the set of 1-Lipschitz functions from $\XCal$ to $\br$ with respect to metric $l$. Moreover, we have
\begin{lemma}
\label{lem: lip-conc}
Suppose that $G(\ZCal):=\{G(z):\forall z\in\ZCal\}=\XCal$. Then $Lip^1(\XCal,l)$ coincides with the set of $c_G$-concave functions.
\end{lemma}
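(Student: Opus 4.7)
The plan is to verify the two inclusions separately. The transport cost $c_G(z,x) = l(G(z), x)$ inherits a one-sided Lipschitz property in $x$ from the triangle inequality, so any infimum in $z$ of a quantity of the form $\psi(z) + c_G(z,\cdot)$ will be automatically $1$-Lipschitz; this handles the easy direction. The reverse direction is where the surjectivity assumption $G(\ZCal) = \XCal$ plays its essential role: it lets us convert an infimum over $\ZCal$ into an infimum over $\XCal$ by the change of variable $y = G(z)$.

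First I would show that every $c_G$-concave function lies in $Lip^1(\XCal, l)$. Given $\phi = \psi^c$ for some $c_G$-convex $\psi: \ZCal \to \br \cup \{+\infty\}$, the triangle inequality gives $l(G(z), x) \leq l(G(z), y) + l(y, x)$ for all $z \in \ZCal$ and $x, y \in \XCal$. Adding $\psi(z)$ and taking the infimum over $z$ on both sides yields $\phi(x) \leq \phi(y) + l(y, x)$, which is exactly the $Lip^1(\XCal, l)$ condition.

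For the reverse inclusion, given $\phi \in Lip^1(\XCal, l)$, I would set $\psi := \phi \circ G$, i.e., $\psi(z) = \phi(G(z))$, and show (a) $\psi$ is $c_G$-convex and (b) $\psi^c = \phi$. For (a), taking $\zeta = \phi$ in Definition~\ref{def: c-g-conv-conc}, the $1$-Lipschitz property gives $\phi(x) - l(G(z),x) \leq \phi(G(z))$ with equality attained at $x = G(z)$, so $\sup_{x \in \XCal}[\phi(x) - c_G(z,x)] = \phi(G(z)) = \psi(z)$, and $\psi$ is finite everywhere since $\phi$ is $\br$-valued. For (b), I compute
\[
\psi^c(x) = \inf_{z \in \ZCal}\bigl[\phi(G(z)) + l(G(z), x)\bigr],
\]
and here I invoke surjectivity: since $G(\ZCal) = \XCal$, the substitution $y = G(z)$ rewrites this as $\inf_{y \in \XCal}[\phi(y) + l(y, x)]$. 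The $1$-Lipschitz bound gives $\phi(x) \leq \phi(y) + l(y,x)$, so this infimum is at least $\phi(x)$; choosing $y = x$ (which lies in $G(\ZCal)$) shows the value $\phi(x)$ is attained, so $\psi^c(x) = \phi(x)$.

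The main obstacle is a conceptual rather than technical one: making sure the surjectivity assumption is used in exactly the right place. Without $G(\ZCal) = \XCal$, the construction in step (b) would only recover $\phi$ on the image $G(\ZCal)$, and the infimum over $\ZCal$ could strictly exceed $\phi(x)$ for $x \notin G(\ZCal)$, breaking the equality $\psi^c = \phi$. One should also briefly verify that the constructed $\psi$ is not identically $+\infty$, which is immediate from $\phi$ being real-valued on the nonempty set $\XCal$.
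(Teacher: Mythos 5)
Your proof is correct, and for the substantive direction (every $\phi \in Lip^1(\XCal,l)$ is $c_G$-concave) it coincides with the paper's: both set $\psi = \phi \circ G$, verify $c_G$-convexity, and use surjectivity to show $\psi^c = \phi$. For the other direction (every $c_G$-concave function is $1$-Lipschitz) you give a clean direct argument — add $\psi(z)$ to the triangle inequality $l(G(z),x) \le l(G(z),y) + l(y,x)$ and take infima — whereas the paper argues by contradiction, choosing an $\epsilon$-approximate minimizer $z(\epsilon)$ and deriving $\phi_0(x) - \phi_0(y) < l(y,x) + \epsilon$. The underlying mechanism is identical (the triangle inequality applied to $c_G$), so this is a stylistic rather than structural difference; your direct version is arguably tidier and avoids the $\epsilon$-bookkeeping. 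You also correctly flag that $\psi = \phi \circ G$ must not be identically $+\infty$, a point the paper handles implicitly via the finiteness of $\phi_0$.
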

\begin{proof}
We first prove that $\phi\in Lip^1(\XCal,l)$ is included in the set of $c_G$-concave functions. Let $\phi\in Lip^1(\XCal,l)$. Then for any $(z,x)\in\ZCal\times\XCal$, $\phi(x)<+\infty$ and
\[\begin{aligned}
\phi(x)-c_G(z,x)&=\phi(G(z))+\phi(x)-\phi(G(z))-c_G(z,x)\\
&=\phi(G(z))+\phi(x)-\phi(G(z))-l(G(z),x)\\
&\leq\phi(G(z))\\
&=\phi(G(z))-c_G(z,G(z)).
\end{aligned}\]
Note that there is equality if $x = G(z)$. Therefore, $\phi\circ G:\ZCal\to\br$ is $c_G$-convex. Since $G(\ZCal)=\XCal$, for any $x\in\XCal$, there exists $z(x)\in\ZCal$ such that $G(z(x))=x$. For any $(z,x)\in\ZCal\times\XCal$,
\[\begin{aligned}
\phi(G(z))+c_G(z,x)&=\phi(G(z))+l(G(z),x)\\
&=\phi(x)-\left[\phi(x)-\phi(G(z))-l(G(z),x)\right]\\
&\geq \phi(x)=\phi(G(z(x))).
\end{aligned}\]
Therefore, $\phi=\biggl(\phi\circ G\biggl)^c$ and hence $\phi$ is $c_G$-concave.

We now prove the converse inclusion. For the sake of contradiction, suppose there exists a $c_G$-concave function $\phi_0$ that is not 1-Lipschitz. Due to $c_G$-concavity, there exists a $c_G$-convex function $\psi_0:\ZCal\to\br\cup\{+\infty\}$ such that $\phi_0=\psi_0^c$. Since $\psi_0$ is not constantly $+\infty$, given the form of $\psi_0^c$ and $c_G$, $\phi_0$ is finite everywhere. Hence, $\phi_0$ not being 1-Lipschitz implies that there exists $x,y\in\XCal$ such that $\phi_0(x)-\phi_0(y)>l(y,x)$. We have 
\[\begin{aligned}
&\phi_0(x)=\inf_{z\in\ZCal}\biggl[\psi_0(z)+c_G(z,x)\biggl],\\
&\phi_0(y)=\inf_{z\in\ZCal}\biggl[\psi_0(z)+c_G(z,y)\biggl].
\end{aligned}\]
For any $\epsilon>0$, there exists $z(\epsilon)\in\ZCal$ such that 
\[\phi_0(y)>\psi_0(z(\epsilon))+c_G(z(\epsilon),y)-\epsilon.\]
Then
\begin{equation}\label{eq: contrad-e}
\begin{aligned}
\phi_0(x)-\phi_0(y)&<\phi_0(x)-\psi_0(z(\epsilon))-c_G(z(\epsilon),y)+\epsilon\\
&\leq \psi_0(z(\epsilon))+c_G(z(\epsilon),x)-\psi_0(z(\epsilon))-c_G(z(\epsilon),y)+\epsilon\\
&=l(G(z(\epsilon)),x)-l(G(z(\epsilon)),y)+\epsilon\\
&\leq l(y,x)+\epsilon,
\end{aligned}
\end{equation}
where the equality is by definition of $c_G$. 
Since \eqref{eq: contrad-e} holds for any $\epsilon>0$, we arrive at a contradiction that $\phi_0(x)-\phi_0(y)\leq l(x,y)$.

Therefore, a function $\phi:\XCal\to\br\cup\{+\infty\}$ is $c_G$-concave if and only if it is 1-Lipschitz with respect to metric $l$.
\end{proof}
 Note that an informal form of the lemma can be found in \cite{villani2008optimal} regarding the equivalence between the set of 1-Lipschitz functions and the set of $c$-concave functions  when computing Wasserstein-1 distance between two distributions in $\mathcal{P}(\XCal)$. Here we establish rigorously this equivalence for optimal transport problem from  $\bp_z\in\mathcal{P}(\ZCal)$ to $\bp_r\in\mathcal{P}(\XCal)$.



This lemma,  the foundational piece of a duality theory for the optimal transport problem \eqref{eq: ot-g}, enables us to connect GANs and optimal transport from the latent distribution $\bp_z$ to the true distribution $\bp_r$. 
The duality is explicit in the case of WGANs, and can be utilized to cover a broad class of GANs including the relaxed Wasserstein GANs \cite{guo2017relaxed}, in which  the metric function is replaced with the Bregman divergence.

\begin{theorem}\label{thm: wgan-ot}
Suppose that $\PP_r\in L^1(\XCal;x_0)$ for some $x_0\in\XCal$.
WGAN is equivalent to the following optimization problem,
\[\min_{G\in\{G\in\mathcal G(l,x_0):G(\ZCal)=\XCal\}}\WCal_{c_G(\bp_z,\bp_r)},\]
where $\WCal_{c_G}(\bp_z,\bp_r)$ is given by \eqref{eq: ot-g}.
\end{theorem}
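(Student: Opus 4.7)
The plan is to establish, for every $G$ in the admissible class $\{G\in\mathcal{G}(l,x_0): G(\ZCal)=\XCal\}$, the pointwise identity
\[
\WCal_{c_G}(\bp_z,\bp_r) \;=\; W_1\bigl(\bp_r,\,Law(G(Z))\bigr).
\]
Since WGAN is, by its definition recalled after Lemma \ref{thm:kr-dual}, the minimization of $W_1(\bp_r, Law(G(Z)))$ over admissible generators $G$, this identity immediately converts WGAN into $\min_G \WCal_{c_G}(\bp_z,\bp_r)$ over the same class, which is the claim. The whole content of the theorem therefore lies in this fixed-$G$ identity.

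I will match the two sides via their Kantorovich--Rubinstein dual formulations. Applying Lemma \ref{thm:kr-dual} (in its two-space version, already used implicitly in \eqref{eq: dual-G}) to \eqref{eq: ot-g} with $\mu=\bp_z$, $\nu=\bp_r$ and cost $c_G$ yields
\[
\WCal_{c_G}(\bp_z,\bp_r) = \sup_{\psi}\biggl\{\int_\XCal \psi^c(x)\,\bp_r(dx) - \int_\ZCal \psi(z)\,\bp_z(dz)\biggr\},
\]
where $\psi$ ranges over $c_G$-convex functions on $\ZCal$. On the WGAN side, the classical Kantorovich--Rubinstein formula for $W_1$ gives
\[
W_1(\bp_r,Law(G(Z))) = \sup_{f\in Lip^1(\XCal,l)}\biggl\{\int_\XCal f(x)\,\bp_r(dx) - \int_\ZCal f(G(z))\,\bp_z(dz)\biggr\}.
\]
Lemma \ref{lem: lip-conc}, which relies exactly on the hypothesis $G(\ZCal)=\XCal$, identifies $Lip^1(\XCal,l)$ with the set of $c_G$-concave functions, so the two suprema range over essentially the same class of test functions, differing only in whether the parameter lives on $\XCal$ (as $f$) or on $\ZCal$ (as $\psi$).

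To conclude I will verify both inequalities. For $W_1 \le \WCal_{c_G}$: given $f\in Lip^1(\XCal,l)$, set $\psi:=f\circ G$. The computation inside the proof of Lemma \ref{lem: lip-conc} already shows that $\psi$ is $c_G$-convex and, using the surjectivity of $G$, that $\psi^c=f$, so the WGAN objective at $f$ equals the $\WCal_{c_G}$ objective at $\psi$. Conversely, for $\WCal_{c_G} \le W_1$: given a $c_G$-convex $\psi$, set $f:=\psi^c$, which lies in $Lip^1(\XCal,l)$ by Lemma \ref{lem: lip-conc}; the elementary bound $\psi^c(G(z)) \le \psi(z) + c_G(z,G(z)) = \psi(z)$, immediate from the definition of the $c_G$-transform and from $l(G(z),G(z))=0$, yields $\int \psi^c(G(z))\,\bp_z(dz) \le \int \psi\,d\bp_z$, so the $\WCal_{c_G}$ objective at $\psi$ is dominated by $\int \psi^c\,d\bp_r - \int \psi^c\circ G\,d\bp_z$, which is exactly the WGAN objective at $f=\psi^c$. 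The main subtlety I expect is the identity $(f\circ G)^c=f$ used in the first direction: the bound ``$\ge f$'' uses only 1-Lipschitzness, but equality requires the infimum defining the $c_G$-transform to be attained at some $z$ with $G(z)=x$, and this is precisely where the surjectivity hypothesis $G(\ZCal)=\XCal$ is indispensable.
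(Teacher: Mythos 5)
Your proof is correct and follows essentially the same route as the paper's: apply Kantorovich duality to $\WCal_{c_G}(\bp_z,\bp_r)$, use Lemma~\ref{lem: lip-conc} to identify the $c_G$-concave functions with $Lip^1(\XCal,l)$, and recognize the resulting expression as the WGAN objective. You are somewhat more careful than the paper in one respect: you spell out the identification $\phi^c=\phi\circ G$ for $\phi\in Lip^1(\XCal,l)$ (and its converse direction via $\psi\mapsto\psi^c$), whereas the paper folds this into a single invocation of Theorem~5.10 of \cite{villani2008optimal} together with Lemma~\ref{lem: lip-conc}. On the other hand, you are somewhat less careful in a different respect: the paper explicitly verifies the lower bound $c_G(z,x)\geq l(G(z),x_0)-l(x,x_0)=a(z)+b(x)$ with $a\in L^1(\ZCal,\bp_z)$ (from $G\in\mathcal{G}(l,x_0)$) and $b\in L^1(\XCal,\bp_r)$ (from $\bp_r\in L^1(\XCal;x_0)$), which is the hypothesis under which Villani's Theorem~5.10 actually applies; you invoke the duality without checking this, so you should add that line to license the ``two-space'' Kantorovich--Rubinstein formula. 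Both gaps are minor; neither affects the correctness or the overall structure of the argument.
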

\begin{proof}
For any $G\in\{G\in\mathcal G(l,x_0):G(\ZCal)=\XCal\}$, $c_G(z,x)\geq l(G(z),x_0)-l(x,x_0)=a(z)+b(x)$ for any $(z,x)\in\ZCal\times\XCal$. Since $a(\cdot)=l(G(\cdot),x_0)\in L^1(\ZCal,\bp_z)$ and $b(\cdot)=l(\cdot,x_0)\in L^1(\XCal,\bp_r)$, by \cite[Theorem 5.10]{villani2008optimal} and Lemma \ref{lem: lip-conc},
\[\begin{aligned}
    \WCal_{c_G}(\bp_z,\bp_r)&=D_{c_G}(\bp_z,\bp_r)\\
    &=\max_{D\in Lip^1(\XCal,l)}\mathbb{E}_{X\sim\bp_r}D(X)-\mathbb{E}_{Z\sim\bp_z}D\circ G(Z).
\end{aligned}\]
Notice the form of $c_G$ in \eqref{eq: c-G} that $c_G(z,x)=l(G(z),x)$ for any $(z,x)\in\ZCal\times\XCal$, we have 
\[\argmin_{(x,y)\in\XCal\times\XCal}l(x,y)=\{(x,y):x=y\}.\]
If there exists $\pi^*\in\Pi(\bp_z,\bp_r)$ such that $1 = \pi^*(\ZCal\times\XCal)=\pi^*(\{(z,x):G^*(z)=x\})$ for some $G^*\in\mathcal{G}(l,x_0)$ with $G^*(\ZCal)=\XCal$, then
\[\int_{\ZCal\times\XCal}c_{G^*}(z,x)\pi^*(dz,dx)=\min_{G\in\{G\in\mathcal G(l,x_0):G(\ZCal)=\XCal\}}\WCal_{c_G}(\bp_z,\bp_r).\]
Furthermore, for any $B_x\subset\XCal$, 
\[\bp_r(B_x)=\int_{\ZCal\times B_x}\pi^*(dz,dx)=\int_{(G^*)^{-1}(B_x)\times B_x}\pi^*(dz,dx)=\int_{(G^*)^{-1}(B_x)\times\XCal}\pi^*(dz,dx)=G^*\#\bp_z(B_x),\]
i.e. $Law(G(Z))=\bp_r$ where $Z\sim\bp_z$.
\end{proof}
To extend this explicit connection between WGANs and the  optimal transport to other variations such as relaxed Wasserstein GANs,   consider a quasi-metric $\bar l:\XCal\times\XCal\to\br$ instead of a metric function $l$ such that
\begin{enumerate}
    \item $\bar l(x,y)\geq0$ and the equality holds if and only if $x=y$;
    \item $\bar l(x,y)\leq \bar l(x,z)+\bar l(z,y)$ for any $x,y,z\in\XCal$.
\end{enumerate}
That is, $\bar l$ relaxes the symmetric property of $l$. Then, for a fixed $x_0\in\XCal$, define 
\[\mathcal{G}(\bar l,x_0)=\biggl\{G:\ZCal\to\XCal:\,G\text{ continuous and }\int_{\ZCal}\bar l(G(z),x)\bp_z(dz)<\infty\biggl\}.\]
For any $G\in\mathcal{G}(\bar l,x_0)$, let $\bar c_{G}:\ZCal\times\XCal\to\br$ such that $\bar c_G(z,x)=\bar l(G(Z),x)$ for any $(x,z)\in \ZCal\times\XCal$.
\begin{theorem}
\label{thm: wgan-ot-2}
Suppose that $\bp_r$ satisfies $\int_\XCal\bar l(x_0,x)\bp_r(dx)<\infty$.
\begin{enumerate}
    \item For any $G\in\mathcal{G}(\bar l,x_0)$ such that $G(\ZCal)=\XCal$, a function $\phi$ is $\bar c_G$-concave if and only if 
    \[\phi\in Lip^1(\XCal,\bar l)=\biggl\{f:\XCal\to\br:\, f(x)-f(y)\leq \bar l(y,x),\,\forall (x,y)\in\XCal\times\XCal\biggl\}.\]
    \item The following equivalent condition holds
    \begin{equation*}
        \min_{G\in\{G\in\mathcal{G}(\bar l, x_0): G(\ZCal)=\XCal\}}\max_{D\in Lip^1(\XCal,\bar l)}\mathbb{E}_{\bp_r}D(X)-\mathbb{E}_{\bp_z}D\circ G(Z)\Leftrightarrow\min_{G\in\{G\in\mathcal{G}(\bar l, x_0): G(\ZCal)=\XCal\}}\WCal_{\bar c_G}(\bp_z,\bp_r).
    \end{equation*}
\end{enumerate}
\end{theorem}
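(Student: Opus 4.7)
My approach is to mirror, \emph{mutatis mutandis}, the proof architecture used to establish Lemma \ref{lem: lip-conc} and Theorem \ref{thm: wgan-ot}, checking at each step that only those properties of $\bar l$ that are preserved in the quasi-metric setting (non-negativity with vanishing on the diagonal, and the triangle inequality) are ever invoked. Symmetry of $\bar l$ is never needed once one is careful to track the argument order in $\bar l(\cdot,\cdot)$ and $\bar c_G$.

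For Part 1, I plan to first show that any $\phi \in Lip^1(\XCal,\bar l)$ is $\bar c_G$-concave. Fixing $z \in \ZCal$, the Lipschitz bound $\phi(x) - \phi(G(z)) \le \bar l(G(z),x) = \bar c_G(z,x)$ gives $\sup_{x \in \XCal}[\phi(x) - \bar c_G(z,x)] = \phi(G(z))$, with equality attained at $x=G(z)$, so $\phi \circ G$ is $\bar c_G$-convex. Surjectivity $G(\ZCal)=\XCal$ then lets me identify $\phi = (\phi \circ G)^c$: for any $x \in \XCal$ pick $z(x)$ with $G(z(x))=x$, and the same Lipschitz inequality forces the infimum defining $(\phi\circ G)^c(x)$ to be attained at $z(x)$. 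For the converse, I argue by contradiction: if some $\bar c_G$-concave $\phi_0 = \psi_0^c$ fails to be 1-Lipschitz, pick $x, y$ with $\phi_0(x)-\phi_0(y) > \bar l(y,x)$ and, for each $\epsilon>0$, choose $z(\epsilon)$ realising $\phi_0(y) > \psi_0(z(\epsilon)) + \bar c_G(z(\epsilon), y) - \epsilon$. Combining with $\phi_0(x) \le \psi_0(z(\epsilon)) + \bar c_G(z(\epsilon), x)$ and using only the quasi-metric triangle inequality $\bar l(G(z(\epsilon)),x) \le \bar l(G(z(\epsilon)),y) + \bar l(y,x)$ contradicts the choice of $(x,y)$ as $\epsilon \downarrow 0$.

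For Part 2, I first verify that $\bar c_G$ meets the hypotheses of Kantorovich-Rubinstein duality (Lemma \ref{thm:kr-dual}): lower semi-continuity of $\bar c_G$ follows from continuity of $G$ and lower semi-continuity of $\bar l$, and the triangle inequality supplies $\bar c_G(z,x) \ge \bar l(G(z), x_0) - \bar l(x, x_0) = a(z) + b(x)$ with $a \in L^1(\ZCal,\bp_z)$ by definition of $\mathcal{G}(\bar l, x_0)$ and $b \in L^1(\XCal, \bp_r)$ by the stated moment hypothesis on $\bp_r$. The duality then expresses $\WCal_{\bar c_G}(\bp_z, \bp_r)$ as a supremum over $\bar c_G$-convex $\psi$. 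Using the standard inequality $(\psi^c)^c \le \psi$, it is enough to restrict to $\psi = D^c$ with $D = \psi^c$; by Part 1 such $D$ ranges over $Lip^1(\XCal, \bar l)$, and the identity $D^c(z) = D(G(z))$ (which is exactly the computation carried out in the forward direction of Part 1, using $G(\ZCal)=\XCal$) yields
\[
\WCal_{\bar c_G}(\bp_z, \bp_r) = \max_{D \in Lip^1(\XCal, \bar l)} \mathbb{E}_{\bp_r}[D(X)] - \mathbb{E}_{\bp_z}[D \circ G(Z)].
\]
Taking the infimum over admissible $G$ on both sides delivers the claimed equivalence.

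The main obstacle will be bookkeeping rather than a new idea: one must be vigilant about argument order in $\bar l(\cdot,\cdot)$ throughout (since symmetry is unavailable), and must check that the one-sided lower bound $\bar c_G \ge a+b$ produced by the quasi-metric triangle inequality is still strong enough for the duality of \cite{villani2008optimal} to apply. No analytical ingredient beyond those already used in Lemma \ref{lem: lip-conc} and Theorem \ref{thm: wgan-ot} is required.
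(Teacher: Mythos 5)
Your proposal is correct and takes essentially the same approach as the paper, which itself gives no proof beyond the one-line remark that the result "follows the same arguments" as Lemma~\ref{lem: lip-conc} and Theorem~\ref{thm: wgan-ot}; you have simply carried out that unrolling with the requisite care about argument order in $\bar l$. One small caveat worth noting: the lower bound you form, $\bar c_G(z,x)\ge \bar l(G(z),x_0)-\bar l(x,x_0)$, requires integrability of $\bar l(\cdot,x_0)$ against $\bp_r$, whereas the theorem's stated hypothesis gives $\int_\XCal \bar l(x_0,x)\bp_r(dx)<\infty$ (opposite order), so either the hypothesis or your choice of triangle-inequality decomposition needs its argument order flipped to align --- an inconsistency already present in the paper's own statement of $\mathcal G(\bar l,x_0)$ and the moment condition, not a defect introduced by your argument.
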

The proof of Theorem \ref{thm: wgan-ot-2} follows the same arguments in the proofs of Lemma \ref{lem: lip-conc} and Theorem \ref{thm: wgan-ot}.

\section{Computing MFGs via GANs}\label{sec:num}
In Section \ref{sec:gan2mfg},  GANs are conceptually  interpreted as MFGs among players whose state process is captured by the forward pass within the generator network. In this section, we will show the inverse interpretation of  MFGs  as GANs, from which a new computational approach for solving MFGs is proposed and tested through several examples.

\subsection{MFGs as GANs}\label{subsec:mfg2gan}
\paragraph{MFGs recast as GANs.}
 First, we claim that
 {\em MFGs can be recast as GANs.}

To see this, recall from Section \ref{sec:prelim} that in classical GANs, the generator $G$ mimics the sample data to generate new ones in order to minimize the difference between the true distribution $\bp_r$ and $\PP_{\theta}$. The discriminator $D$ measures the performance of the generator by some divergence between $\bp_\theta$ and $\bp_r$. Meanwhile, MFGs can be recast in this context such that
\begin{itemize}
 \item The latent space $\ZCal=\br^d$ and sample $x$ of latent variable $Z$ are drawn from the probability distribution $\PP_z=\mu^0$.
 \item The value function $u$ maps the element $x$ into $\br$ so that it gives the optimal cost and its gradient dictates the optimal strategy in the equilibrium state of the MFGs. We can then define a loss function 
 \[L_{Val}(u,m)=L_{HJB}(u,m)+\beta_{Val}L_{term}(u,m),\]
 where $\beta_{Val}>0$ denotes the weight on the penalty of the terminal condition, with
\[L_{HJB}(u,m)=\int_0^T\int_{\br^d}\left|\partial_t u(s,x)+F\left(s,x,m(s,x),\nabla_xu(s,x), \Delta_xu(s,x)\right)\right|^2\mu(dx,ds),\]
 and
\[
    L_{term}(u,m)=\int_{\br^d}|u(T,x)-g(x, m(T,x))|^2\mu^{term}(dx).
\]
 \item The equilibrium state of the MFGs,  parallel to the true distribution $\bp_r$ in GANs, is characterized through a {\em consistency condition} between the value function and the controlled dynamics.
 The mean-field term $m$ attempts to approximate the equilibrium state process \eqref{eq: state-evol} under the optimal control given by the value function. Its loss, in place of the divergence function between $\PP_{\theta}$ and ${\PP_r}$,
 is defined as 
 \[L_{MF}(u,m)=L_{FP}(u,m)+\beta_{MF}L_{init}(u,m),\]
 where $\beta_{MF}>0$ denotes the weight on the penalty of the initial condition, with
  \[\begin{aligned}
 L_{FP}(u,m)&=\int_0^T\int_{\br^d}\biggl|\partial_t m(s,x)+\div\left[m(s,x)b(s,x,m(s,x),\alpha^*_{u,m}(s,x))\right]\\
 &\hspace{120pt}-\frac{\sigma^2}{2}\Delta_xm(s,x)\biggl|^2\mu(dx,ds),
 \end{aligned}
 \]
 and
 \[
    L_{init}(m)=\int_{\br^d}\biggl|m(0,x)-m^0(x)\biggl|^2\mu^{init}(dx).
\]
 Here $\alpha^*_{u,m}$ denotes the optimal control solved under current $u$ and $m$. 
\end{itemize}

Now MFGs can be recast as GANs with $m$ as a generator and $u$ as a discriminator. In this case, $m$ tries to learn how to mimic the equilibrium mean-field distribution and $u$ learns how to penalize samples that are deviating from this equilibrium, as summarized in  Table~\ref{tab:gan-mfg-1}.

\begin{table*}[!ht]
\centering
\caption{Link between GANS and MFGs}
\begin{tabular*}{1.0\textwidth}{>{\bfseries}p{0.22\textwidth} p{0.28\textwidth} p{0.40\textwidth}}
\toprule
& {\bf GANs} & {\bf MFGs} \\
\toprule
Generator G & neural network for approximating the map $G:\mathcal Z\mapsto \XCal$ & neural network for mean-field $m$ (solving FP) \\
\midrule
Discriminator D & neural network measuring divergence between $\mathbb{P}_\theta$ and $\mathbb{P}_r$ & neural network for value function $u$ (solving HJB)\\
\bottomrule
\end{tabular*}
\label{tab:gan-mfg-1}
\end{table*}

For some classes of MFGs, this connection with GANs is even more explicit, as we explain next.

\paragraph{Example.} 
 Let us review the following class of periodic MFGs on the flat torus $\TT^d$ from \cite{Cirant2018} in which the individual agent's cost is:
\begin{equation}\label{eq:dyn-cost}
 J_m(\alpha) = \EE\left[\int_0^T \left(L(X^\alpha_t, \alpha(t,X^\alpha_t)) + f(X^\alpha_t, m(t,X^\alpha_t)) \right)dt + u^T(X^\alpha_T)\right],
\end{equation}
where $X^\alpha = (X^\alpha_t)_t$ is a $d$-dimensional process whose initial distribution has density $m^0$ such that
\[
 dX^\alpha_t = \alpha(t,X^\alpha_t) dt + \sqrt{2 \epsilon} d W_t.
\]
Here $\alpha: [0,T] \times \RR^d \to \RR^d$ is a control policy, $L$ and $f$ constitute the running cost,  and $m(t,\cdot)$, for $t\in[0,T]$, denotes a probability density which corresponds to the law of $X^\alpha_t$ at equilibrium. 

Consider the convex conjugate of the running cost $L$, defined as:
\begin{equation}
    \label{eq:defH0}
 H_0(x, p) = \sup_{\alpha \in \RR^d} \left\{ \alpha \cdot p - L(x, \alpha) \right\},
\end{equation}
and let $F(x,m)=\int^mf(x,z)dz$ for $x \in \RR^d$, $m \in [0,+\infty)$. 

This class of MFGs can be characterized by the following coupled PDE system corresponding to ~\eqref{eq:hjb}--\eqref{eq:fp},
\begin{equation}
 \label{eq:dyn-hjb-fp}
 \begin{cases}
 &-\partial_s u-\epsilon\Delta_x u+H_0(x,\nabla_x u)=f(x,m),\\
 &\partial_sm-\epsilon\Delta_x m-\div\left(m\nabla_pH_0(x,\nabla u)\right)=0, \\
 &m>0,\,m(0,\cdot)=m^0(\cdot),\,u(T,\cdot)=u^T(\cdot),
 \end{cases}
\end{equation}
where the first equation is an HJB equation governing the value function and the second is a FP equation governing the evolution of the optimally controlled state process. 
The system of equations \eqref{eq:dyn-hjb-fp} can be shown to be the system of optimality conditions for the following minimax game:
\begin{equation}
 \label{eq:dyn-var-str}
 \inf_{u\in\mathcal C^2([0,T]\times\mathbb T^d)}\sup_{m\in\mathcal C^2([0,T]\times\mathbb T^d)}\Phi(m,u),
\end{equation}
where
\begin{equation*}
 \label{eq:dyn-functional}
 \begin{aligned}
 \Phi(m,u)&=\int_0^T\int_{\mathbb T^d}\left[m(-\partial_t u-\epsilon\Delta_x u)+mH_0(x,\nabla_x u)-F(x,m)\right] dxdt\\
 &\hspace{10pt}+\int_{\mathbb T^d}\left[m(T,x)u(T,x)-m^0(x)u(0,x)-m(x,T)u^T(x)\right]dx.
 \end{aligned}
\end{equation*}
  The connection between GANs and MFGs is transparent from \eqref{eq:dyn-var-str}.

\subsection{Computing MFGs via GANs }
The interpretation of MFGs as GANs points to a new computational approach for MFGs using two competing neural networks, assuming that the equilibrium of MFGs can be computed via the coupled HJB-FB system. 
That is, one can compute MFGs using two neural networks trained in an adversarial way as in GANs, with:
\begin{itemize}
 \item $u_\theta$ being the neural network approximation of the unknown value function $u$ for the HJB equation, 
 \item $m_\omega$ being the neural network approximation for the unknown mean field information function $m$.
\end{itemize}
This new computational algorithm for MFGs is summarized in Algorithm \ref{alg:mfgan-dyn}. Note that here  the two neural networks are trained in alternating periods of length respectively $N_\omega$ and $N_\theta$. 
Furthermore, each neural network is trained to minimize a loss function hence the iterations are a gradient descent in each case. 

In the algorithm, for $B_g$ samples $(\underline{s},\underline{x})=\{(s_i,x_i)\}_{i=1}^{B_g}$ to train the generator, the empirical loss is defined as:
\begin{align}
\label{eq:algo-defLmf}
    \hat L_{MF}(\theta,\omega; \underline{s},\underline{x})=\hat L_{FP}(\theta,\omega; \underline{s},\underline{x})+\beta_{MF} \hat L_{init}(\omega; \underline{s},\underline{x}),
\end{align}
  with the FP PDE residual's empirical loss and the initial condition's empirical loss being defined as: 
  $$
 \begin{aligned}
 \hat L_{FP}&=\frac{1}{B_d}\sum_{i=1}^{B_d}\biggl|\partial_sm_\omega(s_i,x_i)+\div\left[m_\omega(s_i,x_i)b(s_i,x_i,m(s_i,x_i),\alpha^*_{\theta,\omega}(s_i,x_i))\right]-\frac{\sigma^2}{2}\Delta_xm_\omega(s_i,x_i)\biggl|^2,\\
 \hat L_{init}&=\frac{1}{B_d}\sum_{i=1}^{B_d}\left[m_\omega(0,x_i)-m^0(x_i)\right]^2,
 \end{aligned}
 $$
 where $m^0$ is a known density function for the initial distribution of the states and $\beta_{MF}>0$ is the weight for the penalty on the initial condition of $m$.
 Moreover, for $B_d$ samples  $(\underline{s},\underline{x})=\{(s_i,x_i)\}_{i=1}^{B_d}$ to train the discriminator, the empirical loss is defined as:
 \begin{align}
\label{eq:algo-defLval}
 \hat L_{Val}(\theta,\omega; \underline{s},\underline{x})=\hat L_{HJB}(\theta,\omega; \underline{s},\underline{x})+\beta_{Val}\hat L_{term}(\theta; \underline{s},\underline{x}),
 \end{align}
 with the HJB PDE residual's empirical loss and the terminal condition's empirical loss being defined as: 
 $$\begin{aligned}
 \hat L_{HJB}&=\frac{1}{B_g}\sum_{j=1}^{B_g}\left|\partial_su_\theta(s_j,x_j)+\frac{\sigma^2}{2}\Delta_xu_\theta(s_j,x_j)+H_\omega\left(s_j,x_j,\nabla_xu_\theta(s_j,x_j)\right)\right|^2,\\
 \hat L_{term}&=\frac{1}{B_g}\sum_{j=1}^{B_g}u_\theta(T,x_j)^2,
 \end{aligned}
 $$
 where $\beta_{Val}>0$ is the weight for the penalty on the terminal condition of $u$.

Note that Algorithm \ref{alg:mfgan-dyn} can be adapted for broader classes of dynamical systems with variational structures. Such GANs structures have been exploited in \cite{Yang2018} and \cite{Yang2018a} to synthesize complex systems governed by physical laws. 

Note also there are various ways to recast MFGs as GANs. For instance, one can swap the roles and view $u$ as a generator (which solves the HJB equation via one neural network) and $m$ as a discriminator (which computes an appropriate differential residual of the FP equation via another neural network).

 
\begin{algorithm}
\caption{MFGANs}
\label{alg:mfgan-dyn}
\begin{algorithmic}
 \STATE{\textbf{Input:} Initial parameters $\theta_0,\omega_0$. Learning rates $\alpha_d,\alpha_g>0$. Number of training samples $B_d$ and $B_g$ for FP and HJB residuals. Prior distribution $p_{prior}$. Numbers $N_{\theta}$ and $N_\omega$ of training steps of the inner-loops and number $K$ of outer-loops} 
 \STATE{\textbf{Output: } Parameters $\theta,\omega$ such that $u_\theta, m_\omega$ approximately solve the MFG PDE system}  
\STATE{Initialize $\theta \leftarrow \theta_0$ and $\omega \leftarrow \omega_0$}
\FOR{$k\in\{1,\dots, K\}$}
 \FOR{$m\in\{1,\dots, N_\omega\}$}
    \STATE{Sample $(\underline{s},\underline{x})=\{(s_i,x_i)\}_{i=1}^{B_d}$ on $[0,T]\times \mathbb R^d$ according to $p_{prior}$ }
    \STATE{$\omega\leftarrow \omega-\alpha_d\nabla_\omega\hat L_{MF}(\theta,\omega; \underline{s},\underline{x})$ where $\hat L_{MF}$ is defined by~\eqref{eq:algo-defLmf}}
 \ENDFOR
 \FOR{$n\in\{1,\dots,N_\theta\}$}
    \STATE{Sample $(\underline{s},\underline{x})=\{(s_j,x_j)\}_{j=1}^{B_g}$ on $[0,T]\times \mathbb R^d$ according to $p_{prior}$ }
    \STATE{$\theta\leftarrow\theta-\alpha_g\nabla_\theta\hat L_{Val}(\theta,\omega; \underline{s},\underline{x})$ where $\hat L_{Val}$ is defined by~\eqref{eq:algo-defLval}}
 \ENDFOR
\ENDFOR
\STATE Return $\theta$, $\omega$
\end{algorithmic}
\end{algorithm}

\subsection{Numerical experiments}
We now assess the quality of the proposed Algorithm \ref{alg:mfgan-dyn}. We start with a class of ergodic MFGs, for both one-dimensional and high-dimensional cases. This class of MFGs is chosen because of their explicit solution structures, which facilitate numerical comparison. We then apply the algorithm to  a different class of high-dimensional MFGs for which there are no known explicit solutions. 

\subsubsection{A class of ergodic MFGs with explicit solutions} \label{subsec:model}
Consider \eqref{eq:mfg-eg} with the following long-run average cost,
\begin{equation}
 \label{eq:cost}
 \hat J_m(\alpha)=\liminf_{T\to\infty}\frac{1}{T}\EE\left[\int_0^T L(X^\alpha_t, \alpha(t,X^\alpha_t)) + f(X^\alpha_t, m(X^\alpha_t)) dt\right],
\end{equation}
where the cost of control and running cost are given by
\[L(x,\alpha) = \frac{1}{2} |\alpha|^2 + \tilde f(x),
	\quad
	f(x, m) = \ln(m), 
	\]
with
\[\tilde f(x)= 2 \pi^2 \left[ - \sum_{i=1}^d \sin(2 \pi x_i) + \sum_{i=1}^d |\cos(2 \pi x_i)|^2 \right]- 2\sum_{i=1}^d \sin(2 \pi x_i).\]
In this ergodic setting, the PDE system \eqref{eq:hjb}--\eqref{eq:fp} becomes
\begin{equation}\label{eq:hjb-fp}
 \begin{cases}
 -\epsilon\Delta u+H_0(x,\nabla u)=f(x,m)+\bar H,\\
 -\epsilon\Delta m-\div\left(m\nabla_pH_0(x,\nabla u)\right)=0,\\
 \int_{\mathbb T^d}u(x)dx=0;\,m>0,\, \int_{\mathbb T^d}m(x)dx=1,
 \end{cases}
\end{equation}
where the convex conjugate $H_0$ of $L$ defined by~\eqref{eq:defH0} is given by 
$H_0(x,p)=\sup_{\alpha}\{\alpha\cdot p-\frac{1}{2}|\alpha|^2\}-\tilde f(x).$ 
Here, the periodic value function $u$, the periodic density function $m$, and the unknown ergodic constant $\bar H$ can be explicitly derived as discussed in~\cite{MR3698446}. Indeed, assuming the existence of a smooth solution $(m,u,\bar H)$, $m$ in the second equation in \eqref{eq:hjb-fp} can be written as $m(x) = \frac{e^{2u(x)}}{\int_{\mathbb T^d} e^{2u(x')}dx'}.$ 
Hence the solution to \eqref{eq:hjb-fp} is given by $u(x) = \sum_{i=1}^d \sin(2 \pi x_i)$ and the ergodic constant is $\bar H = \ln\left(\int_{\mathbb T^d} e^{2 \sum_{i=1}^d \sin(2 \pi x_i)} d x\right).$
The optimal control policy is also explicitly given by
$$\begin{aligned}
\alpha^*(x)
    &=\arg\max_{\alpha \in \RR^d}\{\nabla_xu(x)\cdot \alpha -L(x,\alpha)\}
    =\nabla_xu(x)=2\pi\begin{pmatrix*}\cos(2\pi x_1)&\dots&\cos(2\pi x_d)\end{pmatrix*}\in\RR^d.
\end{aligned}$$

\begin{figure}[!ht]
 \centering
 \begin{subfigure}[b]{0.4\columnwidth}
 \centering
 \includegraphics[width=\textwidth]{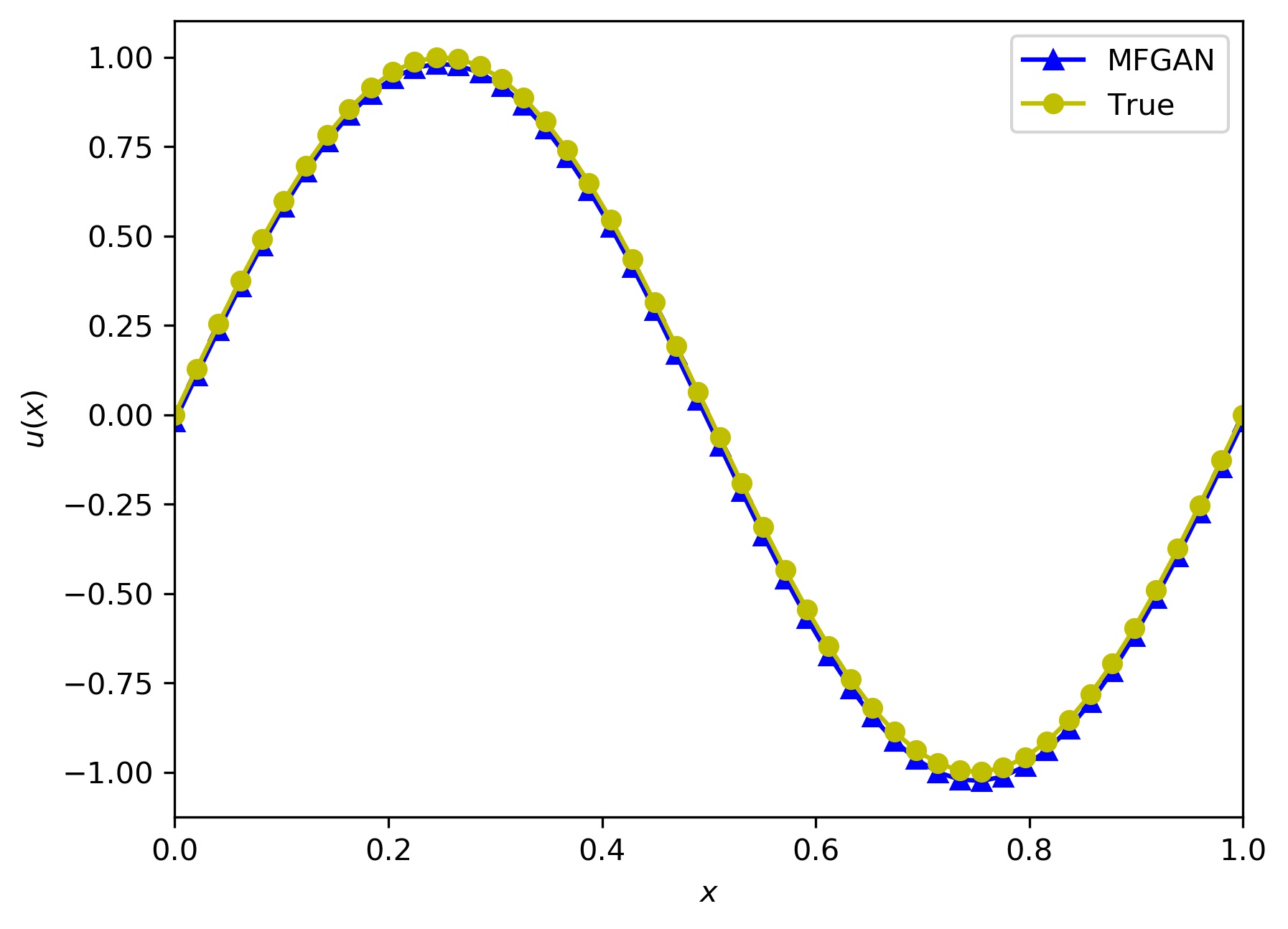}
 \caption{Value function $u$.}
 \label{subfig:supp-1dim-u}
 \end{subfigure}
 \begin{subfigure}[b]{0.4\columnwidth}
 \centering
 \includegraphics[width=\textwidth]{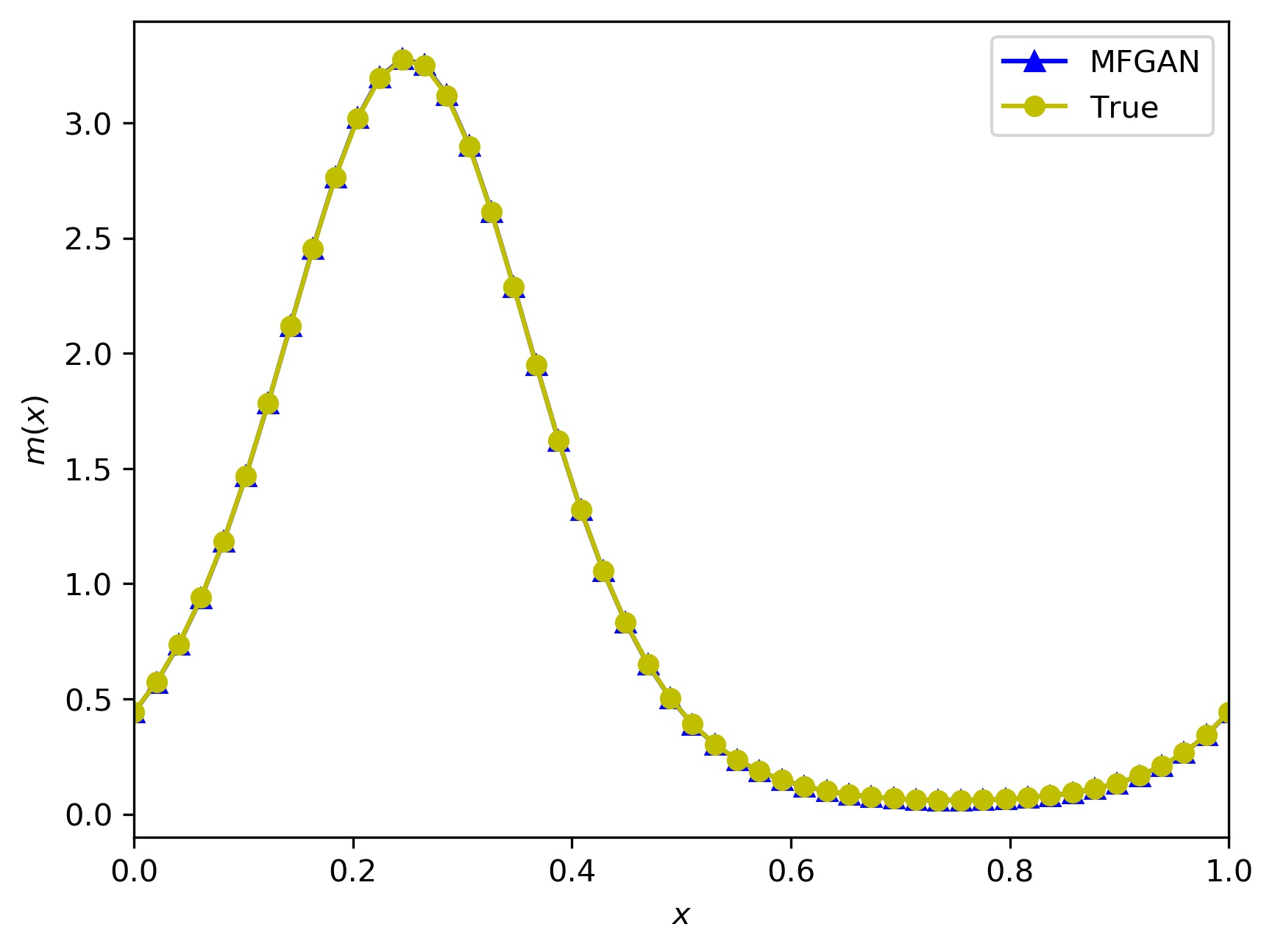}
 \caption{Density function $m$.}
 \label{subfig:supp-1dim-m}
 \end{subfigure}\\
 \begin{subfigure}[t]{0.4\columnwidth}
 \centering
 \includegraphics[width=\textwidth]{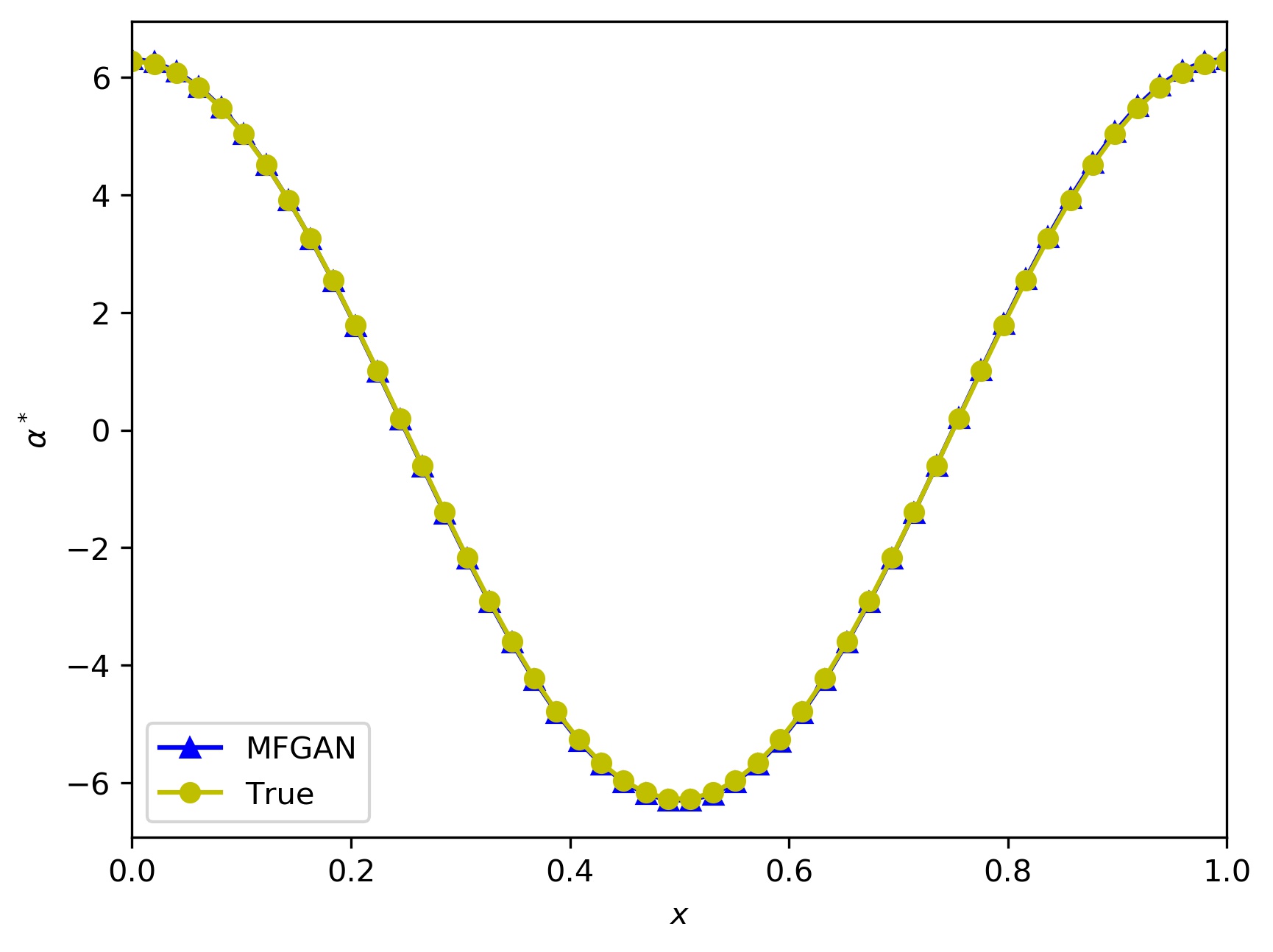}
 \caption{Optimal control $\alpha^*$.}
 \label{subfig:supp-1dim-alpha}
 \end{subfigure}
 \caption{One-dimensional test case.}
 \label{fig:supp-1dim}
\end{figure}
\begin{figure}[!ht]
 \centering
 \begin{subfigure}[b]{0.4\columnwidth}
 \centering
 \includegraphics[width=\textwidth]{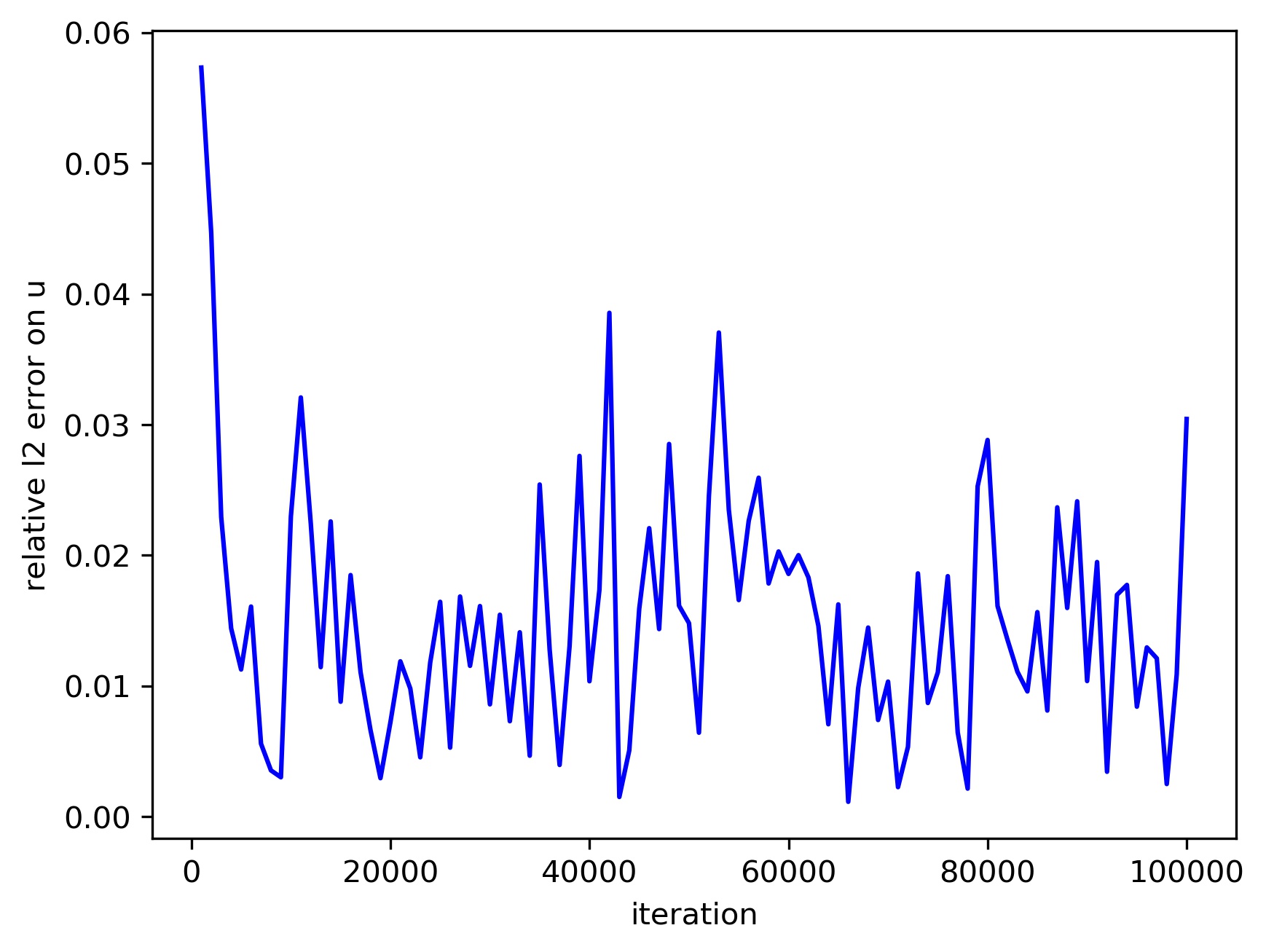}
 \caption{Relative $l_2$ error of $u$.}
 \label{subfig:1dim-rel-err-u}
 \end{subfigure}
 \begin{subfigure}[b]{0.4\columnwidth}
 \centering
 \includegraphics[width=\textwidth]{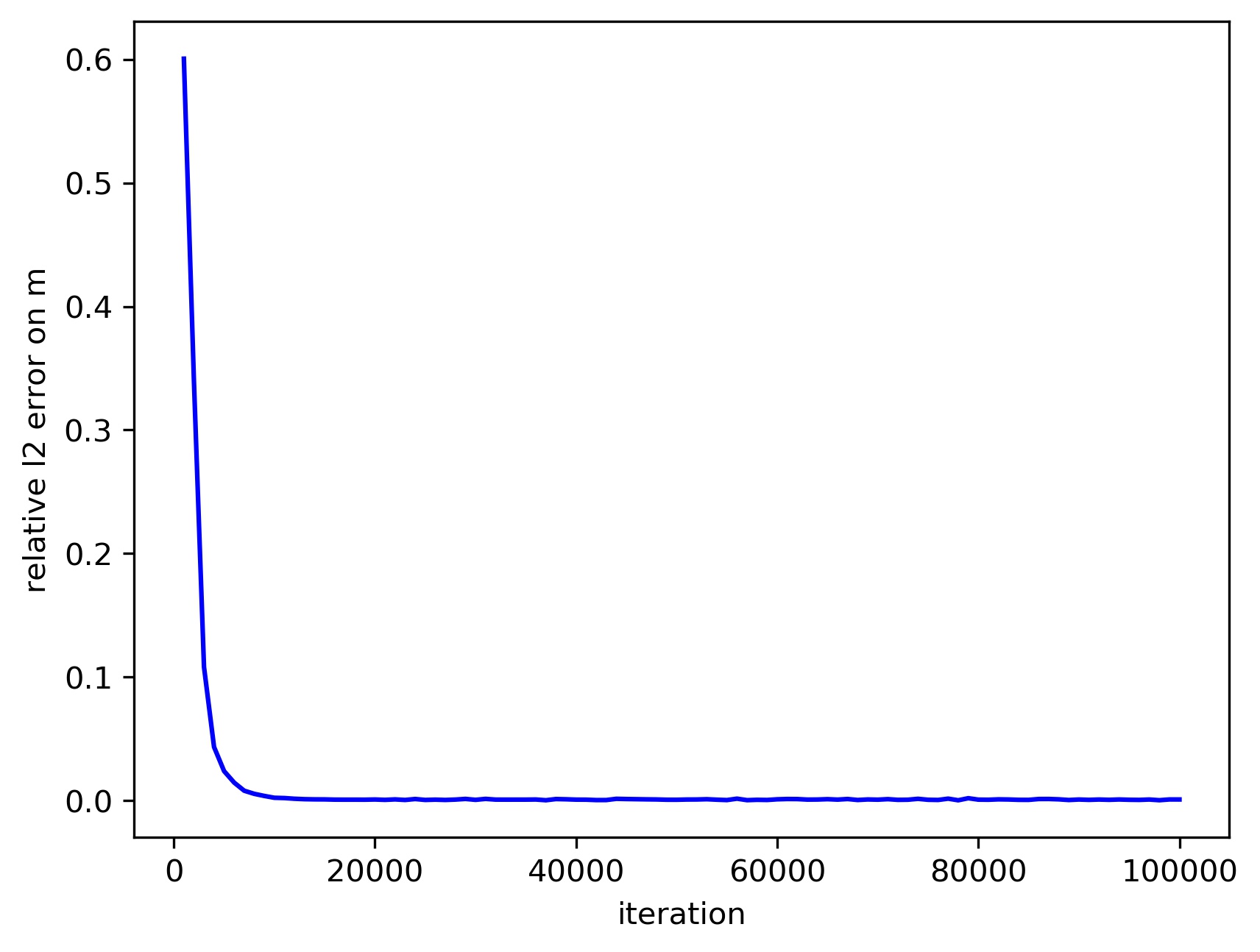}
 \caption{Relative $l_2$ error of $m$.}
 \label{subfig:1dim-rel-err-m}
 \end{subfigure}\\
 \begin{subfigure}[b]{0.4\columnwidth}
 \centering
 \includegraphics[width=\textwidth]{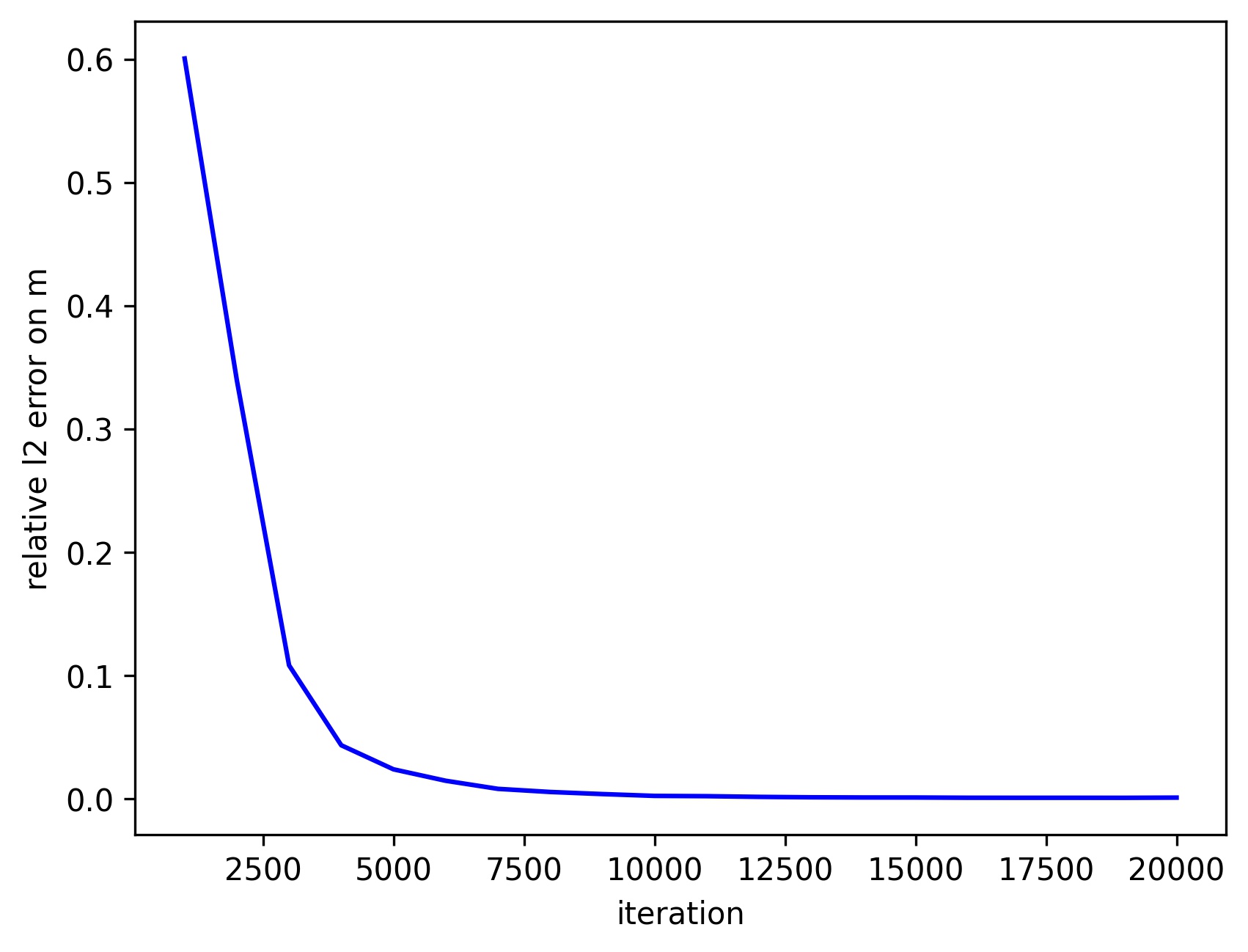}
 \caption{Error of $m$ first 20k iterations.}
 \label{subfig:1dim-rel-err-m-dec}
 \end{subfigure}
 \begin{subfigure}[b]{0.4\columnwidth}
 \centering
 \includegraphics[width=\textwidth]{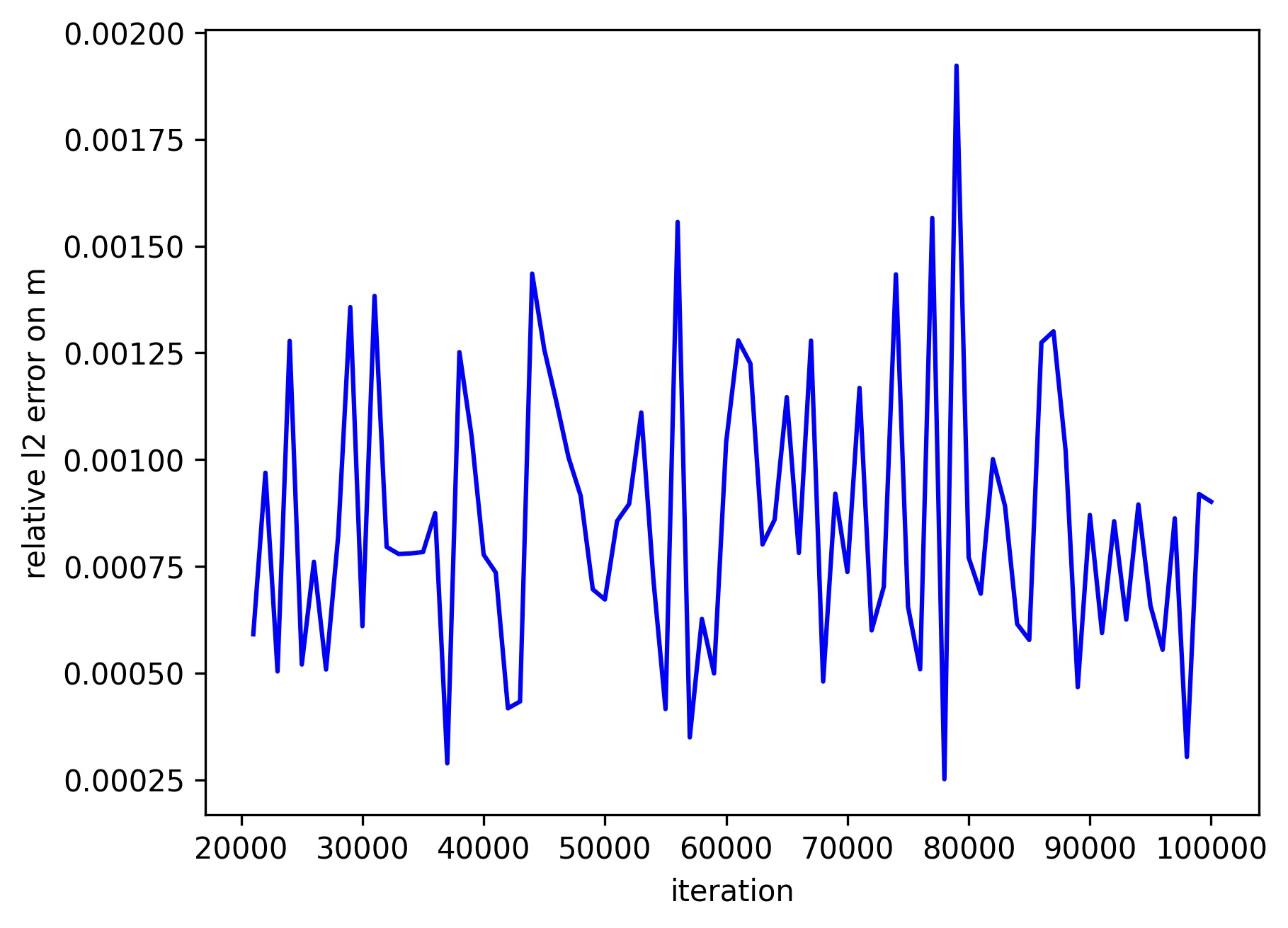}
 \caption{Error of $m$ after 20k iterations.}
 \label{subfig:1dim-rel-err-m-flt}
 \end{subfigure}\\
 \begin{subfigure}[b]{0.4\columnwidth}
 \centering
 \includegraphics[width=\textwidth]{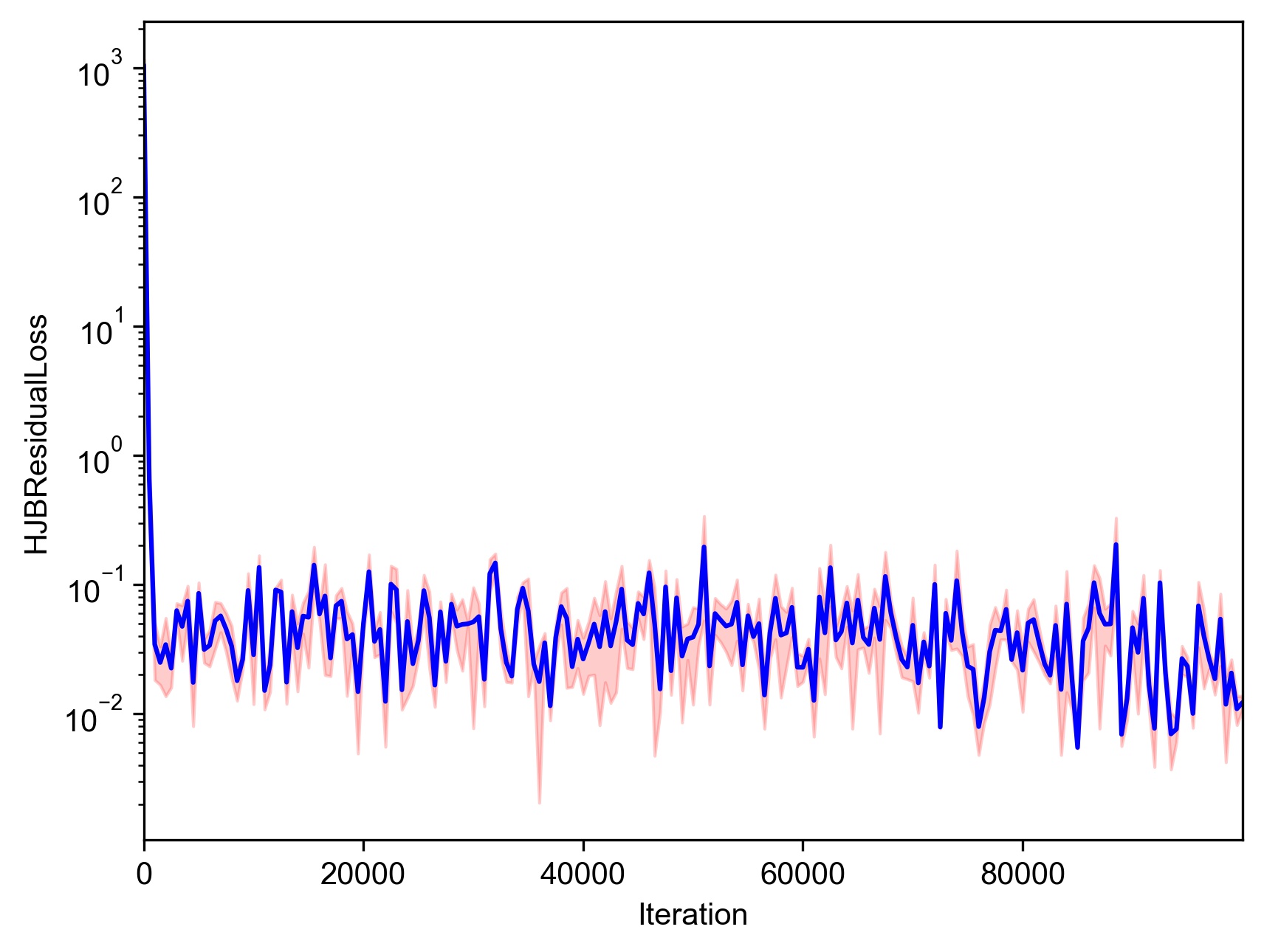}
 \caption{HJB residual loss.}
 \label{subfig:1dim-hjb}
 \end{subfigure}
 \begin{subfigure}[b]{0.4\columnwidth}
 \centering
 \includegraphics[width=\textwidth]{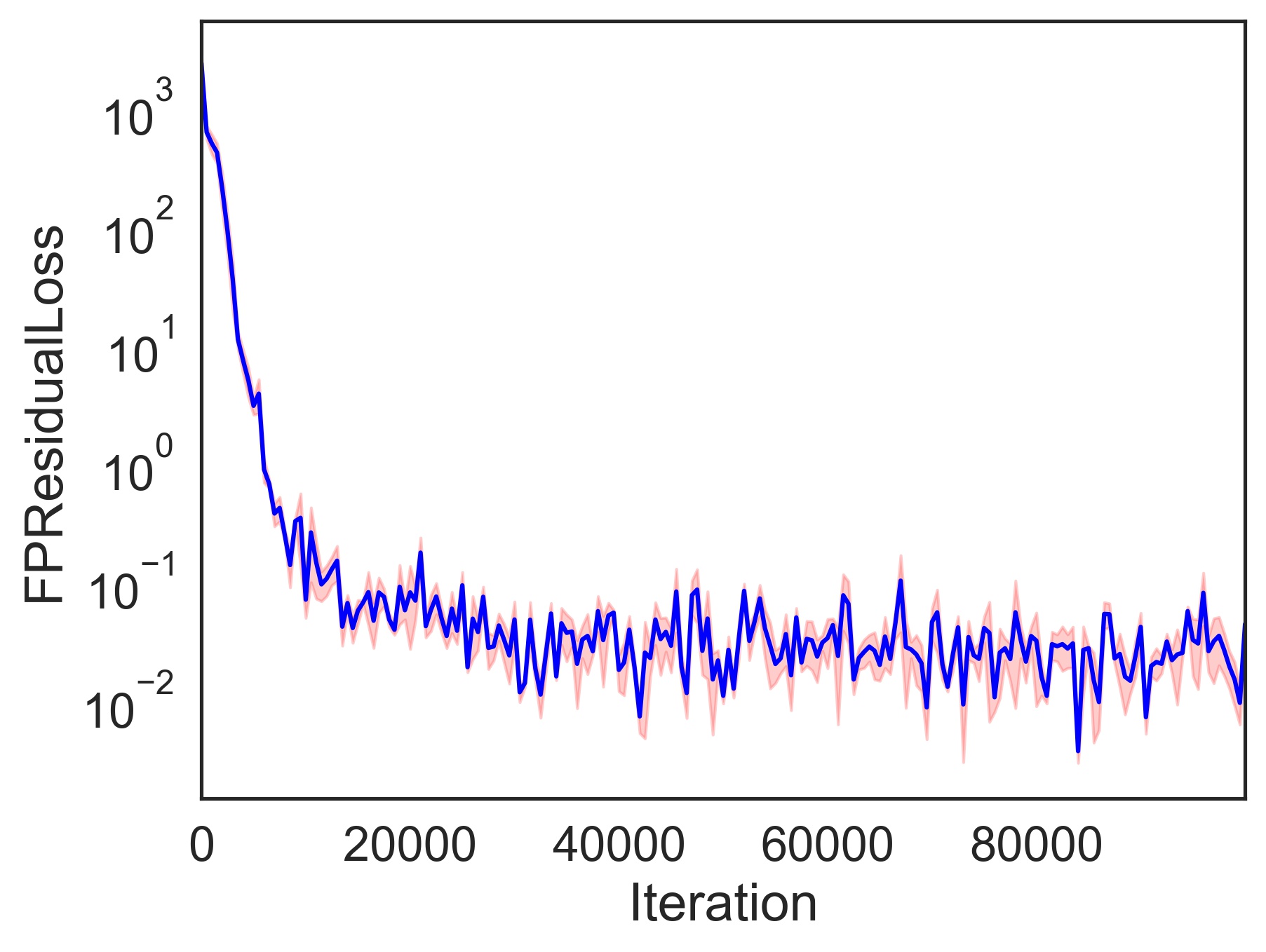}
 \caption{FP residual loss.}
 \label{subfig:1dim-fp}
 \end{subfigure}
 \caption{Losses and errors in the one-dimensional test case.}
 \label{fig:1dim}
\end{figure}

\paragraph{Implementation.}
Both the value function $u$ and the density function $m$ are computed via neural networks with parameters ${\theta}$ and $\omega$ respectively. Moreover, 
\begin{itemize}
 \item 
The neural network $m_\omega$ is assumed to be a maximum entropy probability distribution, i.e., of the form $m_\omega\propto\exp{f_\omega}$ for some neural network $f_\omega$. This is due to the lack of information about the density function $m$. (See also \cite{Finn2016}). The density function $m_\omega$ is normalized and therefore $\beta_{MF}=0$.
\item The network architecture for implementing both $u_\theta$ and $f_\omega$ adopts 
the Deep Galerkin Method (DGM) architecture proposed in \cite{sirignano2018dgm}. The DGM architecture is known to be useful for solving PDEs numerically.
(See for instance \cite{sirignano2018dgm,al2018solving,CarmonaLauriere_DL}). The DGM network for both $u_\theta$ and $f_\omega$ contains 1 hidden layer of DGM type with $4$ nodes. The activation function for $u_\theta$ is hyperbolic tangent function and that of $f_\omega$ is sigmoid function.
\end{itemize} 
Since the MFG is of ergodic type with a specified periodicity, the architecture and Algorithm \ref{alg:mfgan-dyn} are adapted accordingly. More precisely, 
\begin{itemize}
 \item To accommodate the periodicity given by the domain flat torus $\mathbb T^d$, for any data point ${x_i=(x_{i,1},\dots,x_{i,d})\in\mathbb R^d}$, we use
 \begin{equation}\label{eq:input-fourier}\begin{aligned}
 y_i=&\left(\sin{(2\pi x_{i,1})},\dots,\sin{(2\pi x_{i,d})},\right.\\
 &\hspace{1pt}\left.\cos{(2\pi x_{i,1})},\dots,\cos{(2\pi x_{i,d})}\right)
 \end{aligned}
 \end{equation}
 as input. The $x_i's$ and $y_i's$ here are the latent variables in the vanilla GANs.
 \item An additional trainable variable $\bar H$ is introduced in the graphical model.
 \item The loss functions $\hat L_{HJB}$ and $\hat L_{FP}$ are modified according to the first and second equations of \eqref{eq:hjb-fp}. The generator penalty becomes
 \[\hat L_{term}=\left[\frac{\sum_{i=1}^{B_g}u_\theta(y_i)}{B_g}\right]^2.\]
 For the weight for the generator penalty, we used $\beta_{Val}=1$.
 \item As GANs can be trained in an alternative fashion between the generator and discriminator, we make the following changes in this experiment to suit our training objective: Within each iteration, we first update the generator by $N_\theta=5$ stochastic gradient descent (SGD) steps with initial learning rate $1\times10^{-3}$ and then update of the discriminator by $N_\omega=2$ SGD steps with initial learning rate $1\times10^{-4}$. The minibatch sizes are set to be $B_g = B_d = 32$. The number of total iterations, i.e., the number of outer loops is $K=10^5$. Adam optimizer is used for the updates.
\end{itemize}

\begin{figure}[!ht]
 \centering
 \begin{subfigure}[b]{0.4\columnwidth}
 \centering
 \includegraphics[width=\textwidth]{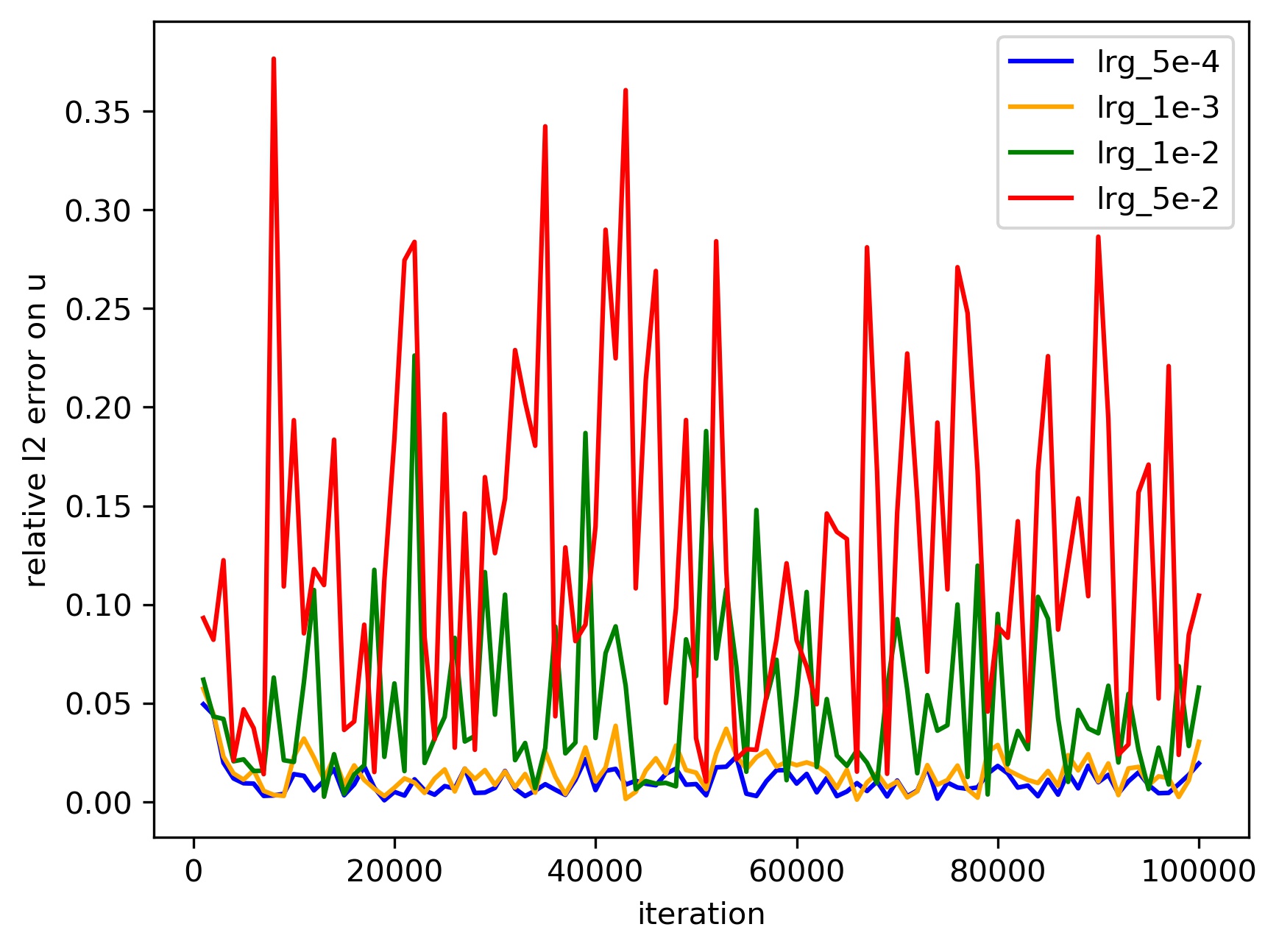}
 \caption{Relative $l_2$ error of $u$.}
 \label{subfig:ab-lrg-rel-err-u}
 \end{subfigure}
 \begin{subfigure}[b]{0.4\columnwidth}
 \centering
 \includegraphics[width=\textwidth]{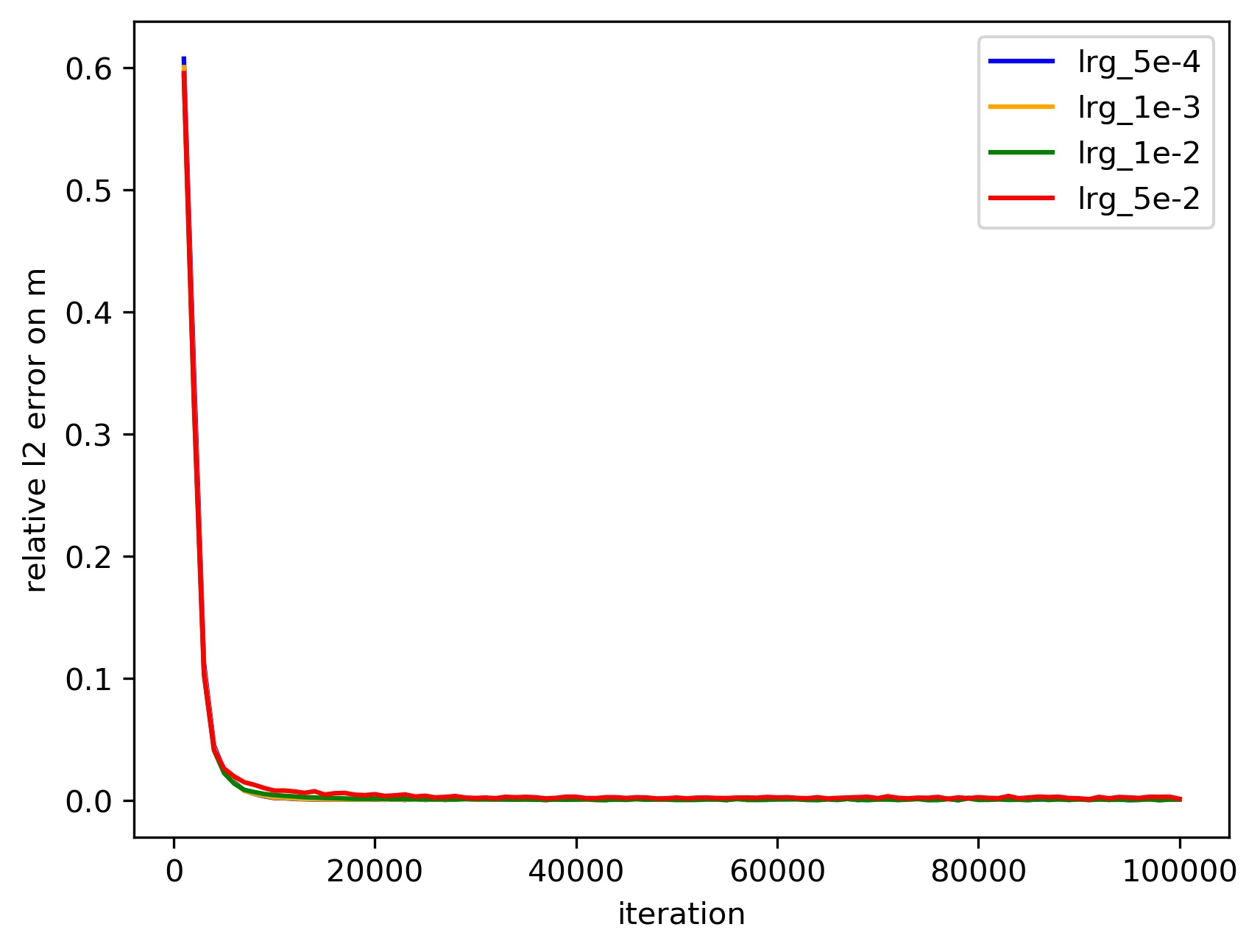}
 \caption{Relative $l_2$ error of $m$.}
 \label{subfig:ab-lrg-rel-err-m}
 \end{subfigure}
 \caption{Impact of generator learning rate on relative $l_2$ error.}
 \label{fig:ab-lrg-rel-err}
\end{figure}

\paragraph{Performance evaluations.}
To assess the performance of our algorithm, the following procedure is adopted.
\begin{itemize}
 \item Given the explicit solution to the MFG \eqref{eq:hjb-fp}, we compare the learnt value function, the learnt density function and the learnt optimal control against their respective analytical form.
 \item We adopt the evolution of relative $l_2$ errors between the learnt and true value and density functions. The relative $l_2$ error of a function $f$ against another function $g$, with $f,g:\mathbb{T}^d\to\mathbb R$ and $g$ not constant 0, is given by
 \[err_{rel-l_2}(f,g)=\sqrt{\frac{\int_{\mathbb{T}^d}[f(x)-g(x)]^2dx}{\int_{\mathbb T^d}g(x)^2dx}}.\]
 \end{itemize}
Moreover, to facilitate comparisons for broader classes of MFGs whose analytical solutions may not be available, additional loss functions are adopted. Here we take 
differential residuals of both the HJB and the FP equations as measurement of the performance.

\begin{figure}[!ht]
 \centering
 \begin{subfigure}[b]{0.4\columnwidth}
 \centering
 \includegraphics[width=\textwidth]{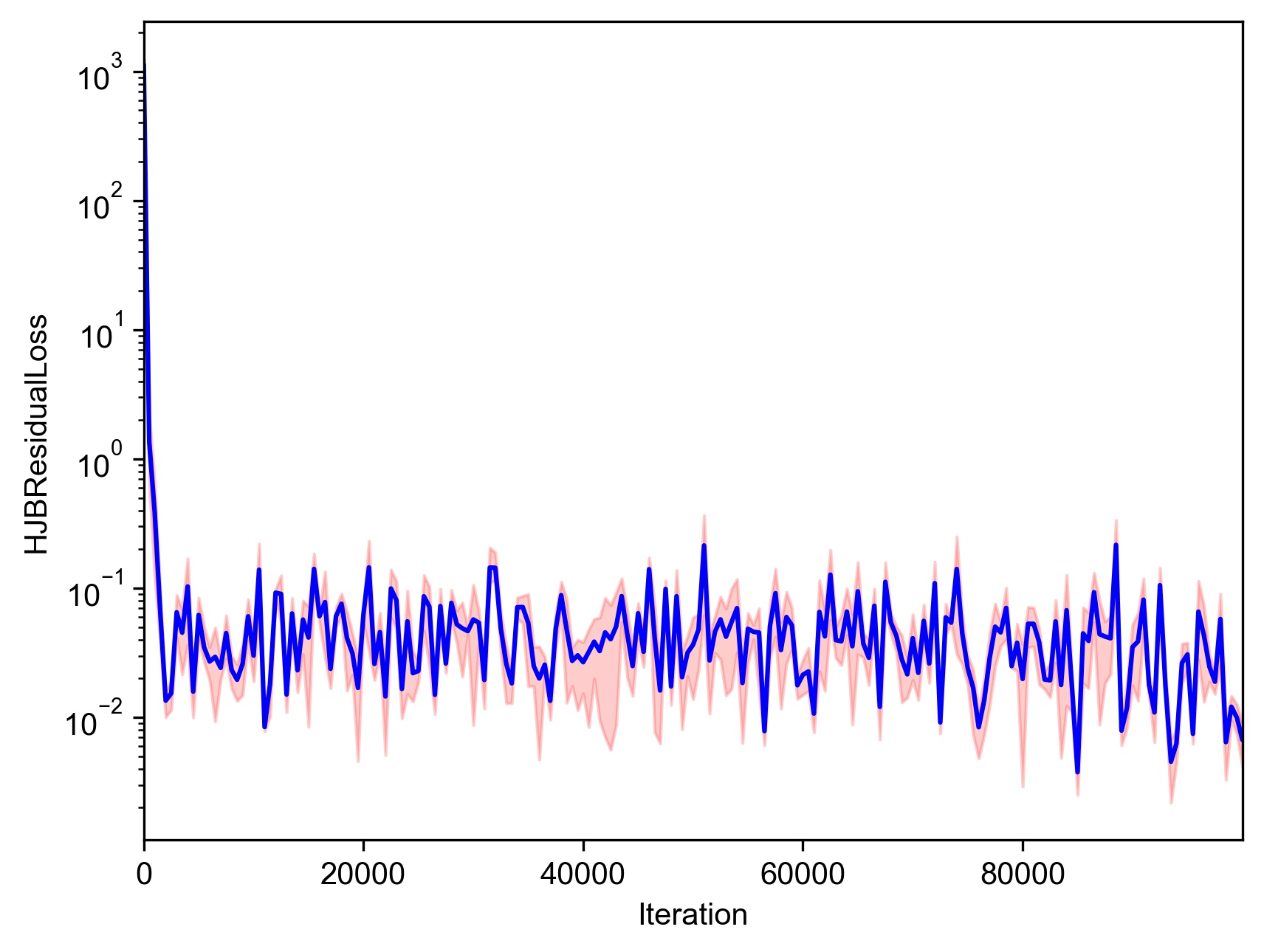}
 \caption{$\alpha_g^0=5\times10^{-4}$.}
 \label{subfig:hjb-smlrg}
 \end{subfigure}
 \begin{subfigure}[b]{0.4\columnwidth}
 \centering
 \includegraphics[width=\textwidth]{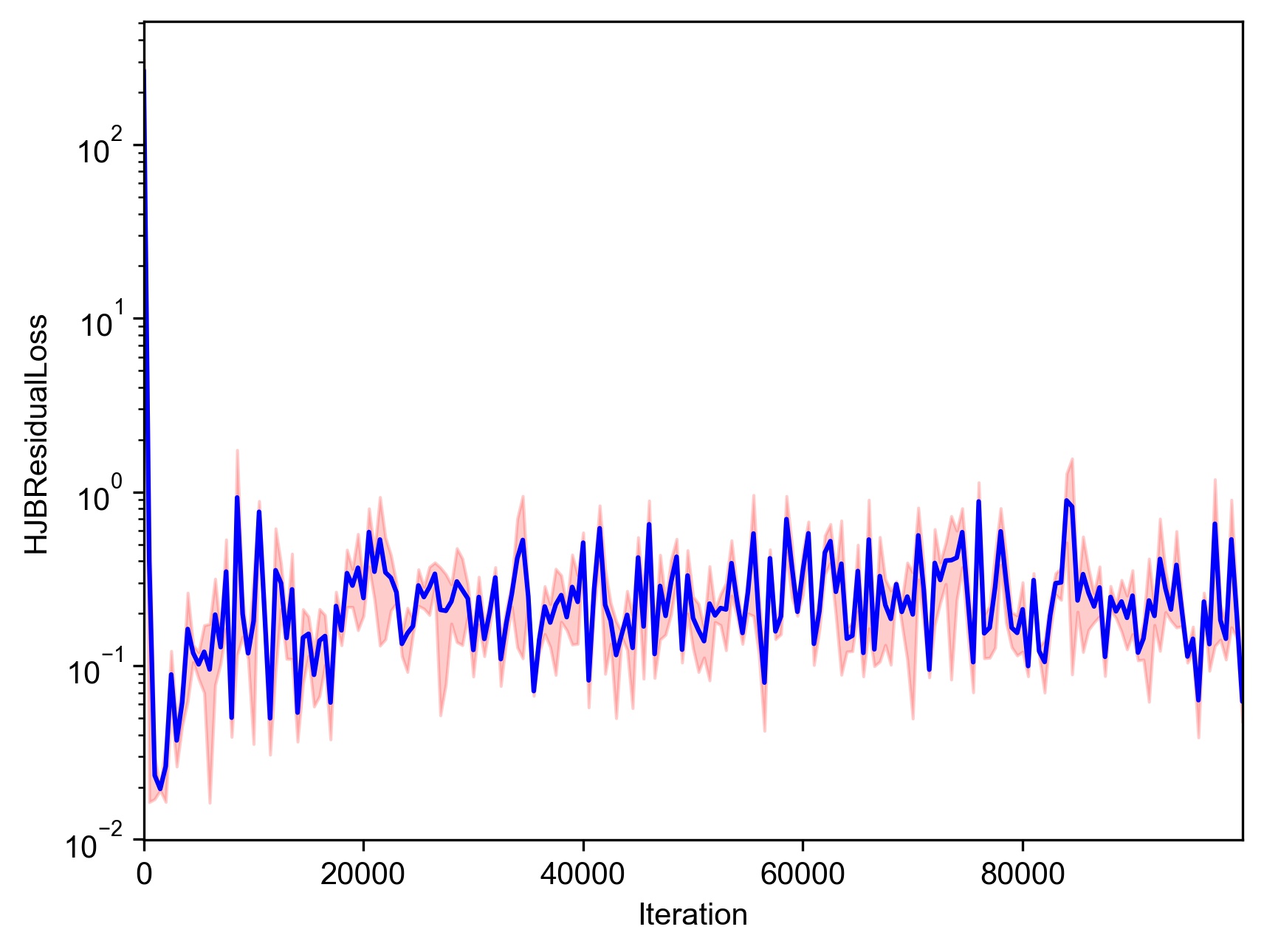}
 \caption{$\alpha_g^0=5\times10^{-2}$.}
 \label{subfig:hjb-lglrg}
 \end{subfigure}
 \caption{HJB residual loss under different generator learning rate.}
 \label{fig:hjb-lrg}
\end{figure}
\begin{figure}[!ht]
 \centering
 \begin{subfigure}[b]{0.4\columnwidth}
 \centering
 \includegraphics[width=\textwidth]{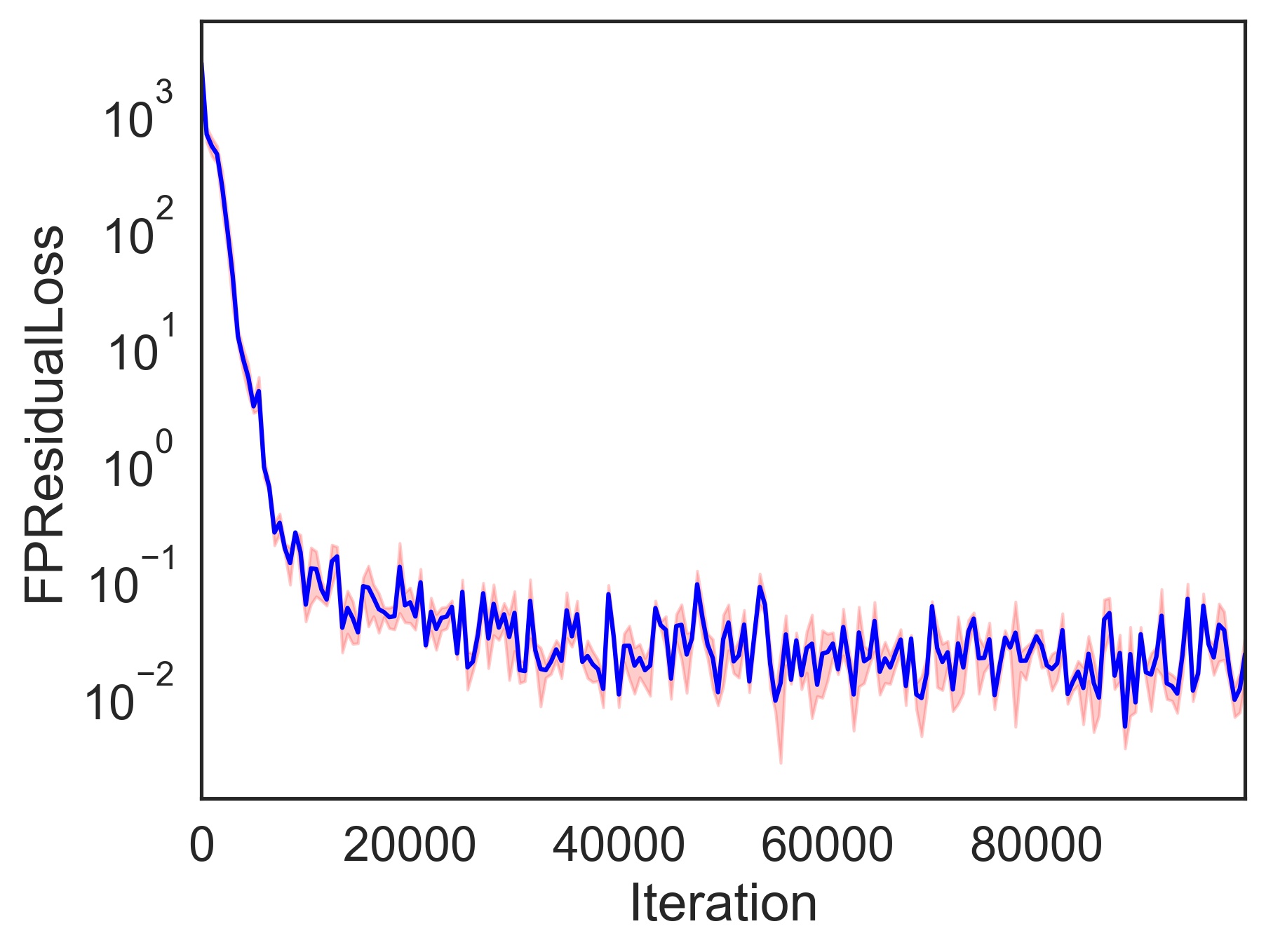}
 \caption{$\alpha_g^0=5\times10^{-4}$.}
 \label{subfig:fp-smlrg}
 \end{subfigure}
 \begin{subfigure}[b]{0.4\columnwidth}
 \centering
 \includegraphics[width=\textwidth]{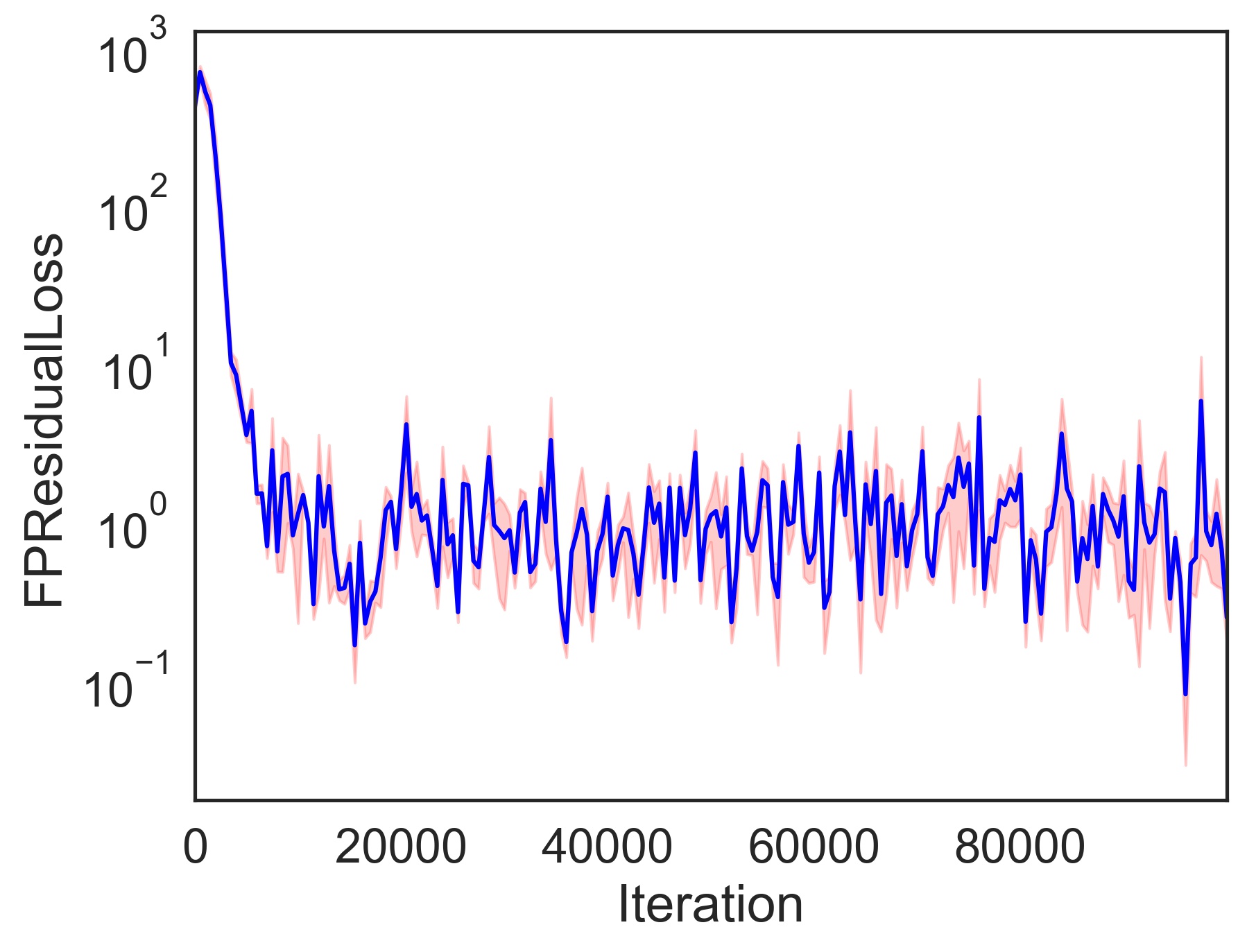}
 \caption{$\alpha_g^0=5\times10^{-2}$.}
 \label{subfig:fp-lglrg}
 \end{subfigure}
 \caption{FP residual loss under different generator learning rate.}
 \label{fig:fp-lrg}
\end{figure}

\begin{figure}[!ht]
 \centering
 \begin{subfigure}[b]{0.4\columnwidth}
 \centering
 \includegraphics[width=\textwidth]{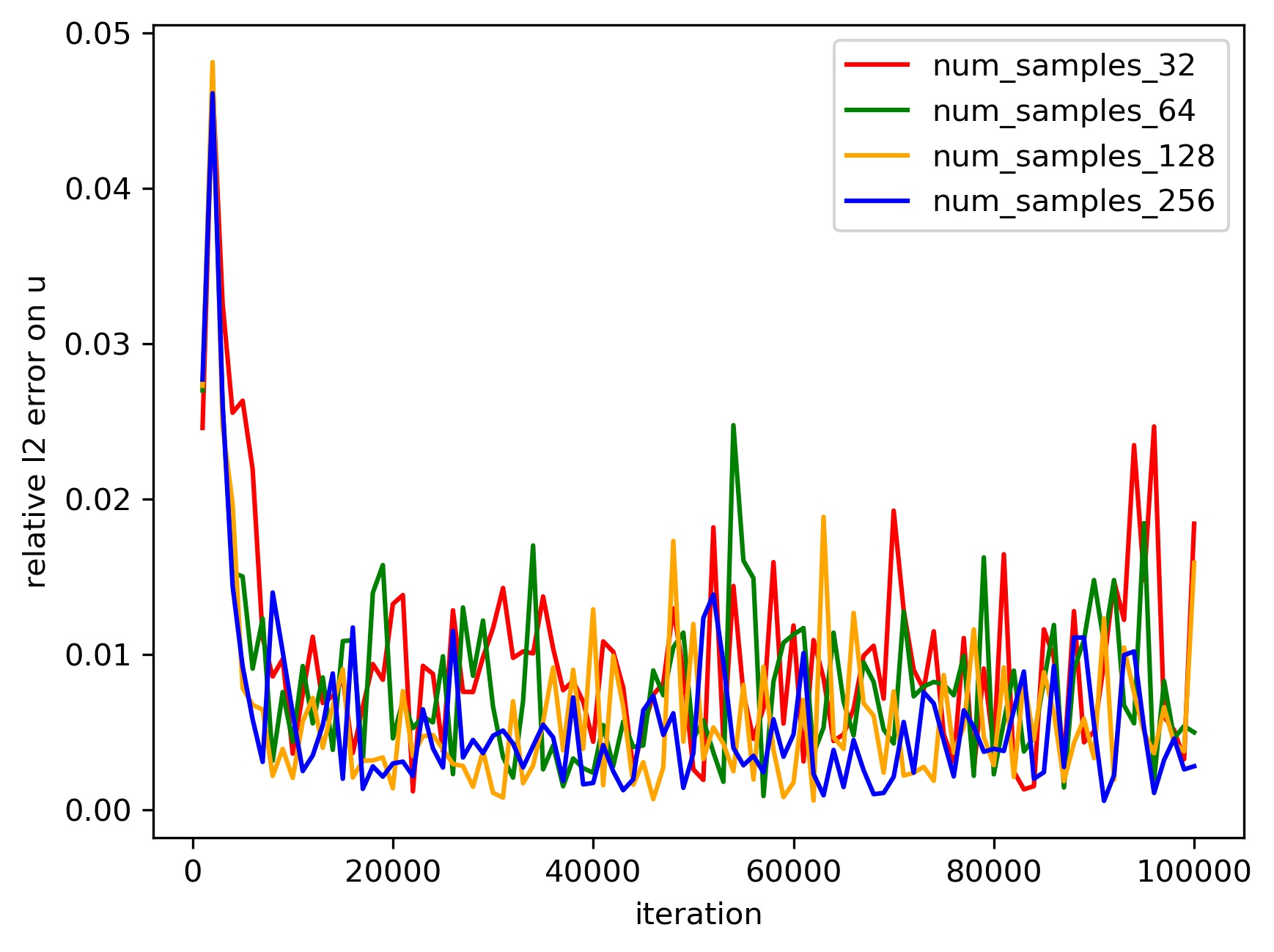}
 \caption{Relative $l_2$ error of $u$.}
 \label{subfig:ab-num-rel-err-u}
 \end{subfigure}
 \begin{subfigure}[b]{0.4\columnwidth}
 \centering
 \includegraphics[width=\textwidth]{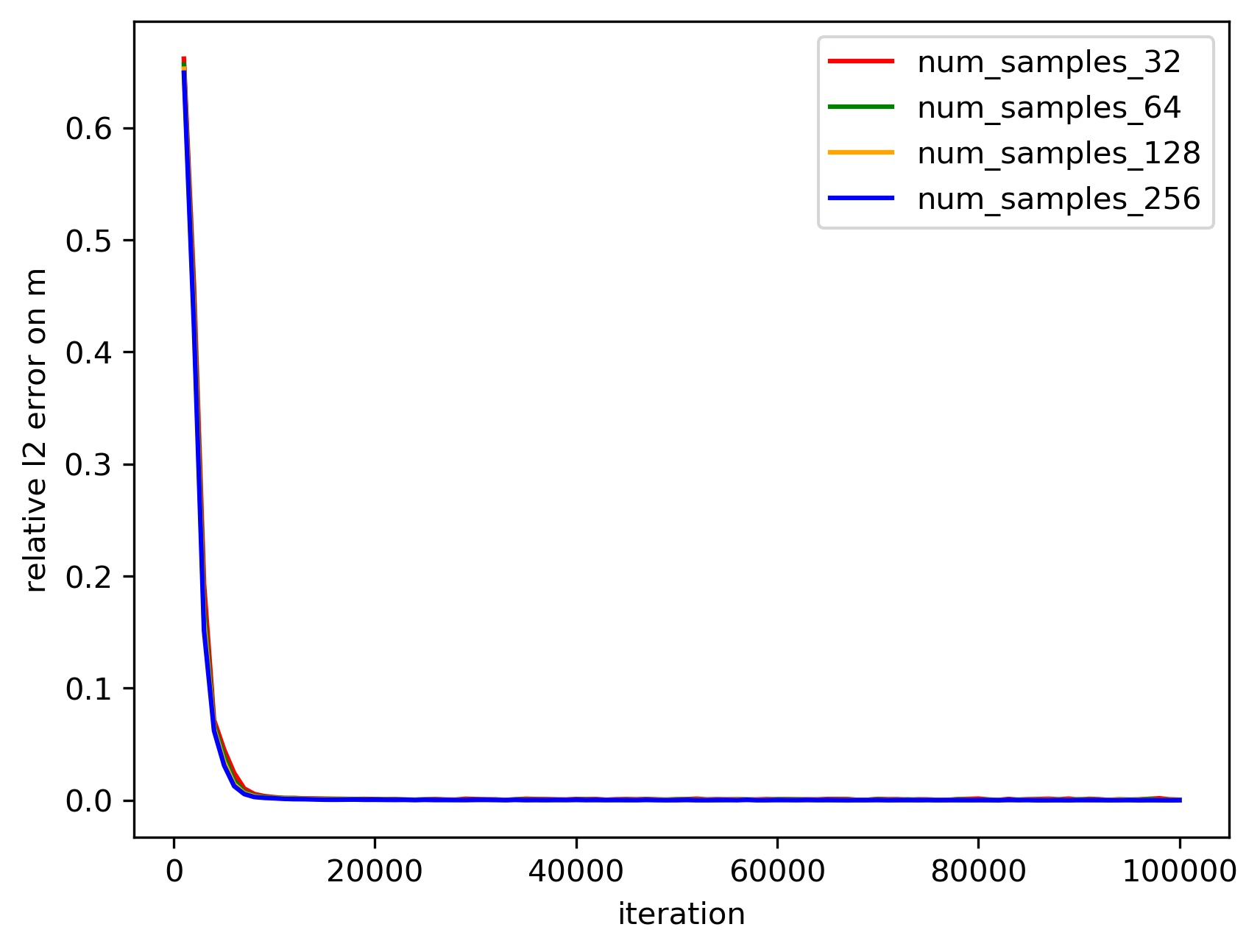}
 \caption{Relative $l_2$ error of $m$.}
 \label{subfig:ab-num-rel-err-m}
 \end{subfigure}
 \caption{Impact of minibatch size on relative $l_2$ errors.}
 \label{fig:ab-num-rel-err}
\end{figure}

The results are presented in Figures \ref{fig:supp-1dim} and \ref{fig:1dim}. Figures \ref{subfig:supp-1dim-u} and \ref{subfig:supp-1dim-m} show the learnt functions of $u$ and $m$ against the true ones, respectively, and Figure \ref{subfig:supp-1dim-alpha} shows the optimal control. Both figures demonstrate the accuracy of the learnt functions versus the true ones. This strong performance is supported by the plots of loss in Figures \ref{subfig:1dim-rel-err-u} and \ref{subfig:1dim-rel-err-m}, depicting the evolution of relative $l_2$ error as the number of outer iterations grows to $K$. Within $10^5$ iterations, the relative $l_2$ error of $u$ oscillates around $3\times10^{-2}$, and the relative $l_2$ errors of $m$ decreases below $10^{-3}$. 

The evolution of the HJB and FP differential residual loss (see $L_{HJB}$ and $L_{FP}$) is shown in Figures \ref{subfig:1dim-hjb} and \ref{subfig:1dim-fp}, respectively. In these figures, the solid line is the average loss over $3$ experiments, with standard deviation captured by the shaded region around the line. Both differential residuals first rapidly descend to the magnitude of $10^{-2}$; then the descent slows down and is accompanied by oscillation. 

One may notice the difference between the training results of $u$ and $m$: one reason being that $u$ and $m$ are implemented using different neural networks; the other being that different loss functions are adopted for training $u$ and $m$.

\begin{figure}[!ht]
 \centering
 \begin{subfigure}[b]{0.4\columnwidth}
 \centering
 \includegraphics[width=\textwidth]{figures/ablation_study/HJBResidualLoss.jpg}
 \caption{$B_g=B_d=32$.}
 \label{subfig:hjb-smnum}
 \end{subfigure}
 \begin{subfigure}[b]{0.4\columnwidth}
 \centering
 \includegraphics[width=\textwidth]{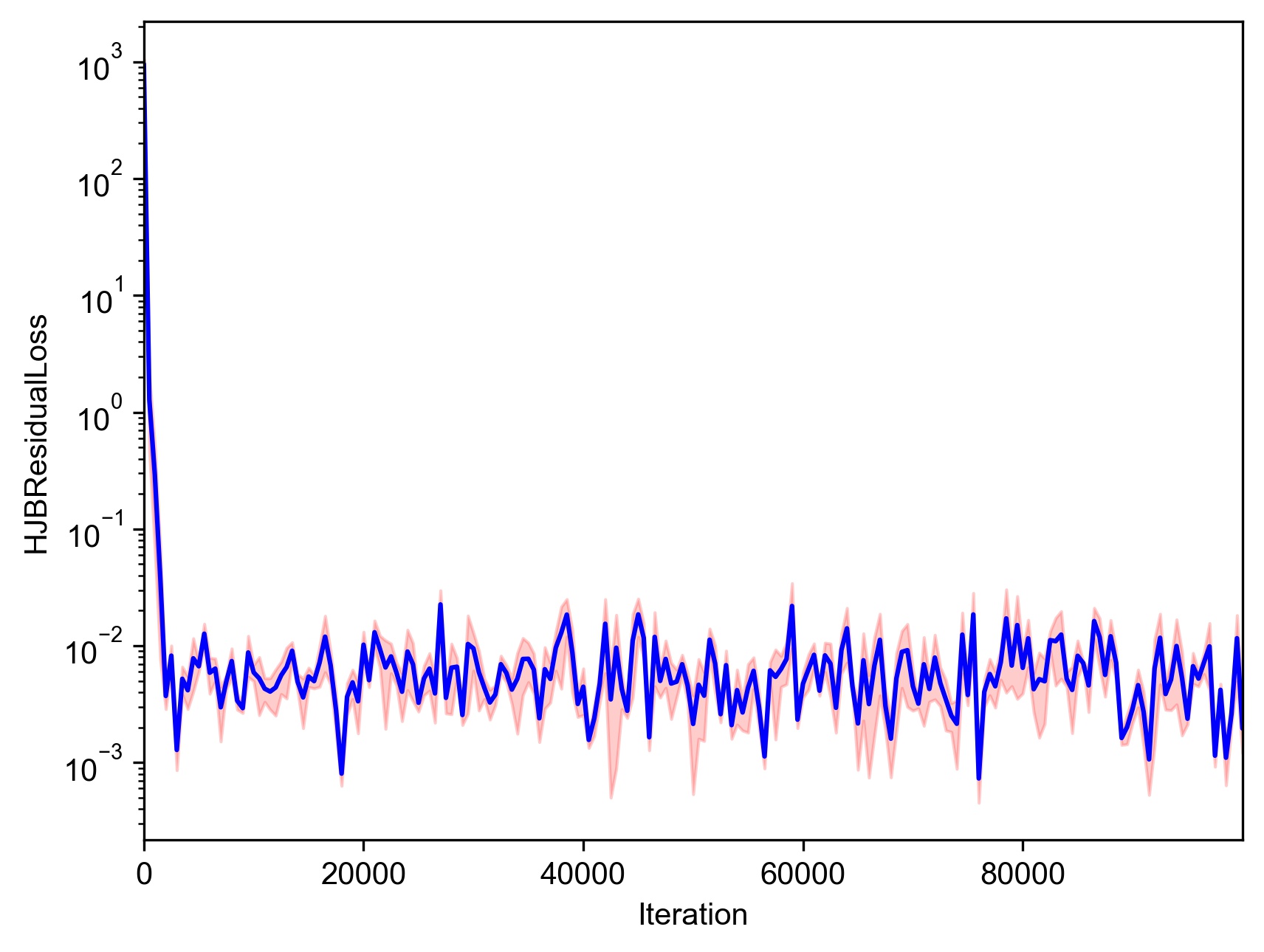}
 \caption{$B_g=B_d=256$.}
 \label{subfig:hjb-lgnum}
 \end{subfigure}
 \caption{HJB residual loss under different minibatch size.}
 \label{fig:hjb-num}
\end{figure}
\begin{figure}[!ht]
 \centering
 \begin{subfigure}[b]{0.4\columnwidth}
 \centering
 \includegraphics[width=\textwidth]{figures/ablation_study/FPResidualLoss.jpg}
 \caption{$B_g=B_d=32$.}
 \label{subfig:fp-smnum}
 \end{subfigure}
 \begin{subfigure}[b]{0.4\columnwidth}
 \centering
 \includegraphics[width=\textwidth]{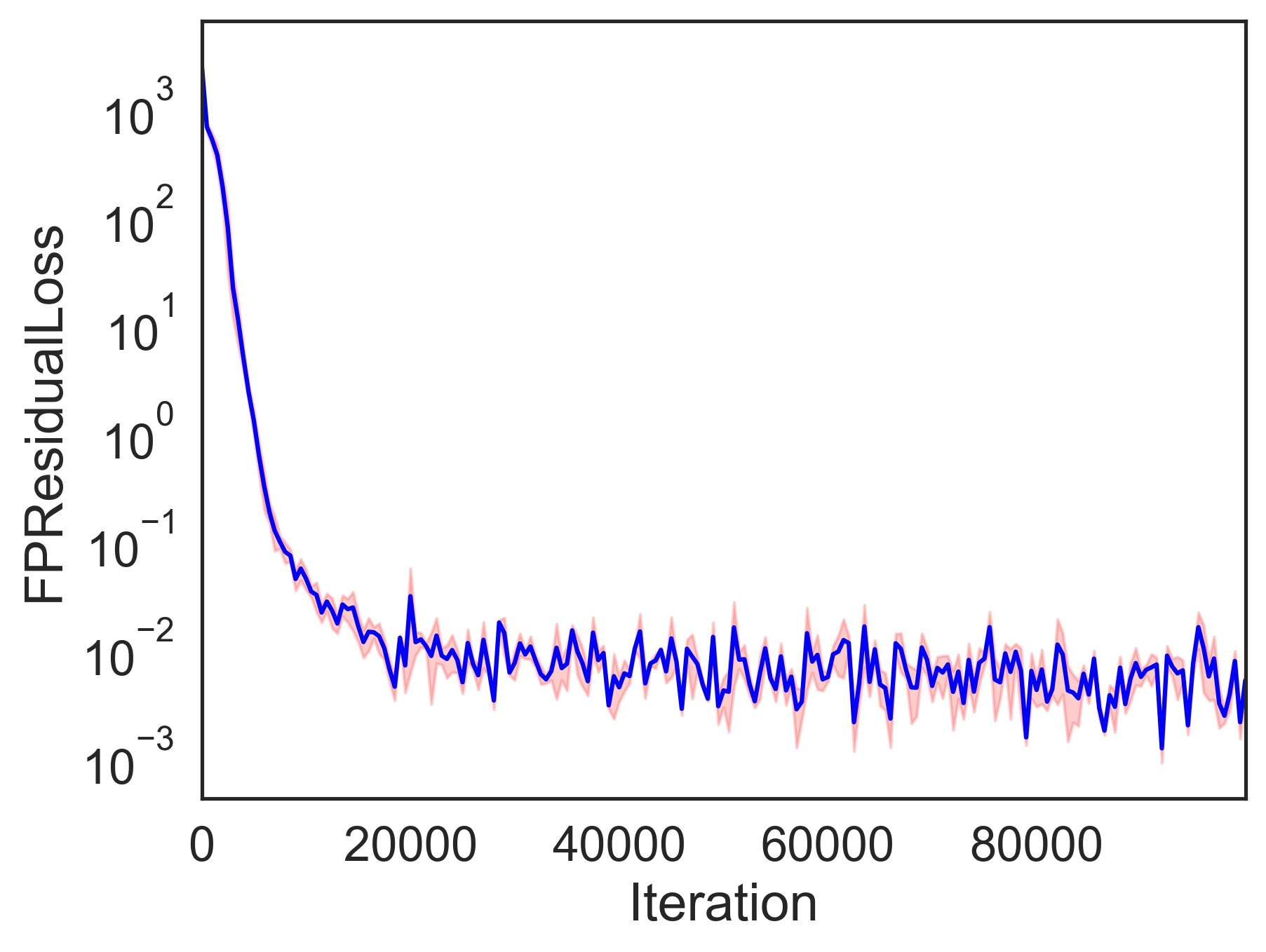}
 \caption{$B_g=B_d=256$.}
 \label{subfig:fp-lgnum}
 \end{subfigure}
 \caption{FP residual loss under different minibatch size.}
 \label{fig:fp-num}
\end{figure}

\paragraph{Sensitivity analysis.}
To understand possible contributing factors for the oscillation in the loss, especially for $u$, a sensitivity analysis on the learning rate of the generator $\alpha_g$ is conducted. In our test, the initial learning rate for the Adam Optimizer $\alpha_g^0$ takes the values of $5\times10^{-4}$, $1\times10^{-3}$, $1\times10^{-2}$ and $5\times10^{-2}$, respectively. 

In Figures \ref{subfig:1dim-rel-err-u} and \ref{subfig:1dim-rel-err-m}, the relative $l_2$ error on $u$ oscillates more than that of $m$. Similar phenomenon is observed in Figure \ref{fig:ab-lrg-rel-err}. In particular, in Figure \ref{subfig:ab-lrg-rel-err-u}, a drastic decrease in oscillation can be seen as the generator learning rate $\alpha_g$ decreases.

Turning to the differential residual losses, one can observe from Figures \ref{fig:hjb-lrg} and \ref{fig:fp-lrg} that decreasing $\alpha_g^0$ from $5\times10^{-2}$ to $5\times10^{-4}$
reduces the residual losses for both HJB and FP  with less oscillation.

Another parameter of interest is the number of samples in each minibatch, i.e., $B_g$ and $B_d$ in Algorithm \ref{alg:mfgan-dyn}. 
The cases of $B_g=B_d=32$, $64$, $128$, and $256$ are tested respectively.
Figure \ref{subfig:ab-num-rel-err-u} shows that the relative $l_2$ error of $u$ oscillates less as $B_g$ and $B_d$ increases from $32$ to $256$. 
Moreover, comparing the cases of $B_g=B_d=32$ and $B_g=B_d=256$, the residual losses for both HJB and FP decrease with less oscillation as minibatch size increases, as shown in Figures \ref{fig:hjb-num} and \ref{fig:fp-num}.

\begin{figure}[!ht]
 \centering
 \begin{subfigure}[b]{0.4\columnwidth}
 \centering
 \includegraphics[width=\textwidth]{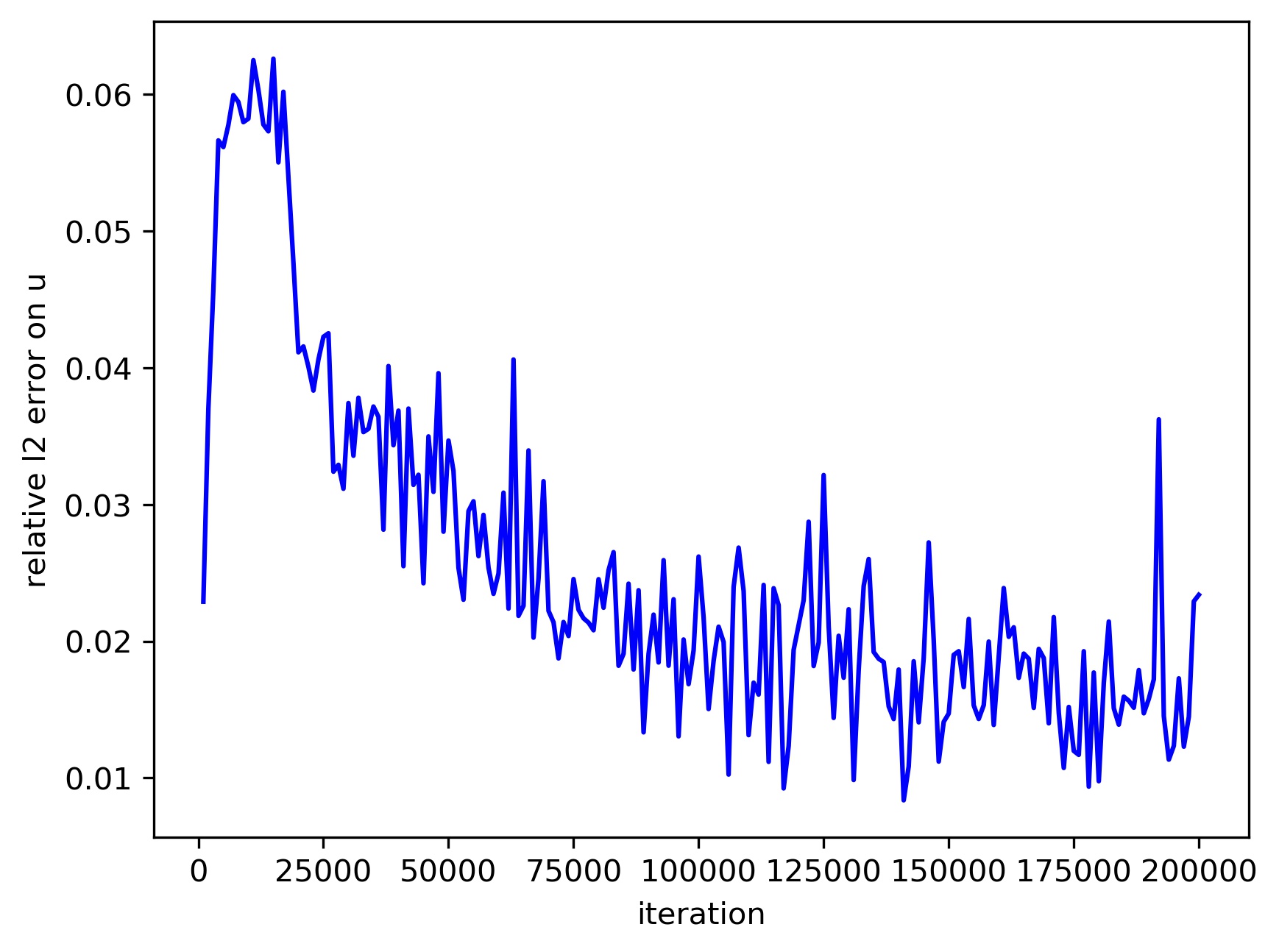}
 \caption{Relative $l_2$ error of $u$.}
 \label{subfig:4dim-rel-err-u}
 \end{subfigure}
 \begin{subfigure}[b]{0.4\columnwidth}
 \centering
 \includegraphics[width=\textwidth]{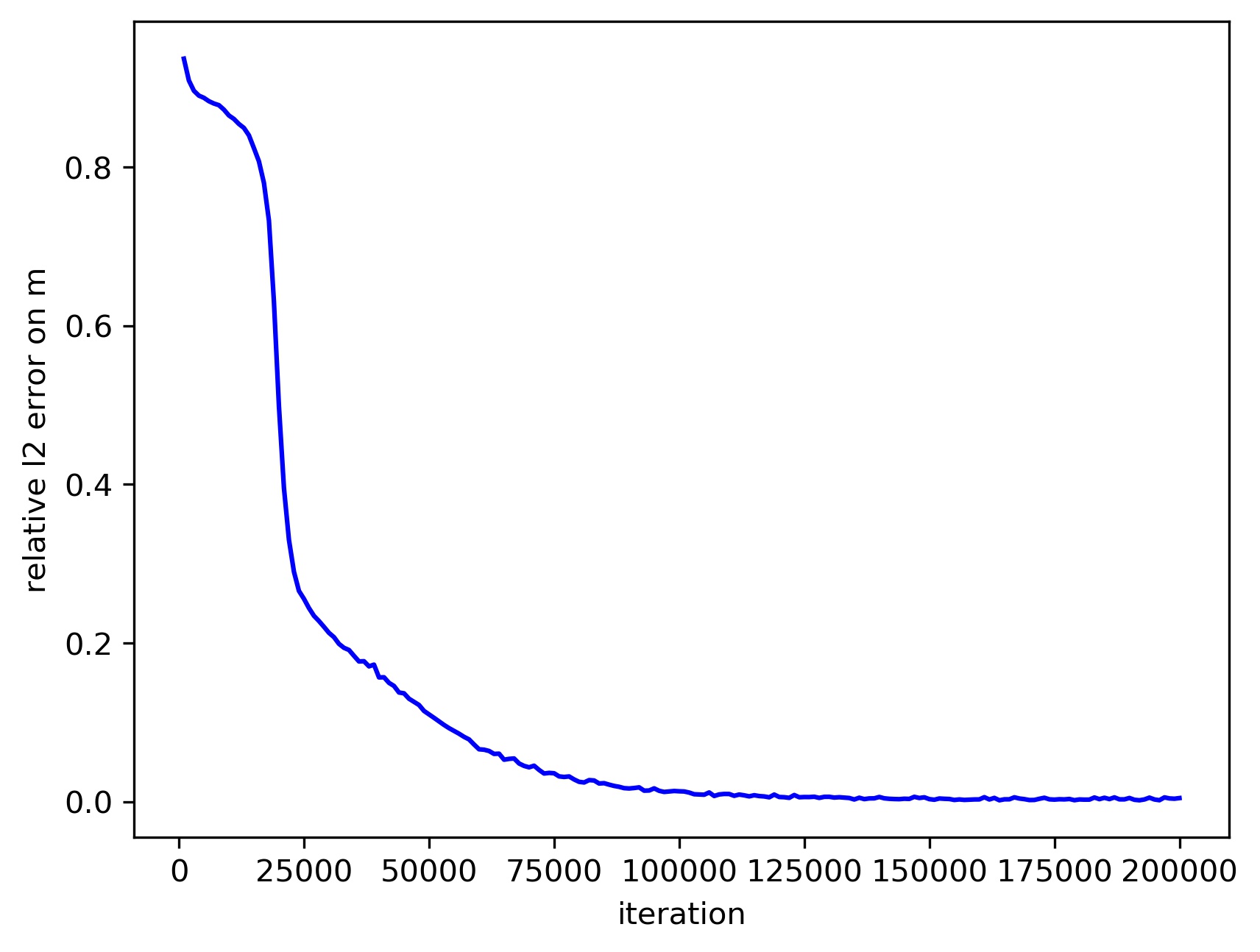}
 \caption{Relative $l_2$ error of $m$.}
 \label{subfig:4dim-rel-err-m}
 \end{subfigure}\\
 \begin{subfigure}[b]{0.4\columnwidth}
 \centering
 \includegraphics[width=\textwidth]{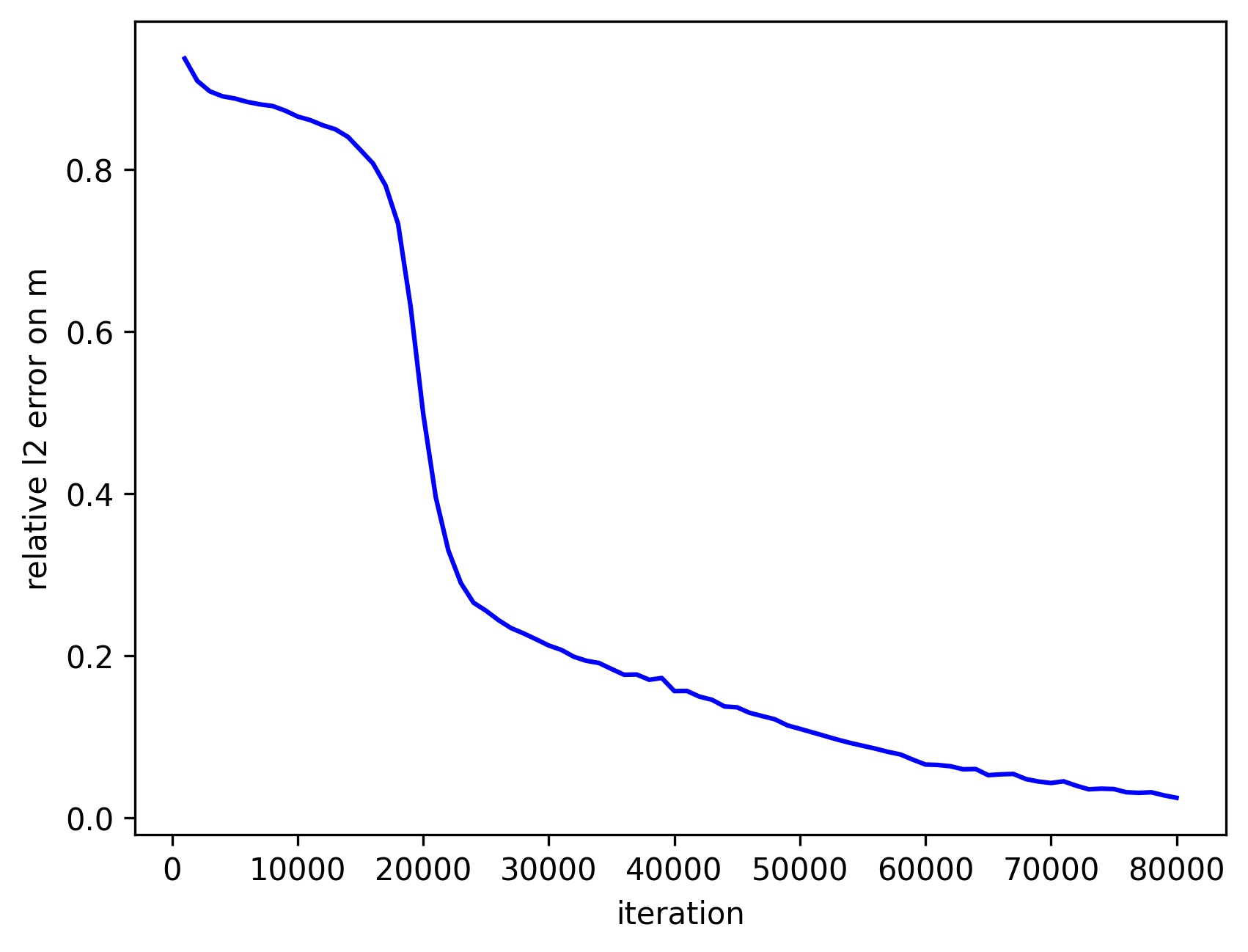}
 \caption{Error of $m$ first 80k iterations.}
 \label{subfig:4dim-rel-err-m-dec}
 \end{subfigure}
 \begin{subfigure}[b]{0.4\columnwidth}
 \centering
 \includegraphics[width=\textwidth]{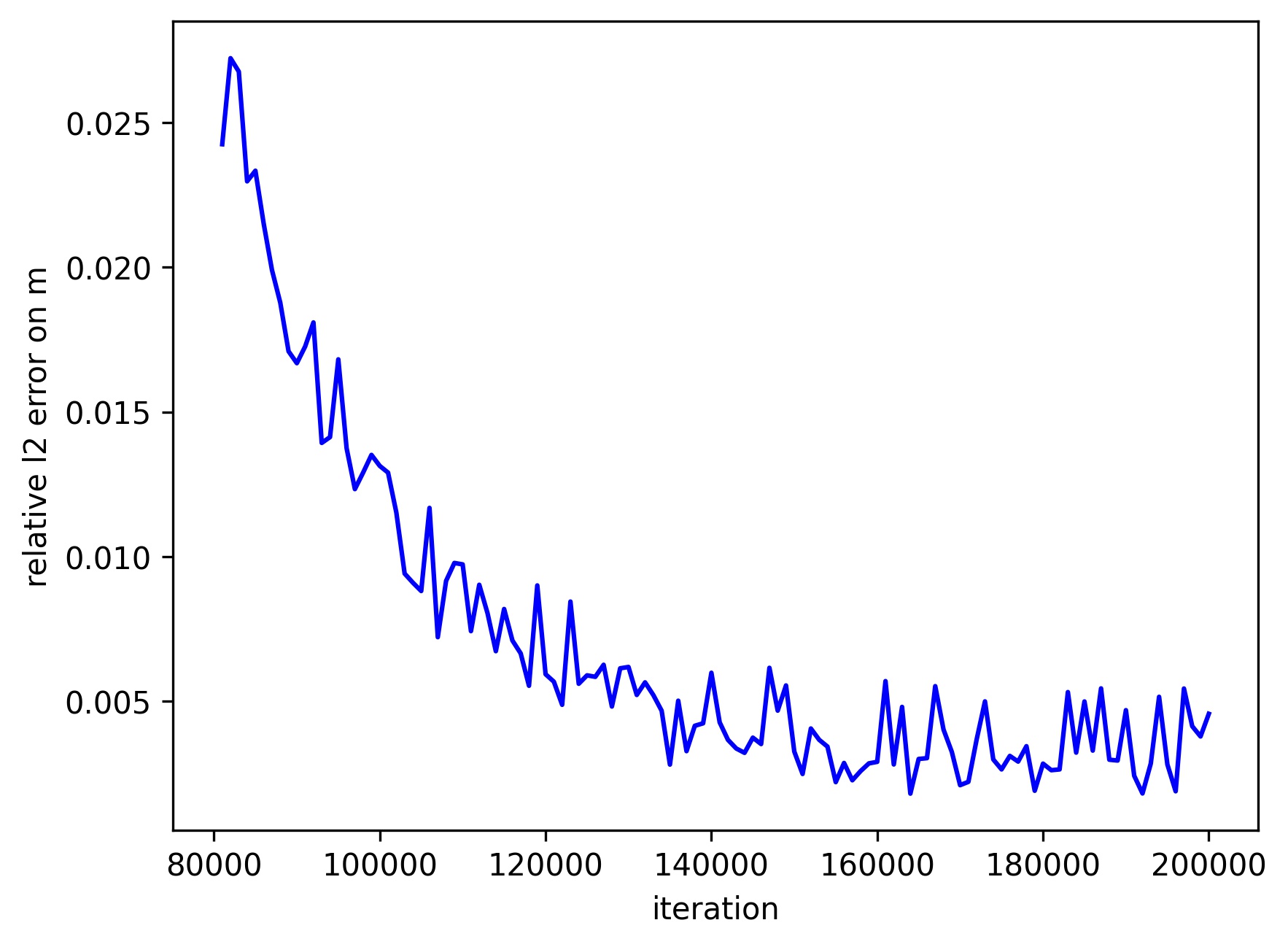}
 \caption{Error of $m$ after 80k iterations.}
 \label{subfig:4dim-rel-err-m-flt}
 \end{subfigure}
 \caption{Input of dimension 4.}
 \label{fig:4dim}
\end{figure}

\paragraph{Experiments in dimension 4. }
We also conduct experiments in dimension $4$. In this case, a different initial learning rate $5\times10^{-4}$ is adopted for the generator, and the number of outer loops is increased to $K=2\times 10^5$.

Figure \ref{fig:4dim} shows the relative $l_2$ errors between learnt and true value and density functions.
Within $2\times 10^5$ iterations, the relative $l_2$ error of $u$ decreases below $2\times10^{-2}$ and that of $m$ decreases to $4\times10^{-3}$. In fact, the training result stabilizes after around $8\times10^4$ iterations.

Finally, it is worth noting that similar experiment for dimension $4$ has been conducted in \cite{CarmonaLauriere_DL} with a simple fully connected architecture; see their Test Case 4. In comparison, their algorithm needs a larger number of iterations ($10^6$ of iterations vs our $8\times10^4$), in order to achieve the same level of accuracy. The computational cost of one SGD iteration with fully connected architecture is lower than with the DGM architecture that we use but making a rigorous comparison in terms of computational time is quite challenging due to the details of the implementation. However, needing a smaller number iterations (and hence a smaller number of training samples) is an objective advantage of the results we obtained here.

\subsubsection{Ergodic MFGs without explicit solutions}
We next test Algorithm \ref{alg:mfgan-dyn} for a class of MFGs for which no explicit solutions are available. Take ergodic MFGs \eqref{eq:cost} on $\mathbb{R}^d$, where
\[L(x,\alpha) = \frac{1}{2} |\alpha|^2 + \tilde f(x),
	\quad
	f(x, m) = m^2+1,\quad \epsilon=\frac{1}{2},\]
with
\[\tilde f(x)= \frac{1}{2}\sum_{i=1}^d\left[\sin(2\pi x_i)+\cos(2\pi x_i)\right].\]
The solution to this MFG  can be characterized  by the coupled PDE system \eqref{eq:hjb-fp}. 

In the experiment, we will show that  Algorithm \ref{alg:mfgan-dyn} not only can be adopted to solve this MFG problem in a multi-dimensional case, for instance $d=8$, but also  can capture the periodicity with appropriate choices of error functions, even without the {\it a priori} transformation of input given by \eqref{eq:input-fourier}.

\paragraph{Experiment setup.}
Similar as the previous experiments, the DGM model is adopted to parametrize both the value function $u$ and the density function $m$ with parameters $\theta$ and $\omega$, respectively and a maximum entropy probability distribution is adopted when modelling the density function $m$. There are a few differences compared with the previous experiment setup as follows.
\begin{itemize}
    \item In the DGM network to parametrize the density function, the activation function of the output layer of $m_\omega$ is set specifically to be exponential. Instead of normalizing $m_{\omega}$ beforehand, we utilize the following normalization condition for the density function
    \[\hat L_{MF,norm}=\left[\frac{\sum_{i=1}^{B_d}m_\omega(x_i)}{B_d}-1\right]^2,\]
    with $\beta_{MF}=10^3$.
    \item Both DGM networks of $u_\theta$ and $m_\omega$ contain $3$ hidden layers with $40$ nodes, with hyperbolic tangent as activation functions.
    \item Within each iteration of training, the SGD steps for updating $\theta$ and $\omega$ are now set to be $N_{\theta}=N_{\omega}=10$. Initial learning rates for both neural networks are chosen to be $3\times10^{-3}$. The minibatch sizes are $B_g=B_d=1000$.  The total number of iterations is $K=10^5$.
    \item To capture the periodicity condition for both $u_\theta$ and $m_\omega$, additional penalty terms are added to both $\hat L_{MF}$ and $\hat L_{Val}$ such that
    \[\hat L_{Val,\,per}=\sum_{i=1}^d\sum_{(z^1,z^2)\in \mathcal{B}_i}[u_\theta(z^1)-u_\theta(z^2)]^2, \ \ \hat L_{MF,\,per}=\sum_{i=1}^d\sum_{(z^1,z^2)\in \mathcal{B}_i}[m_\omega(z^1)-m_\omega(z^2)]^2,\]
    with $\mathcal{B}_i=\{((x_1,\dots,x_d),(x_1,\dots,x_{i-1},1-x_i, x_{i+1},\dots, x_d)) :x_j\in\{0,1\}\ \ j=1,\dots,d\}$. The penalty coefficients are set to be 10 for both $\hat L_{Val,\,per}$ and $\hat L_{MF,\,per}$.
\end{itemize}

\begin{figure}
    \centering
    \begin{subfigure}[b]{0.45\columnwidth}
        \centering
        \includegraphics[width=\textwidth]{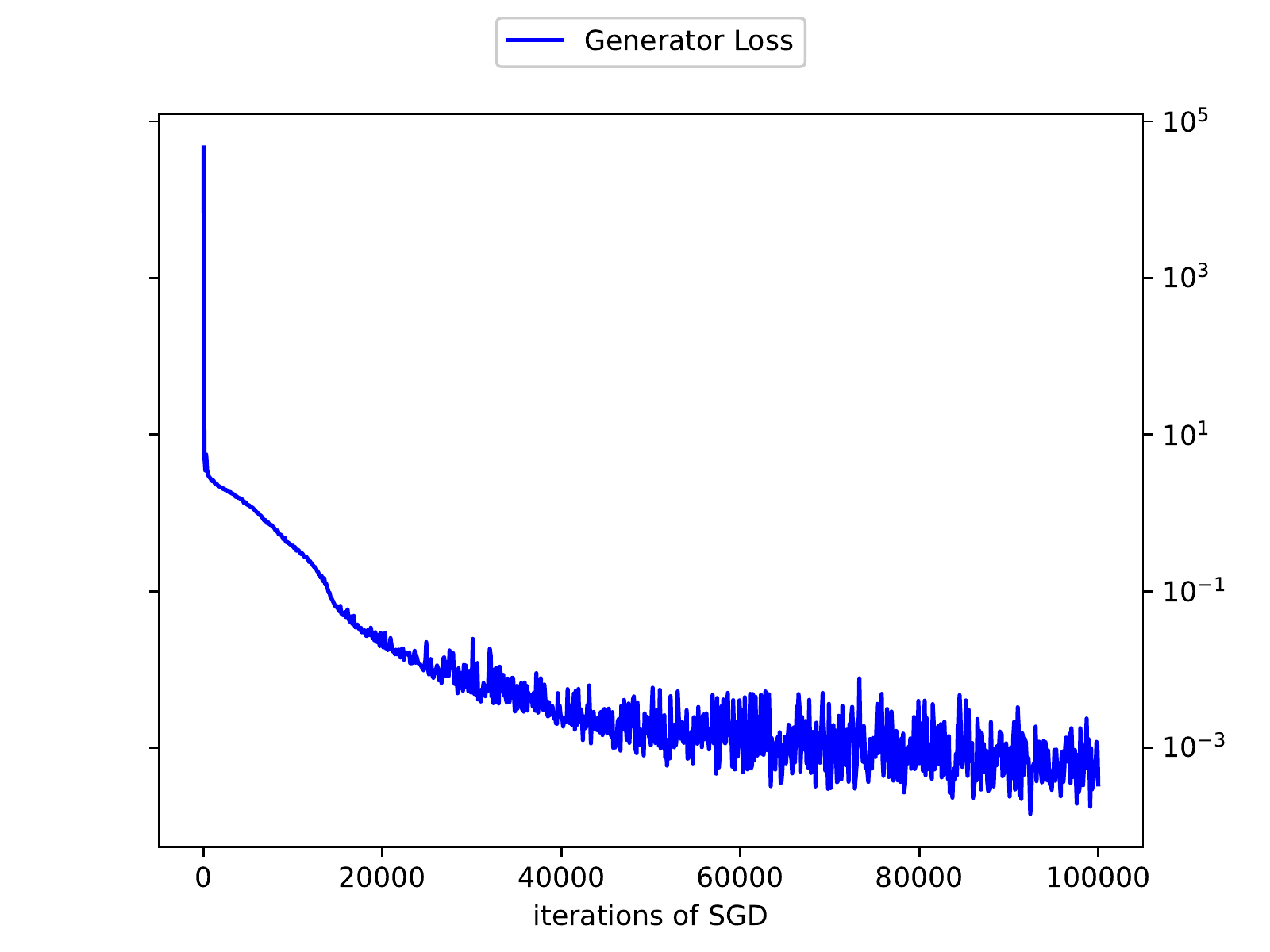}
        \caption{Generator loss}
        \label{subfig:8d-gen-loss}
    \end{subfigure}
    \begin{subfigure}[b]{0.45\columnwidth}
        \centering
        \includegraphics[width=\textwidth]{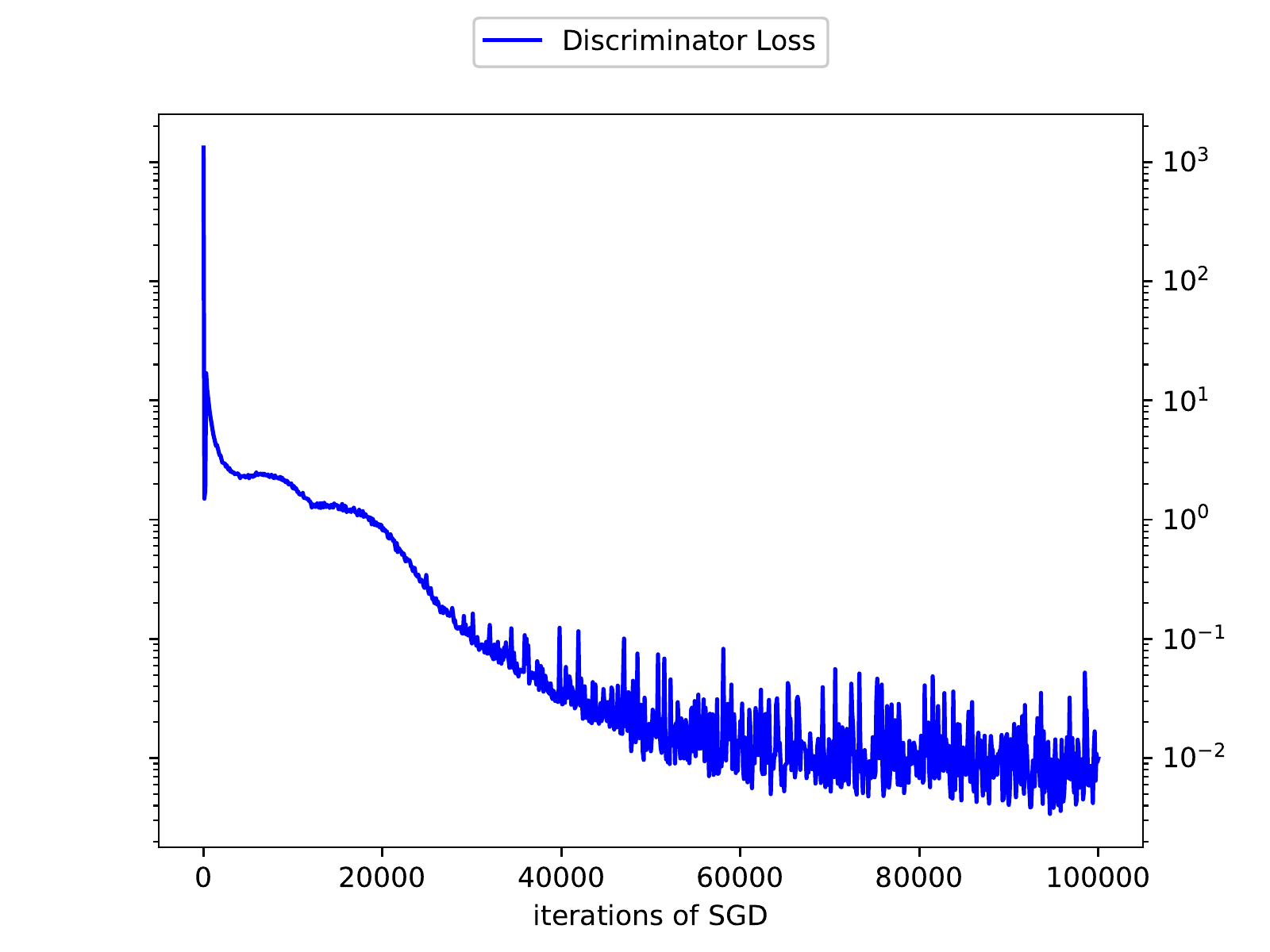}
        \caption{Discriminator loss}
        \label{subfig:8d-disc-loss}
    \end{subfigure}
    \caption{Training losses}
    \label{fig:8dim}
\end{figure}
The results are summarized in Figure \ref{fig:8dim}. After $10^5$ iterations of training, the generator loss $\hat L_{Val}$ drops below $10^{-3}$ and the discriminator loss $\hat L_{MF}$ reaches $10^{-2}$.
\begin{figure}[!ht]
    \centering
    \begin{subfigure}[b]{0.45\columnwidth}
        \centering
        \includegraphics[width=\textwidth]{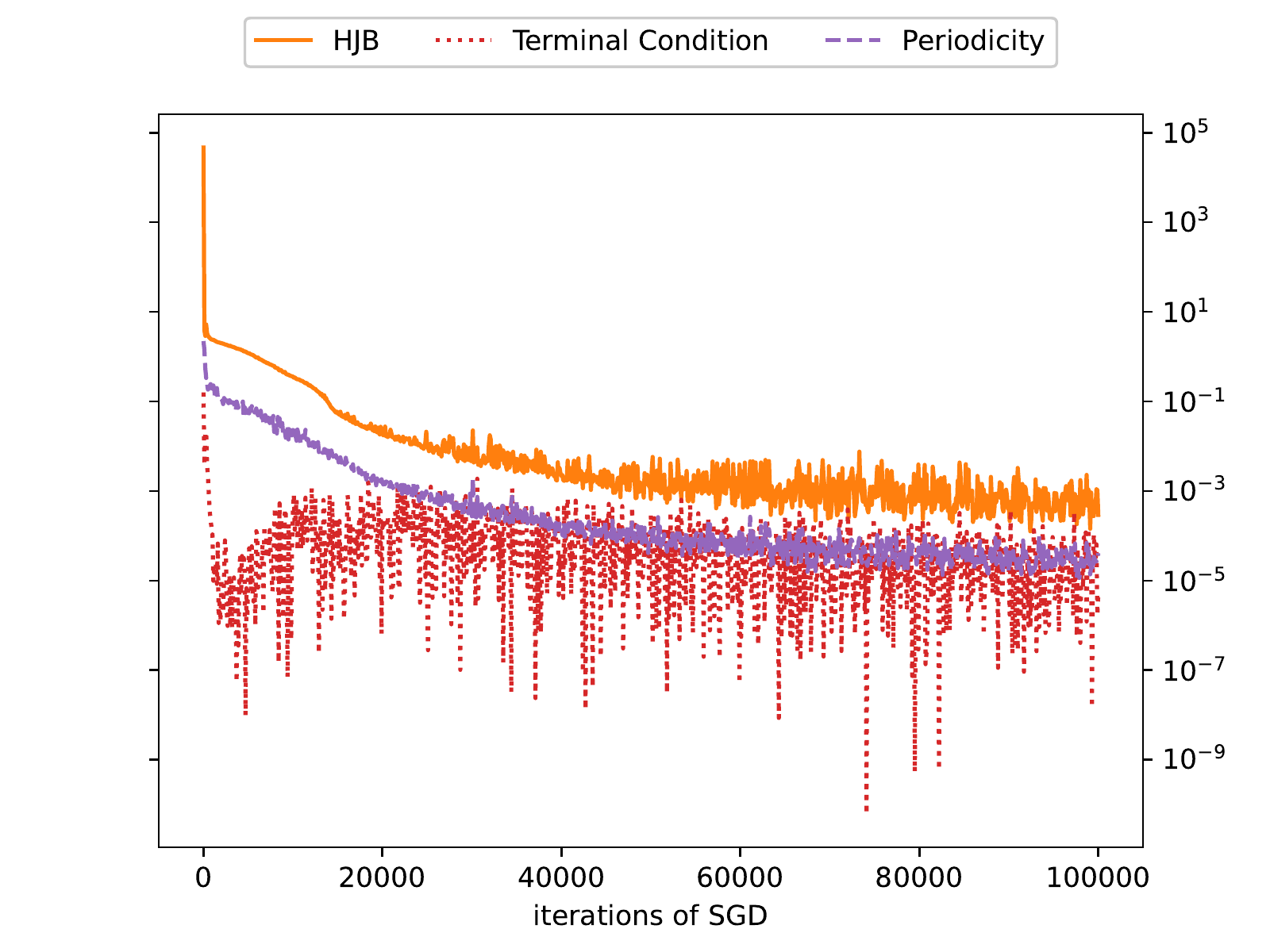}
        \caption{Generator detailed losses}
        \label{subfig:8d-gen-loss-detailed}
    \end{subfigure}
    \begin{subfigure}[b]{0.45\columnwidth}
        \centering
        \includegraphics[width=\textwidth]{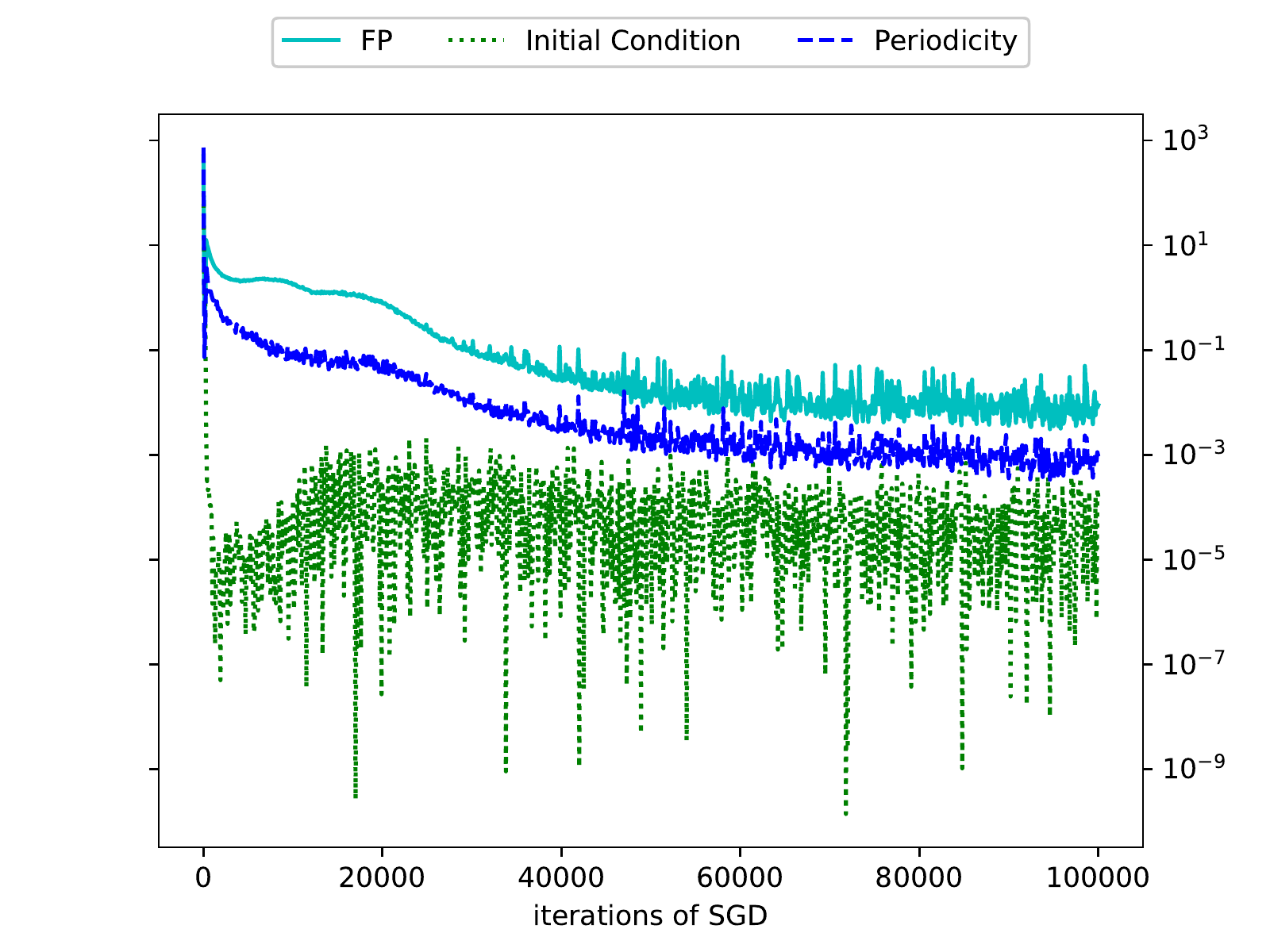}
        \caption{Discriminator detailed losses}
        \label{subfig:8d-disc-loss-detailed}
    \end{subfigure}
    \caption{Detailed training losses}
    \label{fig:8dim-detail}
\end{figure}
The detailed losses are shown in Figure \ref{fig:8dim-detail}. Note that not only the coupled PDE system \eqref{eq:hjb-fp} is numerically solved accurately, the periodicity condition for both the value function and the density function has been well satisfied.
\bibliography{gan-mfg}
\bibliographystyle{plain}

\end{document}